\DeclareMathOperator{\var}{var}
\definecolor{yaleblue}{RGB}{0,53,107}
\definecolor{webbrown}{rgb}{.6,0,0}%
\crefname{appsec}{appendix}{appendices}
\crefname{appsubsec}{appendix}{appendices}
\crefname{assumption}{assumption}{assumptions}
\newcolumntype{Y}{>{\centering\arraybackslash}X}
\newtheoremstyle{myplain}
  {6pt}   
  {6pt}   
  {\itshape} 
  {}      
  {\bfseries\color{yaleblue}} 
  {.}     
  {0.5em} 
  {\thmname{#1}\thmnumber{ #2}\thmnote{ (\textit{#3})}}
\newtheoremstyle{myremark}
  {6pt}
  {6pt}
  {}
  {}
  {\bfseries\color{yaleblue}}
  {.}
  {0.5em}
  {\thmname{#1}\thmnumber{ #2}\thmnote{ (\textit{#3})}}
\theoremstyle{myplain}
\newtheorem{theorem}{Theorem}[section]
\newtheorem{lemma}{Lemma}
\newtheorem{proposition}{Proposition}
\newtheorem{assumption}{Assumption}
\newtheorem{corollary}{Corollary}
\newtheorem{defN}{Definition}
\theoremstyle{myremark}
\newtheorem{remark}{Remark}
\newmdenv[
  topline=false,
  bottomline=false,
  rightline=false,
  leftline=true,
  linecolor=blue!70!black,
  linewidth=1.2pt,
  innerleftmargin=6pt,
  skipabove=\topsep,
  skipbelow=\topsep
]{thmframe}
\newmdenv[
  topline=false,
  bottomline=false,
  rightline=false,
  leftline=true,
  linecolor=blue!50!black,
  linewidth=0.9pt,
  innerleftmargin=6pt,
  skipabove=\topsep,
  skipbelow=\topsep
]{remarkframe}
\renewenvironment{theorem}{%
  \begin{thmframe}\begin{oldtheorem}%
}{%
  \end{oldtheorem}\end{thmframe}%
}
\renewenvironment{lemma}{%
  \begin{thmframe}\begin{oldlemma}%
}{%
  \end{oldlemma}\end{thmframe}%
}
\renewenvironment{proposition}{%
  \begin{thmframe}\begin{oldproposition}%
}{%
  \end{oldproposition}\end{thmframe}%
}
\renewenvironment{assumption}{%
  \begin{thmframe}\begin{oldassumption}%
}{%
  \end{oldassumption}\end{thmframe}%
}
\renewenvironment{corollary}{%
  \begin{thmframe}\begin{oldcorollary}%
}{%
  \end{oldcorollary}\end{thmframe}%
}
\renewenvironment{remark}{%
  \begin{remarkframe}\begin{oldremark}%
}{%
  \end{oldremark}\end{remarkframe}%
}
\begin{document}

\title{Causal Inference in Financial Event Studies\thanks{\noindent Contact: \href{mailto:paul.goldsmith-pinkham@yale.edu}{paul.goldsmith-pinkham@yale.edu} We thank Nick Barberis, Stefano Giglio and Will Goetzmann for helpful discussions, and audiences at the NBER Summer Institute Forecasting \& Empirical Methods session, SEA, Dallas Fed, Kellogg Finance department, and SMU Statistics department.}}
\author{Paul Goldsmith-Pinkham\\Yale University \& NBER \and Tianshu Lyu\\Yale University }%
\date{\today}

  \pagenumbering{Alph} 
\begin{titlepage}
  \maketitle
  \thispagestyle{empty}
\begin{adjustwidth*}{0.1cm}{0.1cm}
\begin{abstract} 
    Financial event studies, ubiquitous in finance research, typically use linear factor models with known factors to estimate abnormal returns and identify causal effects of information events. This paper demonstrates that when factor models are misspecified—an almost certain reality—traditional event study estimators produce inconsistent estimates of treatment effects. The bias is particularly severe during volatile periods, over long horizons, and when event timing correlates with market conditions. We derive precise conditions for identification and expressions for asymptotic bias. As an alternative, we propose synthetic control methods that construct replicating portfolios from control securities without imposing specific factor structures. Revisiting four empirical applications, we show that some established findings may reflect model misspecification rather than true treatment effects. While traditional methods remain reliable for short-horizon studies with random event timing, our results suggest caution when interpreting long-horizon or volatile-period event studies and highlight the importance of quasi-experimental designs when available.
\end{abstract}
\end{adjustwidth*}
\end{titlepage}
\pagenumbering{arabic}
\setstretch{1.25}


\section{Introduction}
Financial economists were practicing causal inference well before the credibility revolution \parencite{angrist2010credibility}. By examining how asset prices respond to information events---such as merger announcements, earnings releases, or regulatory changes---financial event studies compare the returns of treated assets to benchmark comparison asset returns.  The approach remains central: between 2010 and 2025, 305 articles in the Journal of Finance and the Review of Financial Studies reference event‐study methods (\Cref{fig:jf_rfs}). 

\begin{figure}[bh]
    \centering
    \caption{\textbf{Prevalence of financial event studies in finance journals:} This figure plots the share of articles by year that mention the word ``cumulative abnormal returns'' OR ``announcement returns'' in the Journal of Finance or Review of Financial Studies. Source: \url{https://paulgp.com/econlit-pipeline/search.html}}
    \label{fig:jf_rfs}
       \includegraphics[width=\linewidth]{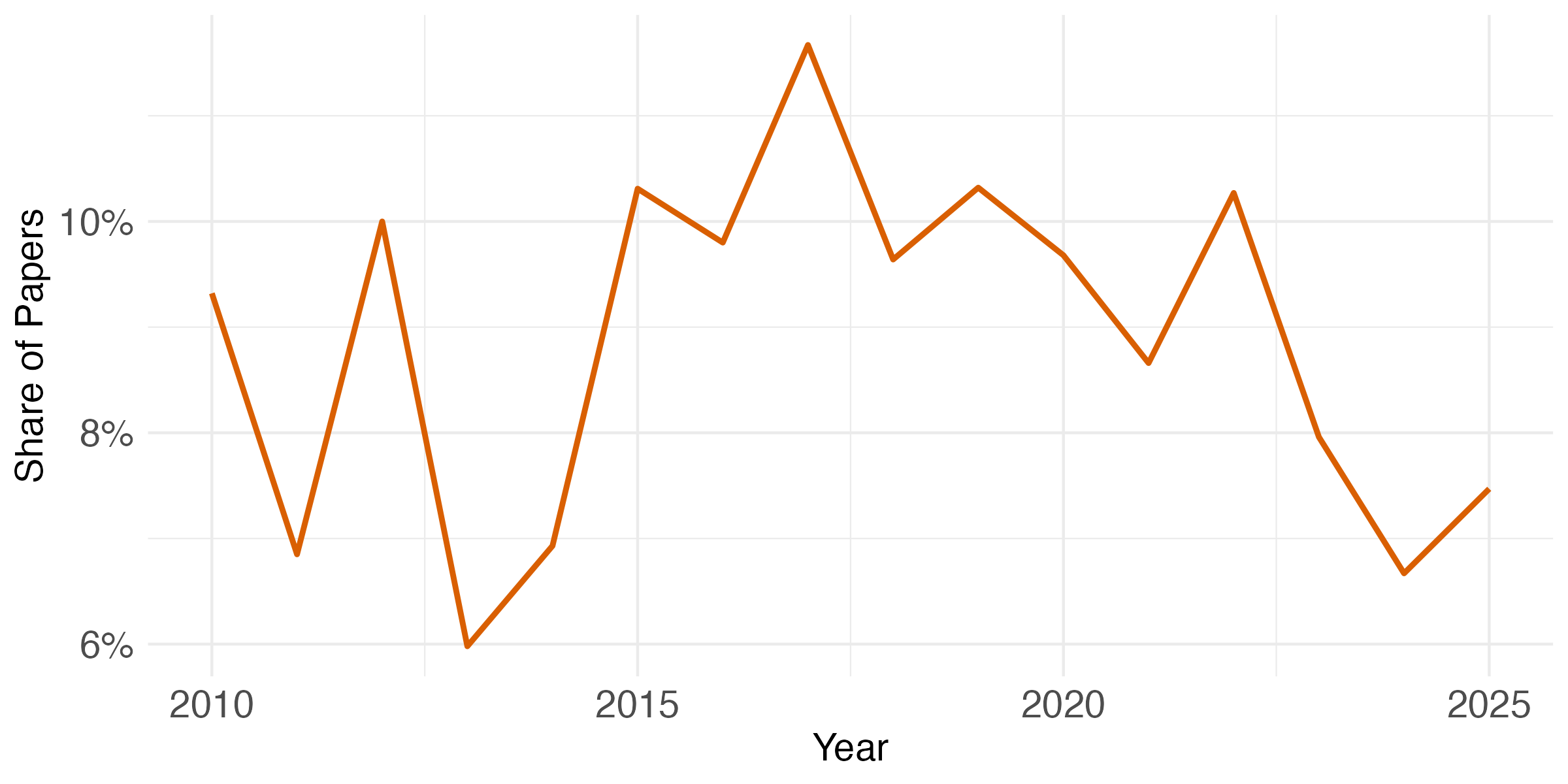}
\end{figure}

While financial event studies target causal effects as their estimands, the suite of estimators used in financial event studies are antiquated relative to the many tools available. The textbook approach, starting as early as \textcite{fama1969adjustment} 
and canonized in the \textcite{campbell1997econometrics} textbook, relies on linear factor models with known factors to construct counterfactual returns, i.e. what a security's return would have been absent the event. Researchers typically estimate a security's exposure to market factors during a pre-event window, then use these estimated loadings to predict what returns during the event window. The difference between actual and predicted returns constitutes the abnormal return.

This paper demonstrates that this standard approach faces a fundamental identification challenge. We show analytically that when the factor model is misspecified---which is almost certainly the case given the ongoing debates about the appropriate asset pricing model---abnormal return estimators are generally inconsistent estimators for causal effects. The problem is particularly severe in three empirically relevant scenarios. First, when events occur during periods of extreme market volatility, even small misspecification in factor loadings gets amplified by large factor realizations, potentially generating economically significant bias. Second, in long-horizon event studies that examine returns over months or years, misspecification bias accumulates over time, making the resulting estimates potentially more reflective of model error than treatment effects. Third, when event timing coincides with particular market conditions—for instance, if mergers cluster during market downturns—the standard estimators conflate selection effects with treatment effects.

We provide precise conditions under which traditional event study methods identify causal effects. Identification of the average treatment effect on the treated requires either correct specification of the factor model (unlikely given decades of asset pricing research showing the difficulty of this task), or random assignment of treatment across securities. When these conditions fail, we derive expressions for the asymptotic bias that provide guidance on when concerns should be most acute. 

Our results stand in contrast to folk wisdom that the structure of the factor model does not have significant impacts on the size of the estimated effects. For example, in footnote 5, \textcite{shleifer1986demand} states ``The [index inclusion] results were not materially different when returns were not corrected for market movements.'' We show that this irrelevance is due to two key features: (1) random timing of many events over time (in the \textcite{shleifer1986demand} case, index inclusions) and (2) very short-run estimates such that the treatment effect dominates any omitted risk premium. If these two cases do not hold, this irrelevance will disappear. 

We contrast three types of estimators that can be used for financial event studies, and compare their properties: (1) classic abnormal return estimators, based on specified factors, (2) difference-in-mean estimators, which construct control groups through decisions of the econometrician, and (3) synthetic estimators \parencite{abadie2003economic,abadie2010synthetic}, which use historical prices from control assets to construct either a replicating portfolio (synthetic control) or to construct a set of factors using PCA \parencite{xu2017generalized}.  The key insight is that rather than imposing a specific factor structure ex ante, the synthetic methods construct a portfolio of control securities that best match the pre-event return path of treated securities. If such a replicating portfolio exists, it should provide valid counterfactual returns in the post-event period without requiring correct specification of the underlying factor model. 

Our theoretical results hinge on the assumption that the expected return for a cohort of treated stocks follows an unknown time-invariant linear factor model. This assumption is not innocuous, and likely not true for all time periods. But, it is also a weaker assumption than the traditional abnormal return estimators. Any approach that uses a model to infer the counterfactual outcomes for the treated stocks will require some kind of model stability assumption (without additional structure like \textcite{kelly2019characteristics}). We view it as valuable future work to see if other more robust asset pricing models can be used to generate counterfactual returns, such as \textcite{giglio2025test} and \textcite{kelly2019characteristics}.

One key benefit of focusing carefully on the estimand of interest is that we are able to show that buy-and-hold abnormal return estimates are particularly challenging to estimate because they require the matching portfolio to not just match on expected returns, but also on volatility. If the control group's returns have different variance, then the differing volatility drag will lead to very different results. To make this concrete: imagine that there is \emph{no} treatment effect, but a diversified portfolio is used as a control group for a stock, both with equal expected returns. The lower variance for the diversified portfolio will lead to a \emph{negative} treatment effect from a buy-and-hold perspective, despite no actual treatment effect. This implies that doing buy-and-hold abnormal returns with an index can be seriously flawed.

We revisit four empirical settings that span the range of typical applications. First, we reexamine the \textcite{acemoglu2016value} study of political connections during the 2008 financial crisis, where the Treasury Secretary announcement coincided with extreme market volatility—daily returns exceeded 6\% on multiple event days. The original estimates using simple averaging suggest economically large effects of political connections. Even abnormal return models using the Fama-French 3 factor model suggest economically meaningful effects. However, the estimates disappear when using our proposed synthetic methods, suggesting that model misspecification with a single event can create spurious results when events coincide with volatile market conditions.

Next, we analyze S\&P 500 index inclusions, and show that since the index inclusion events appear random across time, the effect of short-run model misspecification is non-existent, echoing the folk wisdom above. However, we show that the substantial pre-announcement drift, often pointed to as a source of possible index inclusion front-running, disappears once we properly account for the unobserved factor exposures of included firms. This finding suggests that what appears to be anticipation or momentum may actually reflect model misspecification. 

Third, we examine the effect of acquistions in merger deals on acquiring firms with some studies finding large negative abnormal returns over several years.\parencite{loughran1997long, rau1998glamour}  We demonstrate that these long-run patterns are highly sensitive to model specification, consistent with our theoretical prediction that misspecification bias accumulates over longer horizons.

Our last empirical result applies a version of \textcite{lalonde1986evaluating} to our analysis by using quasi-experimental variation to provide a benchmark for our model-based approaches. The treatment and control groups for the baseline are found in close merger contests where multiple firms bid for the same target. Following \textcite{malmendier2018winning}, contest losers provide a natural counterfactual for winners since they are ex ante similar firms competing for identical targets. The results are not supportive of abnormal return models at all, but only weakly support synthetic methods. These results suggest that for long-run analyses, it is far better to construct counterfactuals based on quasi-experimental variation than using model-based approaches.

These empirical findings have important implications for the interpretation of the vast event study literature in finance. Many influential results---particularly those involving long horizons, volatile periods, or systematic event timing---may reflect factor model misspecification rather than true treatment effects. However, we emphasize that our results do not invalidate the entire enterprise. For short-horizon studies with plausibly random event timing, traditional methods remain reliable and our empirical work confirms they produce similar estimates to more sophisticated approaches. The key insight is recognizing when standard methods are likely to fail and having appropriate alternatives available.




Our work connects several distinct literatures. Methodologically, we build on the econometrics of event studies in finance \parencite{mackinlay1997event, kothari2007econometrics} while incorporating insights from the modern causal inference literature \parencite{imbens2015causal,abadie2021introduction}. We also contribute to the older debate about long-run event studies \parencite{mitchell2000managerial, barber1997detecting} by providing a formal framework for understanding when and why these studies are problematic.


\section{Estimands and estimators in financial event studies}
\label{sec:Identificaton}

This section formalizes the setup of financial events on stock market returns in the language of potential outcomes. We begin by introducing the basic notation (Section~\ref{sec:notation}), defining potential returns and treatment indicators for each security over time. We then specify the causal estimands of interest, clarifying what it means to identify a treatment effect in an event study context. Finally, we discuss how these causal quantities relate to traditional event study methods based on “abnormal returns” and factor model adjustments.

\subsection{Setup and notation}
\label{sec:notation}

We study the causal effects of corporate events on security returns using a potential outcomes framework. Consider a panel of $N$ securities indexed by $i = 1, 2, \ldots, N$ observed over $T$ time periods indexed by $t = 1, 2, \ldots, T$.

\subsubsection{Event Timing and Treatment Status}

For each security $i$, let $T_i$ denote the time when an event occurs:
\begin{equation}
T_i = \begin{cases}
s & \text{if security } i \text{ experiences the event at time } s \\
\infty & \text{if security } i \text{ never experiences the event}
\end{cases}
\end{equation}

We denote the set of event times as $\mathcal{S} \subseteq \{1, \ldots, T\}$ and the set of never-treated (control) securities as $\mathcal{C} = \{i: T_i = \infty\}$. Following standard practice in event studies, we assume events are irreversible---once an event occurs (e.g., a merger announcement or earnings release), it cannot be undone.

\subsubsection{Potential Outcomes Framework}

Now, we define the potential outcomes framework for our returns. Let \(R_{i,t}(s)\) be the potential return for security \(i\) at time \(t\) if it has the event occur in period $s$, and \(R_{i,t}(\infty)\) the potential return in the absence of any event. Because a security cannot be both treated and untreated, we only observe one of the potential returns for each \((i,t)\):

\begin{equation}
    R_{i,t} = R_{i,t}(\infty) + \sum_{s \in \mathcal{S}}(Y_{i,t}(s) - Y_{i,t}(\infty))1(T_{i} = s).
\end{equation}

\subsubsection{Treatment Effects}

We postulate that financial event studies are focused on identifying the difference between the \emph{realized} returns for a treated firm ($R_{it}(s)$) versus the returns in the \emph{absence} of the event. We define the difference in returns due to the event in period $s$ for firm $i$ in period $t$ as the \emph{individual treatment} or equivalently, the \emph{abnormal firm return}:
\begin{defN}[Individual treatment effect /  abnormal firm return]
  
\begin{equation}
    \tau_{i}(s,t) = \underbrace{R_{i,t}(s)}_{ \substack{\text{observed for}\\ 
               \text{ treated firm}}}  - \underbrace{R_{i,t}(\infty)}_{\substack{\text{unobserved}\\ 
               \text{counterfactual}}}.
\end{equation}
\end{defN}

For a firm that has the event occur in period $s$, $R_{it}(s)$ is observed, and hence is identified. But, $R_{it}(\infty)$ is not. Indeed, in asset pricing, the challenge of modeling the exact return for an individual asset is viewed as an near-impossible task, even with a structural model. Instead, a large number of asset pricing papers focus on the challenge of estimating the \emph{average} return for firms given a set of characteristics and/or risk factors \parencite[E.g.][]{chamberlain1983arbitrage,CONNOR198413,fama1993common, ross2013arbitrage, kelly2019characteristics, bryzgalova2025forest}. 

This focus on expected returns makes causal inference and asset pricing models happy bedfellows. The inability to known the exact counterfactual return is known as the \emph{fundamental problem of causal inference} and leads to a focus on other alternative estimators, often constructing \emph{average} counterfactual returns for a group of treated units.

A significant body of empirical work and legal scholarship focuses on identifying the effect of events on single firms' valuations, since these valuations are used in litigation to estimate damages \parencite{baker2020machine}. But our view is that in academic research studying financial event studies, a much more natural estimand to target is the \emph{average} treatment effect on the treated (ATT), using many treated firms to estimate an overall average effect, rather than the effect on a single firm. We view the estimated abnormal returns for single firm events as case studies of a much more stable design that focuses on the \emph{average} effect.

\begin{defN}[Cohort-Period Average Treatment Effect on the Treated (ATT)]
    Let the average treatment effect on returns in period $t$ for firms treated in period $s$ be
\begin{equation}
\tau(s, t)^{ATT} = E(\tau_{i}(s,t) \mid T_{i} = s) = E(R_{i,t}(s) \;-\; R_{i,t}(\infty) \mid T_{i} = s).
\end{equation}
\end{defN}

This cohort-period ATT describes the effect of a treatment happening in period $s$ during period $t$ \emph{for those firms who are experience the period $s$ event}. If these firms are special in some way, then this may not be the same effect for other firms (for example, if these firms are riskier, and the effect differs by risk profile). 

\subsubsection{Event-Time Analysis}

These cohort-period ATTs can be combined in a number of ways. Most crucially for our results, combining event cohorts to study effects relative to an event time will average across different event timings. The average treatment effect $\kappa$ periods after an event is:
\begin{equation}
\theta_\kappa^{ATT} = \sum_{s \in \mathcal{S}} w_s \cdot \tau^{ATT}(s, s+\kappa)
\end{equation}
where $w_s$ represents the weight on event cohort $s$. A natural choice is $w_s = N_s/\sum_{s'} N_{s'}$, where $N_s$ is the number of securities with $T_i = s$.

Many empirical papers studying these announcements are interested in cumulating the effects. The Cumulative Average Treatment Effect (CATT), analogous to cumulative abnormal returns (CAR), from event time 0 to $H$ is:
\begin{equation}
\theta_H^{CATT} = \sum_{\kappa=0}^H \theta_\kappa^{ATT}
\end{equation}

In this paper, we focus on these linear transformations of the ATT because they are well-behaved econometrically. However, an alternative approach to cumulative arithmetric returns is the buy-and-hold abnormal return, which we discuss briefly here to highlight its econometric challenges.

\subsubsection{Geometric Returns and Buy-and-Hold Abnormal Returns}
Announcement effects are often cumulated using \emph{buy-and-hold} returns, which correspond to geometric returns. The usual approach for defining abnormal buy-and-holds returns in the literature differences out the buy and hold return of a counterfactual portfolio or stock \parencite{savor2009stock, barber1997detecting} from a stock's buy-and-hold return. In our setting, this is analogous to 
\begin{equation}
    \prod_{\kappa=0}^{H} (1+R_{i,s+\kappa}(s)) - \prod_{\kappa=0}^{H} (1+R_{i,s+\kappa}(\infty)).
\end{equation}
This object is challenge to analyze analytically, and has many challenging statistical properties \parencite{barber1997detecting, mitchell2000managerial}. 

In our notation, this corresponds to the following geometric estimands. Let the cohort-horizon geometric ATT for cohort $s$ at horizon $H$ as
\begin{align*}
    \tau^{geo,ATT}(s,H) &= E(\log(\prod_{\kappa=0}^{H} (1+R_{i,s+\kappa}(s)) - E(\log(\prod_{\kappa=0}^{H} (1+R_{i,s+\kappa}(\infty))))\\ 
    &= \sum_{\kappa=0}^{H} E(\log(1+R_{i,s+\kappa}(s))) -E(\log(1+R_{i,s+\kappa}(\infty))).
\end{align*}
As with the arithmetic ATT, this can be averaged over the event timings:\footnote{Note that in the case of the geometric cumulative return, we first cumulate over the holding period, and then average across periods, since the non-linear structure makes the two non-interchangeable. }
\begin{equation*}
\theta^{geo,ATT}_{H} = \sum_{s} w_{s} \tau^{geo,ATT}(s,H)
\end{equation*}

Note that this estimand effectively studies the \emph{percentage} difference in gross cumulative returns, rather than \emph{level} difference in gross cumulative returns. For researchers interested in sign tests (e.g. positive or negative long-run returns), both objects work equally well.

We now present a result tying the arithmetic (abnormal return)  and geometric (buy and hold) ATT together.

\begin{lemma}
\label{lem:geo_vol}
    The following holds for $\tau^{geo,ATT}(s,t)$ under all models of $R_{it}(\infty)$:
    \begin{align}     
    \theta^{geo,ATT}_{H} &= \theta^{ATT}_{H} - \sum_{s}w_{s}\sum_{\kappa=0}^{H} \left[E(R_{i,s+\kappa}(\infty)\tau_{i}(s,s+\kappa) + \frac{1}{2}\tau_{i} (s,s+\kappa)^2 \mid  T_{i} = s)\right].
    \end{align}
    If treatment effects and control return are independent across cohort $s$, such that we can write $\mu = E(R_{i,s+\kappa}(\infty) | T_{i} = s)$ for all $\kappa$ and $s$, then this can be simplified to  
    \begin{align}     
    \theta^{geo,ATT}_{H} &= (1-\mu)\theta^{ATT}_{H} - \sum_{\kappa=0}^{H} \frac{1}{2}Var(\theta^{ATT}_\kappa) - \frac{1}{2}(\theta^{ATT}_{\kappa})^{2}
    \end{align}
\end{lemma}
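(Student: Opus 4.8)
The plan is to reduce the geometric (log) estimand to the arithmetic one by a second-order expansion of each log-return, carried out term by term over the holding period, and then to take conditional expectations. First I would substitute the definition of the individual treatment effect, $R_{i,s+\kappa}(s) = R_{i,s+\kappa}(\infty) + \tau_i(s,s+\kappa)$, and expand via $\log(1+x) \approx x - \tfrac12 x^2$. Differencing the two expansions and using $R_{i,s+\kappa}(s) - R_{i,s+\kappa}(\infty) = \tau_i(s,s+\kappa)$ to collapse the cross terms yields
\[
\log(1+R_{i,s+\kappa}(s)) - \log(1+R_{i,s+\kappa}(\infty)) \approx \tau_i(s,s+\kappa) - R_{i,s+\kappa}(\infty)\,\tau_i(s,s+\kappa) - \tfrac12\,\tau_i(s,s+\kappa)^2.
\]
This step is purely algebraic, which is precisely why the resulting identity holds under any model for $R_{i,t}(\infty)$; the two non-leading terms are the source of the return-by-effect interaction and the volatility-drag correction.

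Next I would take expectations conditional on $T_i = s$, recognize $E(\tau_i(s,s+\kappa) \mid T_i = s) = \tau^{ATT}(s,s+\kappa)$, sum over $\kappa = 0,\dots,H$, and average across cohorts with weights $w_s$. The leading term reassembles as $\sum_s w_s \sum_\kappa \tau^{ATT}(s,s+\kappa) = \sum_\kappa \theta^{ATT}_\kappa = \theta^{ATT}_H$, while the two correction terms remain as cohort-weighted conditional expectations, which is exactly the first display. To obtain the simplified form I would then invoke the independence part of the assumption: conditional independence of $R_{i,s+\kappa}(\infty)$ and $\tau_i(s,s+\kappa)$ given $T_i = s$ factors the cross term as $\mu\,\tau^{ATT}(s,s+\kappa)$, and summing gives $\mu\,\theta^{ATT}_H$, which nets against the leading term to produce $(1-\mu)\theta^{ATT}_H$.

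For the quadratic term I would decompose $E(\tau_i(s,s+\kappa)^2 \mid T_i = s) = Var(\tau_i(s,s+\kappa) \mid T_i = s) + \tau^{ATT}(s,s+\kappa)^2$. The main obstacle—and the step deserving the most care—is collapsing the cohort-weighted averages of these second moments into the pooled event-time objects $Var(\theta^{ATT}_\kappa)$ and $(\theta^{ATT}_\kappa)^2$. This is where the homogeneity half of the assumption (the requirement that $\mu$, and more generally the conditional law of $\tau_i(s,s+\kappa)$, not depend on $s$) does the work: it forces each cohort ATT to equal the pooled event-time ATT, $\tau^{ATT}(s,s+\kappa) = \theta^{ATT}_\kappa$, so the between-cohort dispersion term in the law of total variance vanishes and $\sum_s w_s Var(\tau_i(s,s+\kappa) \mid T_i=s)$ reduces to $Var(\theta^{ATT}_\kappa)$, while $\sum_s w_s \tau^{ATT}(s,s+\kappa)^2 = (\theta^{ATT}_\kappa)^2$. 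I would flag explicitly that without homogeneity a between-cohort variance term would survive, so this collapse is precisely where the assumption bites.
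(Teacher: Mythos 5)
Your proof follows the paper's argument essentially step for step: the same second-order expansion of $\log(1+x)$, the same substitution $R_{i,s+\kappa}(s)=R_{i,s+\kappa}(\infty)+\tau_i(s,s+\kappa)$ to produce the per-period identity, the same aggregation over $\kappa$ and over cohorts with weights $w_s$, and the same factoring of the cross term under independence. (Like the paper, you are implicitly reading the lemma's equalities as holding up to the $O(x^3)$ Taylor remainder.)

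The one place you diverge is the final variance collapse, and there your flagged ``main obstacle'' is not actually an obstacle. The paper defines $Var(\theta^{ATT}_\kappa)$ as the \emph{pooled} cross-sectional variance of individual treatment effects, $\var\bigl(\tau_i(T_i,T_i+\kappa)\mid T_i\in\mathcal S\bigr)$, so the collapse is immediate from the tower property: $\sum_s w_s\,E(\tau_i(s,s+\kappa)^2\mid T_i=s)=E(\tau_i^2\mid T_i\in\mathcal S)=Var(\theta^{ATT}_\kappa)+(\theta^{ATT}_\kappa)^2$, with no homogeneity of the cohort-level treatment-effect distributions required. Your within-cohort decomposition also works without that extra assumption, because the between-cohort dispersion term enters $\sum_s w_s\var(\tau_i\mid T_i=s)$ with a minus sign (law of total variance) and $\sum_s w_s\,\tau^{ATT}(s,s+\kappa)^2$ with a plus sign, so it cancels in the sum; your claim that a between-cohort term would survive absent homogeneity is therefore incorrect for the combined expression. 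Homogeneity would only be needed if one insisted on identifying each piece separately with a within-cohort variance and a squared cohort ATT.
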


This result shows that buy-and-hold returns incorporate both volatility drag and the interaction between base returns and treatment effects, making them more complex to analyze than arithmetic returns. An important implication of this is if the counterfactual return $\hat{R}$ chosen for $R_{it}(s)$ identifies $E(R_{it}(\infty) | T_{i} = s)$, it may be a \emph{bad} counterfactual for buy-and-hold returns because \emph{it does not match on volatility}. For example, a portfolio with identical returns to $E(R_{it}(\infty)|T_{i}=s)$ may have much lower variance (due to diversification). As a result, the volatility drag from the \emph{treated} observed units will bring down the geometric returns, even in the absence of any true effect!

As a result of Lemma \ref{lem:geo_vol}, we focus on estimating the arithmetic ATT, rather than approximating the buy-and-hold return.\footnote{The issues raised here are analogous to problems in difference-in-difference for log vs. level outcomes. If the parallel trends assumption holds for a level outcome, than it almost surely cannot equivalently hold for a log outcome, unless the treatment is randomly assigned. \parencite{roth2023parallel}} Geometric returns would require a counterfactual return portfolio that matches on \emph{both} level and variance, and since the variance of a portfolio does not have the same theoretical guidance for a model as expected returns, finding this counterfactual portfolio is quite hard. It also suggests that papers that use buy-and-hold abnormal returns may contaminate their results as a function of how many firms are included in the counterfactual return portfolio due to diversification differences.

\subsubsection{Factor Model Structure}

We now operationalize our model for $E(R_{it}(\infty) | T_{i} = s,)$,  based on a long literature in asset pricing \parencite{chamberlain1983arbitrage, CONNOR198413}.

\begin{assumption}[Linear Factor Model]
\label{assn:factor_model}
    In the absence of the event, the average return of the portfolio of assets exposed to the event in period $s$ follows a linear factor model with intercept $\alpha_{i}$, $K$ time varying factors $\mathbf{F}_{t}$ and factor weights $\mathbf{\beta}_{i}$, such that
    \begin{equation}
        E\left(R_{it}(\infty) \mid  T_{i} = s \right) = \alpha_{s} + \beta_{s}\mathbf{F}_{t},
    \end{equation}
    where $\alpha_{s} = E(\alpha_{i} | T_{i} = s), \beta_{s} = E(\beta_{i} | T_{i} = s)$.
\end{assumption}

Note that the linear factor assumption is quite strong. For example, it does not allow for changing factor loadings \parencite{barberis2005comovement}. It also does not allow for the market to \emph{anticipate} an event (rationally) in the future if the event does not eventually occur.\footnote{This issue is considered in a series of papers in the finance literature, e.g. \textcite{prabhala1997conditional}, that consider conditional events.} However, it nests generally almost all financial event study methods, such as using the market model, CAPM, or Fama-French factors to construct the counterfactual return \parencite{campbell1997econometrics}. It is also possible that this model could only hold for a short period of time, allowing for varying loadings over a longer period of time (as in \cite{kelly2019characteristics}).

\subsubsection{Identification Assumptions}

\begin{assumption}[Limited Anticipation]
\label{assn:limited_anticipation}
For some known $\delta \geq 0$,
\begin{equation}
R_{i,t}(s) = R_{i,t}(\infty) \quad \text{for all } t < s - \delta
\end{equation}
\end{assumption}

\begin{remark}
    Since we can write
    \begin{align}
         E(R_{i,t}(T_{i})\mid T_{i} = s, \boldsymbol{F}_{t}) &=  E(R_{i,t}(\infty)\mid T_{i} = s, \boldsymbol{F}_{t})  + \tau(t,s)^{ATT}\\
         &=  \alpha_{s} + \beta_{s}\boldsymbol{F}_{t} + \tau(s,t)^{ATT},
    \end{align}
    \Cref{assn:limited_anticipation} implies that $\tau(s,t)^{ATT} = 0$ for all $t < T_i - \delta$. This means that the event has no impact on the returns of the treated group prior to $\delta$ periods before the event. Setting $\delta > 0$ allows for some pre-event information leakage, while $\delta = 0$ assumes no anticipation.
\end{remark}

\begin{assumption}[Event Assignment]
\label{assn:event_assignment}
The probability that security $i$ experiences an event at time $t$ is:
\begin{equation}
p_t(\mathbf{X}_i, \mathbf{F}) = \Pr(T_i = t \mid \mathbf{X}_i, \mathbf{F})
\end{equation}
where $\mathbf{X}_i = (\alpha_i, \boldsymbol{\beta}_i)$ represents security characteristics and $\mathbf{F} = (\mathbf{F}_1, \ldots, \mathbf{F}_T)$ represents all factor realizations.
\end{assumption}

Two important special cases are:
\begin{itemize}
\item \textbf{Random assignment}: $p_t(\mathbf{X}_i, \mathbf{F}) = p_t(\mathbf{F})$ (event assignment independent of security characteristics)
\item \textbf{Random timing}: $p_t(\mathbf{X}_i, \mathbf{F}) = p_t(\mathbf{X}_i)$ (event timing independent of factor realizations)
\end{itemize}

These assumptions formalize when simple estimators will be unbiased and when more sophisticated methods are needed.

This assumption implies that the treated group cannot have an impact from the announcement for a sufficient window prior to the date of the release. There is obvious evidence in the finance literature of hidden information leaking out, with prices responding beforehand (e.g. \textcite{schwert1996markup}). Indeed, this is often pointed to evidence for the strong version of the efficient markets hypothesis. Hence, limited anticipation will be necessary to set a benchmark for when leakage has not yet occurred. This will  allow the researcher to identify the periods in which we can estimate the counterfactual returns. This is the assumption necessary to use the pre-event estimation window commonly used in financial event studies \parencite{campbell1997econometrics,kothari2007econometrics}. 

However, it is important to distinguish between selection into the treatment (e.g. $\{R_{it}(s)\}_{s\in\mathcal{S}}$ being correlated $T_{i}$) and anticipation of the treatment. The former is quite plausible, as we see in our analysis of the S\&P 500 index inclusion effect in \Cref{sec:index_inclusion} -- firms that are growing and having a large market cap are more likely to be selected into the S\&P. The latter will bias our estimates of the true treatment effect, and can be caused by market participants anticipating the event. 

\subsection{Estimators}

\label{sec:estimators}

We now present four sets of estimators and characterize the conditions under which they identify the ATT. In all cases, we assume returns are already adjusted for the risk-free rate. 

\subsubsection{The Abnormal Returns Approach}

Consider first the canonical abnormal returns model used in finance research \parencite{campbell1997econometrics, BROWN19853}. The researcher begins by selecting a set of observable factors $F^o_t$ and estimates factor loadings $(\hat{\alpha}_i, \hat{\beta}_i)$ using ordinary least squares on data prior to $T_i - \delta$:

\begin{equation}
R_{it} = \alpha_i + \beta_i F^o_t + \varepsilon_{it}, \quad t < T_i - \delta
\end{equation}

These estimates $\hat{\alpha}_i$ and $\hat{\beta}_i$ minimize squared prediction errors for stock $i$'s returns using the observed factors. The factors $F^o_t$ may include no factors, a single factor (the market return), or multiple factors (e.g., Fama-French factors).

\begin{defN}[Abnormal Returns Estimator]
\label{def:abnormal_returns}
Define the predicted return for stock $i$ at time $t$ as $\hat{R}_{it} = \hat{\alpha}_i + \hat{\beta}_i F^o_t$. The abnormal return is:
\begin{equation}
AR_{it} = R_{it} - \hat{R}_{it}
\end{equation}
The cohort-period abnormal return estimator is:
\begin{equation}
\tau^{AR}(s,t) = \mathbb{E}(AR_{i,t} | T_i = s) = \mathbb{E}(R_{i,t} | T_i = s) - \mathbb{E}(\hat{R}_{it} | T_i = s)
\end{equation}
\end{defN}

This approach attempts to remove the component of returns attributable to systematic factor exposure, leaving only the ``abnormal'' component. Under correct specification of the factor structure ($F^o_t = F_t$ for all relevant factors), this abnormal component should isolate the treatment effect. However, when factors are omitted or mismeasured, the estimated loadings $\hat{\beta}_i$ may fail to capture the true exposures $\beta_i$, leading to bias.

\subsubsection{Alternative Approaches}

We compare the abnormal returns approach to three alternatives estimation approaches.

\begin{defN}[Difference-in-Means Estimator]
\label{def:diff_in_means}
The difference-in-means estimator compares average returns of treated securities to a control group:
\begin{align}
\hat{\tau}^{cont}(s,t) &= \mathbb{E}(R_{i,t} | T_i = s) - \mathbb{E}(R_{i,t} | i \in C) \\
\hat{\theta}^{cont}_\kappa &= \sum_{s \in S} w_s \hat{\tau}^{cont}(s, s+\kappa)
\end{align}
\end{defN}

When the control group consists of all securities weighted by market capitalization, this estimator corresponds to the ``market-adjusted-return model'' of \textcite{campbell1997econometrics} and \textcite{BROWN19853}. Alternatively, the control group might consist of matched firms selected based on observable characteristics, as in \textcite{barber1997detecting} and \textcite{loughran1997long}. 

Second, we consider a synthetic control estimator \parencite{abadie2021introduction} that uses the pre-event data to construct a synthetic control group:

\begin{defN}[Synthetic Control Estimator]
\label{def:synth_control}
Let $R_{s,t} = \mathbb{E}(R_{it} | T_i = s)$ denote the average return of securities treated at time $s$. The synthetic control estimator constructs a weighted portfolio of control securities to match the pre-event return path of the treated portfolio:
\begin{align}
\hat{\tau}^{synth}(s,t) &= R_{s,t} - \sum_{j \in C} \hat{\omega}_j R_{j,t} \\
\hat{\theta}^{synth}_\kappa &= \sum_{s \in S} w_s \hat{\tau}^{synth}(s, s+\kappa)
\end{align}
where the weights $\hat{\omega}_j$ solve:
\begin{equation}
\hat{\omega} = \arg\min_\omega \sum_{t < s-\delta} \left( R_{s,t} - \sum_{j \in C} \omega_j R_{j,t} \right)^2
\end{equation}
and are subject to a non-negativity constraint: $\omega_j \geq 0$.
\end{defN}

The synthetic control method originated in \textcite{abadie2003economic, abadie2010synthetic} and has expanded and grown as a method over the last decade. Synthetic control directly constructs a counterfactual by matching the pre-event dynamics of treated securities using a portfolio of controls.
This synthetic control is then used as a counterfactual return following the event. 

The key distinction from abnormal returns is that synthetic control does not require the researcher to specify or estimate the underlying factor structure. Instead, it searches for portfolio weights that replicate the treated group's returns in the pre-period, effectively letting the data determine the appropriate factor exposures. If such a replicating portfolio exists, it should continue to provide valid counterfactual returns in the post-period (absent the treatment).

We could depart from the original synthetic control applications by allowing negative weights. Traditionally, synthetic control methods restrict $\omega_j \geq 0$ to ensure the counterfactual represents a convex combination of control units. However, this restriction is unnecessarily limiting in financial applications. Allowing negative weights permits short positions and significantly expands the set of achievable factor loadings, making it more likely that a replicating portfolio exists. This flexibility is natural in financial markets and consistent with standard long-short portfolio construction. However, absent this restriction, we are not able to prove our results on unbiasedness using results from \textcite{ferman2021properties}.

We focus on constructing a single synthetic control for the portfolio of treated securities ($R_{s,t}$) rather than constructing separate synthetic controls for each individual security. This choice reflects both practical and theoretical considerations. Empirically, individual stock returns contain substantial idiosyncratic noise that would make firm-by-firm matching challenging. Theoretically, our estimands target average treatment effects for groups of securities, not individual effects, making portfolio-level analysis natural. This approach follows very naturally the approach advocated in \textcite{ben2022synthetic} for staggered synthetic control.

In practice, perfect pre-period fit may not be achievable. Extensions by \textcite{abadie2021penalized, ben2021augmented, ben2022synthetic} allow for approximate rather than exact matching, trading off pre-period fit against overfitting concerns. However, most importantly for our analysis in financial event studies, \textcite{ferman2021properties} shows that if the data follows a linear factor structure, then with sufficient pre-event time periods and control units, the estimator is consistent. This is consistent with a wide-range of asset pricing work highlighting the importance of having assets that span risk factors \parencite{giglio2021asset, giglio2025test}.There is also a close connection to the mimicking-portfolio approach \parencite[E.g.][]{huberman1987mimicking}.


We also consider a third estimator, following \textcite{xu2017generalized}, which uses PCA regression with cross-validation to estimate a factor structure with unknown factors:
\begin{defN}
    \label{def:gsynth}
The Gsynth approach assumes that non-treated stocks follow an interactive fixed effects model:
\begin{equation}
    R_{it}(\infty) = \alpha_i + \boldsymbol{\lambda}_i' \mathbf{F}_t  + \varepsilon_{it}
\end{equation}
where $\mathbf{F}_t$ are $r$ unobserved common factors and $\boldsymbol{\lambda}_i$ are unit-specific factor loadings.

The estimation proceeds in three steps:

\textbf{Step 1: Initial Factor Estimation}
Using only control units, estimate factors via principal components:
\begin{equation}
    (\hat{\mathbf{F}}, \hat{\boldsymbol{\Lambda}}) = \arg\min_{\mathbf{F}, \boldsymbol{\Lambda}} \sum_{i \in \mathcal{C}} \sum_{t=1}^T (R_{it} - \alpha_i - \boldsymbol{\lambda}_i' \mathbf{F}_t)^2
\end{equation}
subject to normalization constraints $\mathbf{F}'\mathbf{F}/T = \mathbf{I}_r$ and $\boldsymbol{\Lambda}'\boldsymbol{\Lambda}$ diagonal.

\textbf{Step 2: Cross-Validation for Model Selection}
Select the number of factors $r$ via cross-validation:
\begin{equation}
    \hat{r} = \arg\min_{r \in \{1,\ldots,r_{max}\}} CV(r)
\end{equation}
where $CV(r)$ is the cross-validated mean squared prediction error using pre-treatment periods for the treatment group.

\textbf{Step 3: Counterfactual Construction}
For each treated unit $i$ with $T_i = s$:
\begin{enumerate}
    \item Estimate unit-specific loadings using pre-treatment data:
    \begin{equation}
        \hat{\boldsymbol{\lambda}}_i = \arg\min_{\boldsymbol{\lambda}} \sum_{t < s - \delta} (R_{it} - \alpha_i - \boldsymbol{\lambda}' \hat{\mathbf{F}}_t)^2
    \end{equation}
    \item Construct counterfactual for post-treatment periods:
    \begin{equation}
        \hat{R}_{it}^{GSC}(\infty) = \hat{\alpha}_i + \hat{\boldsymbol{\lambda}}_i' \hat{\mathbf{F}}_t
    \end{equation}
\end{enumerate}

The treatment effect estimate is:
\begin{equation}
    \hat{\tau}^{GS}(s,t) = \frac{1}{N_s}\sum_{i: T_i = s} \left(R_{it} - \hat{R}_{it}^{GS}(\infty)\right)
\end{equation}

\end{defN}

The Gsynth estimator more directly leans on the linear factor structure, but does not require knowing the true factors, and uses the set of control firms to construct the set of counterfactual returns. 

We focus on these two alternative estimators, but other alternative methods, such as IPCA \textcite{kelly2019characteristics} or the three-pass method in \textcite{giglio2021asset} may work as well or better. We leave it to future work to consider what approaches may work best.

\subsection{Theoretical Results}
\label{sec:theory}

We now establish conditions under which these estimators identify the ATT. For this proposition, it is convenient to see how these estimators differ from the target single event-period estimand:
\begin{align}
\tau^{AR}(s,t) - \tau^{ATT}(s,t) &= (\alpha_s - \hat{\alpha}_s) + (\beta_s F_t - \hat{\beta}_s F^o_t) + \varepsilon_{st} \label{eq:bias_ar} \\
\hat{\tau}^{cont}(s,t) - \tau^{ATT}(s,t) &= (\alpha_s - \alpha_\infty) + (\beta_s - \beta_\infty) F_t + (\varepsilon_{st} -\varepsilon_{\infty,t}) \label{eq:bias_cont} \\
\hat{\tau}^{alt}(s,t) - \tau^{ATT}(s,t) &= (\alpha_s - \hat{\alpha}^{alt}_s) + (\beta_s - \hat{\beta}^{alt}_s) F_t + \varepsilon_{st}\label{eq:bias_synth}
\end{align}
where $\alpha_s = \mathbb{E}(\alpha_i | T_i = s)$, $\beta_s = \mathbb{E}(\beta_i | T_i = s)$  are the average intercept and factor loadings for treated securities, $\alpha_\infty$ and $\beta_\infty$ are corresponding quantities for the control group, $\hat{\alpha}_s$ and $\hat{\beta}_s$ are the estimated loadings from the abnormal returns approach, and $\hat{\alpha}^{alt}_s$ and $\hat{\beta}^{alt}_s$ are the implied loadings from either the synthetic control or gsynth estimator. $\varepsilon_{st} = n^{-1}_{s}\sum \varepsilon_{it}$ is the average idiosyncratic noise for the $i$ cohort, and $\varepsilon_{\infty,t} = n^{-1}_{i \in \mathcal{C}}\sum varepsilon_{it}$ is the average noise for the control group.

\begin{proposition}[Single Event Finite Sample and Asymptotic Bias]
\label{prop:consistency}
Let Assumptions \ref{assn:factor_model}, and \ref{assn:limited_anticipation} hold. Then:

\begin{enumerate}
\item \upshape{Asymptotic properties}. As $n_s, n_c, T_{pre} \to \infty$:
\begin{align}
\tau^{AR}(s,t) - \tau^{ATT}(s,t) &\xrightarrow{p} (\alpha_s - \tilde{\alpha}_s) + (\beta_s F_t - \tilde{\beta}_s F^o_t) \label{eq:plim_ar} \\
\hat{\tau}^{cont}(s,t) - \tau^{ATT}(s,t) &\xrightarrow{p} (\alpha_s - \alpha_\infty) + (\beta_s - \beta_\infty) F_t \label{eq:plim_cont} \\
\hat{\tau}^{synth}(s,t) - \tau^{ATT}(s,t) &\xrightarrow{p} 0 \label{eq:plim_synth}
\end{align}
where $\tilde{\alpha}_s$ and $\tilde{\beta}_s$ are probability limits of the estimated parameters.

\item Under random assignment ($p_t(\mathbf{X}_i, \mathbf{F}) = p_t(\mathbf{F})$),as  $n_s, n_c \to \infty$, the difference-in-means estimator is consistent even with fixed $T_pre$:
\begin{equation}
\hat{\tau}^{cont}(s,t) - \tau^{ATT}(s,t) \xrightarrow{p} 0
\end{equation}

\item Under correct specification, ($F^o_t = F_t$ for all $t$), the abnormal returns estimator is consistent as $n_{s}, T_{pre} \to \infty$:
\begin{equation}
\tau^{AR}(s,t) - \tau^{ATT}(s,t) \xrightarrow{p} 0
\end{equation}
\end{enumerate}
\end{proposition}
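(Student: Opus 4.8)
The plan is to start from the finite-sample bias decompositions in \eqref{eq:bias_ar}--\eqref{eq:bias_synth}, which hold identically under Assumptions \ref{assn:factor_model} and \ref{assn:limited_anticipation}, and to establish each limit by separately (i) driving the idiosyncratic-noise terms to zero by a law of large numbers in the cross-section, and (ii) pinning down the probability limits of the estimated loadings. I would first make explicit the regularity conditions implicit in the statement: the errors $\varepsilon_{it}$ are mean zero conditional on $(\mathbf{X}_i,\mathbf{F})$ with limited cross-sectional dependence (e.g.\ bounded average covariance), so that $\varepsilon_{st}=n_s^{-1}\sum_{i:T_i=s}\varepsilon_{it}\xrightarrow{p}0$ as $n_s\to\infty$ and likewise $\varepsilon_{\infty,t}\xrightarrow{p}0$ as $n_c\to\infty$.

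For Part 1 the noise terms vanish as above, so only the systematic part of each decomposition survives. For the difference-in-means estimator that part is exactly $(\alpha_s-\alpha_\infty)+(\beta_s-\beta_\infty)F_t$, which contains no estimated object and is therefore the limit in \eqref{eq:plim_cont}. For the abnormal-returns estimator I would argue that each firm's pre-period OLS regression of $R_{it}$ on $F^o_t$ is consistent, as $T_{pre}\to\infty$, for the population projection coefficients $(\tilde\alpha_i,\tilde\beta_i)$ of firm $i$'s returns on the observed factors; averaging over the cohort then gives $\hat\alpha_s\xrightarrow{p}\tilde\alpha_s:=E(\tilde\alpha_i\mid T_i=s)$ and $\hat\beta_s\xrightarrow{p}\tilde\beta_s$, which yields \eqref{eq:plim_ar}. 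For the synthetic estimator I would invoke \textcite{ferman2021properties}: under Assumption \ref{assn:factor_model} treated and control returns share the common factor structure, and provided the treated cohort's average loadings are representable by an admissible (non-negative) combination of control loadings, with $n_c,T_{pre}\to\infty$ the fitted weights deliver $\hat\alpha^{synth}_s\to\alpha_s$ and $\hat\beta^{synth}_s\to\beta_s$, so the systematic part of \eqref{eq:bias_synth} collapses and \eqref{eq:plim_synth} follows.

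Parts 2 and 3 are then corollaries obtained by killing the residual systematic bias. For Part 2, random assignment $p_t(\mathbf{X}_i,\mathbf{F})=p_t(\mathbf{F})$ makes treatment status independent of $\mathbf{X}_i=(\alpha_i,\beta_i)$, so the conditional distribution of characteristics is common across cohorts and controls; hence $\alpha_s=\alpha_\infty$ and $\beta_s=\beta_\infty$, the systematic term in \eqref{eq:plim_cont} is identically zero, and only $n_s,n_c\to\infty$ (not $T_{pre}$) is needed, since no loadings are estimated. For Part 3, correct specification $F^o_t=F_t$ makes the pre-period regression correctly specified for every firm, so the population projection returns the true loadings, $\tilde\alpha_i=\alpha_i$ and $\tilde\beta_i=\beta_i$; averaging gives $\tilde\alpha_s=\alpha_s$ and $\tilde\beta_s=\beta_s$, and substituting into \eqref{eq:plim_ar} leaves $(\alpha_s-\alpha_s)+(\beta_s-\beta_s)F_t=0$.

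The main obstacle I anticipate is the two-step argument for the abnormal-returns plim: one must justify interchanging the cross-sectional average with the per-firm OLS limit, which requires handling the double limit in $(n_s,T_{pre})$ together with a uniform-integrability or bounded-moment condition, so that $n_s^{-1}\sum_i\hat\alpha_i$ converges to $E(\tilde\alpha_i\mid T_i=s)$ rather than to a finite-$T$ biased object. A secondary difficulty is verifying that our setup meets the precise conditions of \textcite{ferman2021properties}: in particular the spanning requirement that the treated cohort's loadings lie in the set achievable by the non-negatively weighted control portfolios is a genuine additional restriction under Assumption \ref{assn:factor_model}, not something automatic, and it is exactly this constraint that prevents the proof from extending once negative weights are allowed.
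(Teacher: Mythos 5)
Your proposal is correct and follows essentially the same route as the paper's proof: the same bias decompositions \eqref{eq:bias_ar}--\eqref{eq:bias_synth}, a cross-sectional law of large numbers to kill $\varepsilon_{st}$ and $\varepsilon_{\infty,t}$, OLS probability limits (projection coefficients under misspecification, true loadings under correct specification) for the abnormal-returns estimator, an appeal to \textcite{ferman2021properties} for the synthetic control limit, and the equality $\alpha_s=\alpha_\infty$, $\beta_s=\beta_\infty$ under random assignment. The caveats you flag---the double limit in $(n_s,T_{pre})$ and the spanning/non-negativity requirement for Ferman's result---are exactly the conditions the paper imposes as regularity assumptions or acknowledges in the surrounding text.
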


All proofs are in \Cref{appsec:proofs}.

The most complex part of this proof, proof of asymptotic unbiasedness of the synthetic control estimator, follows directly from \textcite{ferman2021properties}, who show that the synthetic control estimator is asymptotically unbiased under the assumption of an unknown linear factor model and many control units. The results for Gsynth also follow directly from \textcite{xu2017generalized}. The other two results follow from the assumptions and the definition of the estimators.

\begin{remark}
Both the misspecified abnormal return estimator and the difference-in-means estimator in a given time period are inconsistent. Both converge to a random variable that is a linear combination of the two factors, but the linear combination varies depends on the factor loadings (and factor correlation). These inconsistencies are similar to the inconsistencies highlighted in Theorem 1 of \textcite{andrews2005cross}. In contrast, the synthetic control estimator is consistent and converges to the true effect. If the abnormal return estimator is correctly specified, then it is also consistent. If the treatment is randomly assigned, then the difference-in-means estimator is also consistent, since $\beta_{s} = \beta_{\infty}$ in the limit. This implies that these estimators are highly susceptible to coincident shocks at the same time, and the inference will be highly suspect (hence the need to cluster on event-timings in many financial event studies).
\end{remark}

Of course, intuitively, in many applications the factor loadings are often not too large, and the underlying risk premia are, on average, typically small relative to $\tau^{ATT}(s,t)$. For example, the one-day index inclusion effect is estimated to be somewhere between 1-4\%, depending on the time period. By comparison, the market return is, on average, 0.05\%, two orders of magnitude smaller than the treatment effects. 

However, there are many periods when the market return can be far larger, such as during periods of market volatility.  There is substantial variation in the size of these factors, with an interquartile range of 1\% and very large fat tails. Hence, the correlation of the factors with the timing of the event is very important. This will be  apparent in our first empirical example of \cite{acemoglu2016value}. As a result, this bias can be quite large.  Formally, we can write the following from \Cref{thm:bias}:

\begin{corollary}
Fix $\alpha_{s}, \beta_{s}, \hat{\alpha}_{s}, \hat{\beta}_{s}$ and $F_{t}^{o}$. Then, $|\tau^{AR}(s,s) - \tau^{ATT}(s,s)|$ is increasing in $|F_{t}|$ for entries where $\beta_{s}$ is non-zero.
\end{corollary}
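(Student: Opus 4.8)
The plan is to read the claim off the single-event bias decomposition \eqref{eq:bias_ar} (equivalently its probability limit \eqref{eq:plim_ar}) specialized to $t=s$, viewing it purely as a function of the true factor realization $F_s$ with everything else frozen. Writing the abnormal-return bias at $t=s$ as
\[
B(F_s) = (\alpha_s - \hat\alpha_s) + \beta_s F_s - \hat\beta_s F^o_s + \varepsilon_{ss},
\]
the corollary's hypotheses fix $\alpha_s,\beta_s,\hat\alpha_s,\hat\beta_s$ and $F^o_s$, so the only free argument is $F_s$. Collecting the $F_s$-independent terms into $c := (\alpha_s - \hat\alpha_s) - \hat\beta_s F^o_s + \varepsilon_{ss}$ gives the affine form $B(F_s) = c + \beta_s F_s = c + \sum_k \beta_{s,k} F_{s,k}$, so the entire dependence on the factors runs through the inner product $\beta_s F_s$.

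First I would differentiate coordinate by coordinate. For any $k$ and any point with $B(F_s)\neq 0$,
\[
\frac{\partial\, |B(F_s)|}{\partial F_{s,k}} = \operatorname{sign}\!\big(B(F_s)\big)\,\beta_{s,k},
\]
so the size of this partial derivative is exactly $|\beta_{s,k}|$: strictly positive on the coordinates with $\beta_{s,k}\neq 0$ and zero on the rest. This already isolates the qualitative content — the absolute bias reacts to a factor only through its loading, moving $F_{s,k}$ along $\operatorname{sign}(B)\cdot\operatorname{sign}(\beta_{s,k})$ raises $|B|$ at unit rate $|\beta_{s,k}|$, and factors with zero loading leave the bias unchanged. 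The same computation applied to the radial scaling $F_s\mapsto \lambda F_s$ shows $|B| = |c + \lambda\,\beta_s F_s|\to\infty$ as $\lambda\to\infty$ whenever $\beta_s F_s\neq 0$, which is the sense in which the bias grows with $|F_s|$.

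The one genuine subtlety, and the step I would be most careful about, is that $F_s\mapsto |c+\beta_s F_s|$ is not globally monotone in $|F_{s,k}|$: it is V-shaped with a kink on the hyperplane $c+\beta_s F_s = 0$, so immediately around that set increasing $|F_{s,k}|$ can momentarily shrink $|B|$. I would therefore state the conclusion in the regime the corollary cares about: once $|F_{s,k}|$ passes the threshold $|c|/|\beta_{s,k}|$ (beyond the kink) the map $|F_{s,k}|\mapsto |B|$ is strictly increasing, and unconditionally $|B|\to\infty$ linearly in $|F_{s,k}|$ for every coordinate with $\beta_{s,k}\neq 0$. This is precisely the amplification mechanism highlighted in the introduction — large factor realizations inflate any misspecification in the estimated loadings — and it is the content the corollary asserts. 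Working from the probability-limit form \eqref{eq:plim_ar} instead drops $\varepsilon_{ss}$ from $c$ and leaves the argument otherwise unchanged.
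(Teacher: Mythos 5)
Your proposal is correct, and it is essentially the argument the paper intends: the corollary is stated immediately after the bias decompositions in equations \eqref{eq:bias_ar}--\eqref{eq:plim_ar} and the paper offers no separate proof, treating it as a direct reading of the affine form $(\alpha_s-\hat\alpha_s) + \beta_s F_s - \hat\beta_s F^o_s$ with everything but $F_s$ frozen. Your coordinate-wise differentiation of $|c+\beta_s F_s|$ is exactly that reading made explicit. The one place you go beyond the paper is the V-shape caveat, and you are right to raise it: as literally stated, $|\tau^{AR}(s,s)-\tau^{ATT}(s,s)|$ is \emph{not} globally increasing in $|F_{s,k}|$, since for $c\neq 0$ the map has a kink at $F_{s,k}=-c/\beta_{s,k}$ and decreases on one side of it. The honest version of the claim is the one you state — monotone beyond the threshold $|c|/|\beta_{s,k}|$, and linearly divergent in $|F_{s,k}|$ for every coordinate with $\beta_{s,k}\neq 0$ — which is the amplification mechanism the paper actually uses in the Geithner application. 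This is a refinement of, not a departure from, the paper's reasoning.
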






We next consider how these results change if there are multiple event periods. 


\begin{theorem}[Bias with multiple events]
\label{thm:bias}
    Let \Cref{assn:factor_model} and \ref{assn:limited_anticipation} hold.
    \begin{enumerate}
        \item If $n_{s}, n_{c}, T_{pre} \to \infty$, then asymptotically, the synthetic control and gsynth estimators are unbiased, 
        \begin{align}
     \hat{\theta}_{\kappa}^{alt} -  \theta_{\kappa}^{ATT} &\rightarrow_{p} 0.
        \end{align}
        where $alt \in \{synth, gs\}$.
        \item If $|\mathcal{S}| > 0$ and $1 > p_{t}(\mathbf{X}_{i}, \mathbf{F}) > \epsilon > 0$, then if $n_{s}, n_{c}, T_{pre} \to \infty$, the other two estimators are biased and converge to a weighted combination of conditional expected risk premia across the event periods:
        \begin{align}
            \hat{\theta}^{ar}_{\kappa}-\theta_{\kappa}^{ATT}&=  E\left( (\alpha_{s} - \tilde{\alpha}_{s}) + (\beta_{s}\boldsymbol{F}_{s+\kappa} - \tilde{\beta}_{s}\boldsymbol{F}^{o}_{s+\kappa})\mid T_{i} \in \mathcal{S} \right)\\
            \hat{\theta}^{cont}_{\kappa} -\theta_{\kappa}^{ATT}  &=  E\left((\alpha_{s} - \alpha_{\infty}) + (\beta_{s}-\beta_{\infty})F_{t}\mid T_{i} \in \mathcal{S} \right)
        \end{align}
        \item If random assignment across firms holds,   then for $n_{s}, n_{c} \to \infty$, 
         \begin{align}
     \hat{\theta}_{\kappa}^{cont} -  \theta_{\kappa}^{ATT} &\rightarrow_{p} 0.
        \end{align}
        \item If random timing holds, for $n_{s}, T_{pre} \to \infty$,then 
        \begin{align}
            \hat{\theta}^{ar}_{\kappa}-\theta_{\kappa}^{ATT}&=  E\left(\alpha_{s} - \tilde{\alpha}_{s}\mid T_{i} \in \mathcal{S} \right)\\
            &+ E\left(\beta_{i}\mid T_{i} \in \mathcal{S} \right)E\left(\boldsymbol{F}_{t} \right) - E\left(\tilde{\beta}_{i}\mid T_{i} \in \mathcal{S} \right)E\left(\boldsymbol{F}^{o}_{s+\kappa} \right)
            \end{align}
            and  for $n_{s}, n_{c} \to \infty$, 
            \begin{align}
            \hat{\theta}^{cont}_{\kappa} -\theta_{\kappa}^{ATT}  &=  E\left(\alpha_{s} - \alpha_{\infty}\mid T_{i} \in \mathcal{S} \right) + E\left(\beta_{s}-\beta_{\infty}\mid T_{i} \in \mathcal{S} \right)E\left(\mathbf{F}_{t}\right)
        \end{align}
    \end{enumerate}
   
\end{theorem}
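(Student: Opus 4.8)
The plan is to reduce everything to the single-event results of Proposition~\ref{prop:consistency} by aggregating across cohorts. Because each event-time estimator is linear in its cohort-period components, $\hat{\theta}^{alt}_\kappa - \theta^{ATT}_\kappa = \sum_{s \in \mathcal{S}} w_s[\hat{\tau}^{alt}(s,s+\kappa) - \tau^{ATT}(s,s+\kappa)]$, the entire theorem follows from substituting the cohort-level plims \eqref{eq:plim_ar}--\eqref{eq:plim_synth} and repackaging the weighted sums. The key identity I would exploit is that, in the limit, the empirical weights equal the conditional cohort probabilities, $w_s = N_s/\sum_{s'}N_{s'} \to \Pr(T_i = s \mid T_i \in \mathcal{S})$, so that for any cohort-indexed object $X_s = E(X_i \mid T_i = s)$ the law of iterated expectations gives $\sum_s w_s X_s = E(X_i \mid T_i \in \mathcal{S})$.

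With this identity in hand, Parts (1)--(3) are short. For Part (1), each synth/gsynth cohort term is $o_p(1)$ by \eqref{eq:plim_synth}, and a finite weighted average of $o_p(1)$ terms is $o_p(1)$, so the aggregate is asymptotically unbiased. For Part (2), I would substitute \eqref{eq:plim_ar} and \eqref{eq:plim_cont}, evaluate the factor at the event-window time $t = s+\kappa$, and rewrite each weighted sum as a conditional expectation over $T_i \in \mathcal{S}$; the overlap condition $1 > p_t(\mathbf{X}_i,\mathbf{F}) > \epsilon > 0$ ensures these conditional expectations are well-defined and generically nonzero, so the bias does not vanish by accident. For Part (3), random assignment $p_t(\mathbf{X}_i,\mathbf{F}) = p_t(\mathbf{F})$ makes the characteristic distribution of the treated coincide with that of the controls, so $\alpha_s = \alpha_\infty$ and $\beta_s = \beta_\infty$ in the limit and the bracketed terms in \eqref{eq:plim_cont} vanish identically.

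The main obstacle is Part (4). Under random timing, $p_t(\mathbf{X}_i,\mathbf{F}) = p_t(\mathbf{X}_i)$, the assigned event time---and hence the cohort index $s$---is independent of the factor path $\mathbf{F}$. I would use this to factor the cross term: conditioning on the cohort first, independence yields $E(F_{s+\kappa}\mid s) = E(\mathbf{F}_t)$ (invoking stationarity of the factor mean so the value is the same constant for every $s+\kappa$), whence $E(\beta_s F_{s+\kappa}\mid T_i\in\mathcal{S}) = E(\beta_i\mid T_i\in\mathcal{S})\,E(\mathbf{F}_t)$, and identically for the $\tilde{\beta}_s F^o_{s+\kappa}$ and control-group terms. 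The delicate point is that the weights $w_s$ are themselves random under random timing, so the decoupling of $\sum_s w_s\beta_s$ from the factor average can only be carried out in the limit, where $w_s$ and the cohort loadings $\beta_s,\tilde{\beta}_s$ stabilize to deterministic functions of the $\mathbf{X}$-distribution that are independent of $\mathbf{F}$. Verifying this asymptotic independence---and that the OLS plims $\tilde{\beta}_s$ inherit it as population covariance objects rather than realized quantities---is the step that requires the most care.
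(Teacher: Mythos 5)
Your proposal is correct and follows essentially the same route as the paper's proof: the same cohort-level decomposition, the same key identity that $w_s \to \Pr(T_i = s \mid T_i \in \mathcal{S})$ converts weighted sums into conditional expectations, and the same factorization of the cross term $E(\beta_s \mathbf{F}_{s+\kappa} \mid T_i \in \mathcal{S})$ under random timing using independence of event timing from the factor path and time-invariant factor means. The only difference is cosmetic: you flag the randomness of the weights and the status of the OLS plims as a delicate point, whereas the paper simply imposes independence of firm characteristics from $\mathbf{F}$ as an auxiliary assumption and proceeds.
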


An implication of this is that the abnormal returns estimator is can be quite close to the true treatment effect, even when the factor model is misspecified. Moreover, this bias could be small even for a model that ignores factors, consistent with the simulation evidence  in \textcite{BROWN19853} that the form of the abnormal return estimator has limited effects on the estimates.\footnote{The simulations in \textcite{BROWN19853} are such that the event days are exactly randomly assigned across time: ``Each time a security is selected, a hypothetical event day is generated. Events are selected with replacement, and are assumed to occur with equal probability on each trading day from July 2, 1962, through December 31, 1979.''} In fact, a common phrase described in event studies is that the structure of the model used in $\tau^{AR}$ does not have significant impacts on the estimated effects. For example, in footnote 5, \textcite{shleifer1986demand} states ``The [index inclusion] results were not materially different when returns were not corrected for market movements. Similarly, combining the before and after estimation periods did not make much difference.'' Or in \textcite{edmans2012link} ``I use the standard short event-study window so that the calculation of abnormal returns is relatively insensitive to the benchmark asset pricing model used.''  


\subsection{Increasing cumulative bias in long-run event studies}

Researchers are often interested in the trends or cumulative impact of events on returns, as measured by cumulative abnormal returns or buy-and-hold abnormal returns. This gets mapped to different economic and behavioral theories about how the market processes information (e.g.  \textcite{daniel1998investor} is a theory to explain these effects from a behavioral perspective; \textcite{kwon2022extreme} consider 90 day post-announcement effects relative to announcement day effects). 

Some papers have pointed to flaws in studying these types of long-run perspectives -- for example, Mitchell and Stafford (2000) highlight the flaws in the inference around long-run abnormal return studies of firm activity. As we show in \Cref{lem:geo_vol}, the buy-and-hold abnormal return has additional challenges caused by variance considerations in the counterfactual portfolio. We now use our results in \Cref{prop:consistency} and \cref{thm:bias} to show that even estimating arithmetic cumulative abnormal returns in the long-run amplifies the misspecification bias. 

Following the analogy principle for the CATT, \Cref{thm:bias} tells us that the bias in for the abnormal return and difference-in-mean estimators is intimately related to the cumulative sum of the factor premium. Under random timing and many events, the bias for the CATT at horizon $H$ is  $H E(\beta_{i} \mid  T_{i} \in \mathcal{S})E(F_{t})$. The factors have a positive mean (since the risk of the factors leads to positive expected return), and thus the bias in the estimator will drift proportional to the expected value of the factors during the time period, scaled by the relative estimation error in $E(\beta_{i} \mid  T_{i} \in \mathcal{S})$.

Consider estimating the long-run impact of a merger on stock market prices. \textcite{RAGHAVENDRARAU1998223} find a three-year long run effect of -4\% for all mergers, while \textcite{savor2009stock} find a three-year long-run effect of -13.1\% for stock-financed mergers and 1.6\% for cash financed mergers. These results are well-motivated by \textcite{shleifer2003stock}, but their magnitude may reflect bias due to the errors in $\hat{\beta}$:
\begin{equation}
    (\beta_{s}-\hat{\beta}_{s})E(F_{t}) = \sum_{k=1}^{K}(\beta_{sk} - \hat{\beta}_{sk})E(F_{tk}).
\end{equation}

If $K=1$, for example, and was equal to the market, then our expected excess return is 6\%. If $\beta_{sk} - \hat{\beta}_{sk}$ was $-0.1$, then at the three year level, we might expect a bias of -1.8\%. This is of course an empirical question of which way the biases would go; is the constructed portfolio of firms too heavily loaded on risk factors? 

Note that these issues are not solved by using multiple event timings. This bias in factors cannot average out to zero, and so the only source by which we can achieve zero bias is through mean zero differences in the loadings.

It is also worth remarking how the results from Mitchell and Stafford (2000) can be seen analytically in our statistical terms. While the misspecification term $ (\beta_{s}-\hat{\beta}_{s})\sum_{\kappa=0}^{K^{0}}F_{s+\kappa}$ creates bias, it also creates cross-correlation in errors for every event-timing.\footnote{As they state: ``[M]ajor corporate events cluster through time by industry. This leads to positive cross-correlation of abnormal returns, making test statistics that assume independence severely overstated.'' }


\subsection{Key takeaways re: randomness}

Key takeaways for practitioners are four-fold:

\begin{enumerate}
    \item If treatment is randomly assigned across firms, then comparing returns to the average of the market is as good as any other approach. 
    \item If treatment is randomly assigned across periods, and there are multiple event timings, then the model used to estimate effects does not matter in the short-run.
    \item If treatment is randomly assigned across periods, but the model used to estimate effects is misspecified, then the estimates will be biased, even with many event timings.
    \item These results are identical whether there is a single treated firm or many treated firms. 
\end{enumerate}

\subsection{Individual estimates are noisy, but not necessarily biased}
We briefly discuss the case of a single firm being treated. To analyze this case, we need to allow for slightly more flexibility in our notation. 
\begin{assumption}
    Let $R_{it}(\infty) = \alpha_{i} + \beta_{i}\boldsymbol{F}_{t} + \varepsilon_{it}$, where $\varepsilon_{it}$ is i.i.d. across firms, and i.n.i.d. across time, and mean zero. 
\end{assumption}
\begin{remark}
    This assumption implies we can write $R_{it}(T_{i}) = R_{it}(\infty) + \tau_{i}(s,t) = \alpha_{i} + \beta_{i}\boldsymbol{F}_{t} + \tau_{i}(s,t) +  \varepsilon_{it}$.
\end{remark}

Then, consider the case of a single firm estimated in each estimator:
\begin{equation}
 \tau^{AR}_{i}(s,s) - \tau_{i}(s,s) = (\alpha_{s} - \hat{\alpha}_{s}) + (\beta_{s}\boldsymbol{F}_{s} - \hat{\beta}_{s}\boldsymbol{F}^{o}_{s}) + \varepsilon_{it}.
\end{equation}

Statistically, there are now three objects with randomness to worry about: the estimated parameters, the aggregate factors, and the idiosyncratic variance for the individual firm. Note that with several treated units, this last term disappears, but with a single unit, we have insurmountable noise. This is a common problem flagged in the event studies literature looking at securities litatigation \parencite{baker2020machine}. 

However, consider an approach that estimates many individual treatment effects in this manner (such as \textcite{kogan2017technological}). On, average, these estimates will be subject to the same results outlined above, but each one is quite noisy. This is equivalent to problems associated with estimating many treatment effects. One approach is to consider shrinkage estimators. Another would be to pool the firms based on characteristics of interest, and construct portfolios this way. This would remove $\varepsilon$. 

\section{Simulations}
\label{sec:simulations}
We highlight how the non-random timing and assignment, together with a misspecified factor model, could affect the bias with different estimators of treatment effects, using a simple simulation exercise. In the simulation, the returns follow a two-factor structure, with the second factor omitted in the estimation of abnormal returns. We compare the expected bias, root mean square error, and coverage with random vs. nonrandom assignment and timing.

\subsection{Simulation Design with 2 Factors and Selection}\label{sec:simul_2f_select_design}
We simulate a panel of stock returns with a linear factor structure:
\begin{equation}
    r_{it}=r_{f,t}+\beta_{i,mkt}(r_{mkt,t}-r_{f,t})+\beta_{i,smb}r_{smb,t}+\varepsilon_{i,t}, 
\end{equation}
where the return for each stock equals to the risk-free rate, plus the exposure times risk premium of a market factor and a size factor (small-minus-big), and a stock-level idiosyncratic component.

We assume that both factor loadings follow independent normal distributions: $\beta_{i,mkt},\beta_{i,smb}\sim \mathcal{N}(1,0.3^{2})$. We further assume that the idiosyncratic component of each stock is drawn i.i.d. from a Normal distribution: $\varepsilon_{i,t}\sim \mathcal{N}(0,0.1^{2})$. We choose a standard deviation of around 0.1 so that the residual variance constitutes approximately half of the total variance.

We simulate returns for 500 firms, with pre-treatment period of 239 days, 1 event day, and 10 post-treatment periods. Roughly 10\% of firms are treated, following one of two treatment assignment processes, discussed below. Treated firms get a true effect of 3\% on the treatment day, and nothing afterwards. The factor returns and the risk-free rate are randomly sampled from daily Fama-French returns from July 1926 to 2022 with block sampling to preserve the correlation structure between factors.

\paragraph{Treatment assignment process} We compare expected bias with different treatment assignment selection and timing selection. For firm assignment, we either completely randomly assign the treatment to 10\% of firms, or to instead relax this assumption, we model that the probability of a firm getting treated follows a logit function of the beta on the SMB factor
\begin{equation}
    p(treated)_i=\frac{\exp(\delta \beta_{i,smb})}{1+\exp(\delta\beta_{i,smb})},
\end{equation}
where $\delta=\frac{\log(0.1)}{E(\beta_{i,smb})}<0$ to achieve an average probability of 10\%. The lower the simulated SMB factor loading of the firm, the more likely to be treated. 

For treatment period selection, we similarly use two different assignment mechanisms.  The first is to randomly sample the 250 data periods, and always set the treatment period equal to $t=240$. This effectively makes the treatment period's factor draw uncorrelated with the treated firms' factor loadings. The secon approach with timing selection works as follows. First, we rank the SMB factor in 250 candidate treatment periods. We then use the rank of SMB returns as inputs to the selection function.\footnote{Raw factors returns have positive and negative values with mean close to 0, which will make the logit function highly sensitive.} The probability of any one of the candidate period being the treatment period is 
\begin{equation}
    p(selected)_t=\frac{\exp(\delta Rank_{2t})}{1+\exp(\delta Rank_{2t})},
\end{equation}
where $\delta=\frac{\log(1/250)}{E(Rank_t)}$. We then draw indicator variables for each candidate period from binomial distributions with respective treatment probability in each period. If multiple periods are drawn to be the event period, we use the one with the highest factor realization. Thus, if a period has a high factor realization of the omitted factor, it is more likely to become the treatment period.

\subsection{Simulation Results with 2 Factors and Selection}
In \Cref{tab:simul_2f_select}, we compare the performance of four different estimators across 50 simulations: mean difference between treated and control firms, average abnormal returns using the market factor (estimating the factor loading for each treated firm in the pre-period), average abnormal returns using the both factors (estimating the factor loadings for each treated firm in the pre-period), and average treatment effects from the generalized synthetic control method (Gsynth). Estimated bias is reported in percentage points. We also report the root mean square error (RMSE) and coverage of 95\% confidence intervals.

\begin{table}[H]
\caption{\textbf{Treatment Effect Bias and Coverage in Simulations: Two-Factor Structure} This table presents the bias and coverage of different estimators of treatment effects in financial returns. We simulate 500 firms with 10\% treated. The estimation period is 239 days and post-event period is 11 days. More details on the simulations is in Section \ref{sec:simul_2f_select_design}. Panel A reports simulation results with no selections, Panel B with only assignment selection, Panel C with only timing selection, and Panel D with both. We consider several estimators: difference in simple average, CAPM and 2-factor abnormal returns, and generalized synthetic methods. The expected biases and coverage are from 50 simulations.}\label{tab:simul_2f_select}
\centering
\scriptsize

\begin{tabular}[t]{lrrrrrrrrr}
\toprule
\multicolumn{7}{l}{Panel A: Random Assignment + Random Timing}\\
\midrule
\multicolumn{1}{c}{} & \multicolumn{3}{c}{All Periods} & \multicolumn{3}{c}{Treated Periods} & \multicolumn{3}{c}{Untreated Periods} \\
\cmidrule(l{3pt}r{3pt}){2-4} \cmidrule(l{3pt}r{3pt}){5-7} \cmidrule(l{3pt}r{3pt}){8-10}
Model & E(Bias) & MAD & RMSE & E(Bias) & MAD & Coverage & E(Bias) & MAD & Coverage\\
\midrule
Simple Means & 0.00 & 0.04 & 0.58 & 0.01 & 0.17 & 1 & 0.00 & 0.04 & 0.03\\
CAPM & -0.06 & 0.16 & 2.11 & -0.07 & 0.43 & 1 & -0.06 & 0.17 & 0.44\\
Correct Factor Structure & -0.01 & 0.04 & 0.54 & 0.00 & 0.16 & 1 & -0.01 & 0.04 & 0.04\\
Gsynth (PCA) & 0.00 & 0.04 & 0.56 & 0.02 & 0.17 & 1 & 0.00 & 0.04 & 0.03\\
\midrule
\multicolumn{7}{l}{Panel B: Assignment Selection + Random Timing}\\
\midrule
\multicolumn{1}{c}{} & \multicolumn{3}{c}{All Periods} & \multicolumn{3}{c}{Treated Periods} & \multicolumn{3}{c}{Untreated Periods} \\
\cmidrule(l{3pt}r{3pt}){2-4} \cmidrule(l{3pt}r{3pt}){5-7} \cmidrule(l{3pt}r{3pt}){8-10}
Model & E(Bias) & MAD & RMSE & E(Bias) & MAD & Coverage & E(Bias) & MAD & Coverage\\
\midrule
Simple Means & 0.02 & 0.05 & 0.71 & 0.04 & 0.18 & 1.00 & 0.02 & 0.05 & 0.11\\
CAPM & -0.05 & 0.13 & 1.78 & -0.04 & 0.35 & 0.98 & -0.05 & 0.14 & 0.40\\
Correct Factor Structure & -0.01 & 0.03 & 0.54 & 0.02 & 0.14 & 1.00 & -0.01 & 0.04 & 0.04\\
Gsynth (PCA) & 0.00 & 0.04 & 0.57 & 0.03 & 0.15 & 1.00 & 0.00 & 0.04 & 0.05\\
\midrule
\multicolumn{7}{l}{Panel C: Random Assignment + Timing Selection}\\
\midrule
\multicolumn{1}{c}{} & \multicolumn{3}{c}{All Periods} & \multicolumn{3}{c}{Treated Periods} & \multicolumn{3}{c}{Untreated Periods} \\
\cmidrule(l{3pt}r{3pt}){2-4} \cmidrule(l{3pt}r{3pt}){5-7} \cmidrule(l{3pt}r{3pt}){8-10}
Model & E(Bias) & MAD & RMSE & E(Bias) & MAD & Coverage & E(Bias) & MAD & Coverage\\
\midrule
Simple Means & -0.01 & 0.05 & 0.63 & 0.00 & 0.21 & 1 & -0.01 & 0.05 & 0.05\\
CAPM & 0.25 & 0.27 & 3.49 & 2.71 & 2.71 & 1 & 0.00 & 0.16 & 0.46\\
Correct Factor Structure & -0.02 & 0.04 & 0.54 & 0.00 & 0.12 & 1 & -0.02 & 0.04 & 0.04\\
Gsynth (PCA) & -0.01 & 0.04 & 0.57 & 0.01 & 0.13 & 1 & -0.01 & 0.04 & 0.04\\
\midrule
\multicolumn{7}{l}{Panel D: Assignment Selection + Timing Selection}\\
\toprule
\multicolumn{1}{c}{} & \multicolumn{3}{c}{All Periods} & \multicolumn{3}{c}{Treated Periods} & \multicolumn{3}{c}{Untreated Periods} \\
\cmidrule(l{3pt}r{3pt}){2-4} \cmidrule(l{3pt}r{3pt}){5-7} \cmidrule(l{3pt}r{3pt}){8-10}
Model & E(Bias) & MAD & RMSE & E(Bias) & MAD & Coverage & E(Bias) & MAD & Coverage\\
\midrule
Simple Means & -0.05 & 0.07 & 0.88 & -0.52 & 0.52 & 1 & -0.01 & 0.05 & 0.08\\
CAPM & 0.21 & 0.23 & 2.92 & 2.26 & 2.26 & 1 & 0.00 & 0.13 & 0.40\\
Correct Factor Structure & -0.02 & 0.04 & 0.52 & 0.01 & 0.12 & 1 & -0.02 & 0.04 & 0.05\\
Gsynth (PCA) & -0.01 & 0.04 & 0.56 & -0.01 & 0.14 & 1 & -0.01 & 0.04 & 0.04\\
\bottomrule
\end{tabular}

\end{table}

First, in Panel A, we see that the average bias is small even with the wrong factor structure, if the treatment is randomly assigned. Similarly, in Panel B, if we only have non-random assignment selection, the expected bias is also insignificant on average. However, this masks the variation across simulations - if a time period has a larger factor draw on the treatment day, that leads to much larger bias.

\begin{figure}[H]
\label{appfig:factor_draw_bias_corr}
    \caption{\small\textbf{Bias from CAPM Model on SMB Returns with Assignment Selection}
    This figure plots the biases from a CAPM estimator on the treatment period over realizations of the second factor across 50 simulations. We simulate 500 firms with 10\% of them getting treated. The estimation period is 239 days and post-event period is 11 days. More details on the simulations is in Section \ref{sec:simul_2f_select_design}. Panel A reports simulation results with no selections, Panel B with only assignment selection, Panel C with only timing selection, and Panel D with both. We consider several estimators: difference in simple average, CAPM and 2-factor abnormal returns, and generalized synthetic methods. The expected biases and coverage are from 50 simulations.}\label{fig:simul_2f_biascapm_smb_assign}
    \centering
    \includegraphics[width=0.7\linewidth]{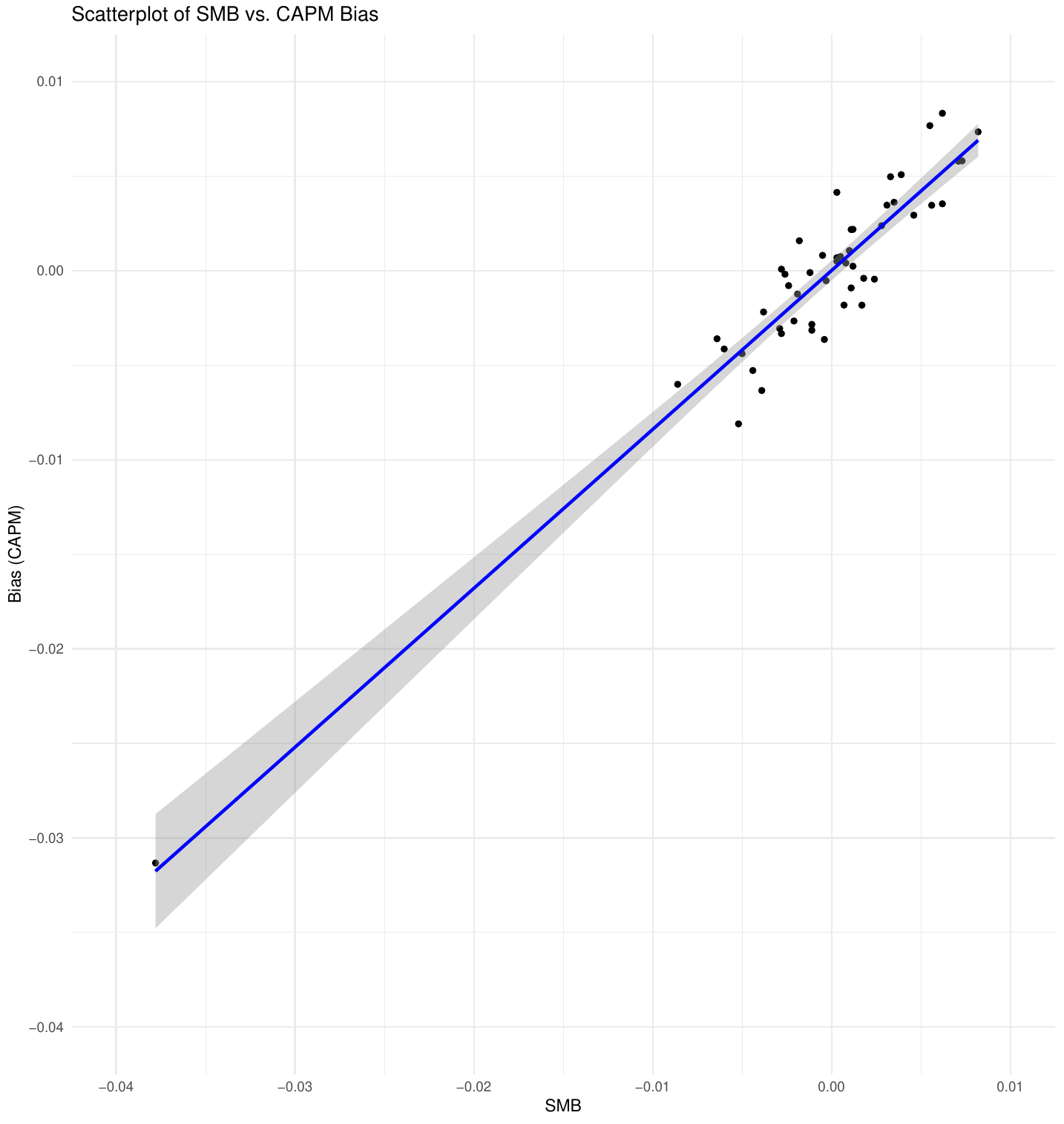}
\end{figure}

In Panel C, we consider random assignment of treatment to units, but non-random event timing.  As in Panel A,  the difference in means is unbiased thanks to the results in \Cref{thm:bias}. Since treatment is uncorrelated with factor loadings, there is no endogeneity and the simple means estimator is an unbiased estimator of the treatment effect. However, with non-random timing, the CAPM model is biased, because the abnormal return (as discussed in Section 2.2) will be the average $\beta$ for the omitted factor multiplied by the largest possible factor draw. In contrast, the difference in means is unbiased because while both treated and untreated firms are exposed to the high factor draw, they have identical factor exposures, which cancels out. For the correctly specified model, the estimated model correctly specifies the counterfactual, and so there is no bias. Finally, the Gsynth estimator is able to identify the correct underlying factor structure, and has limited bias as well. 

Once we have both types of selection in treatment in Panel D, we see that the simple difference in means is now biased. However, it is still less biased in absolute value than the misspecified CAPM model. This is because the \emph{gap} in the treatment and control factor loadings for the simple mean difference is still smaller than the level misspecification in the factor loadings in the CAPM estimation. Again, the Gsynth approach does quite well, with similar performance to the correctly specified factor model.


\section{Applications}
\subsection{Empirical Example 1: Geithner as Treasury Secretary} \label{sec:geithner}

We now turn to our first empirical example, examining the period when the announcement of Timothy Geithner as Treasury Secretary was leaked, following the setup of \textcite{acemoglu2016value}. This example highlights the results of Proposition~\ref{prop:consistency} in a simultaneous treatment setting. We demonstrate that the bias from an incorrect factor structure can be substantial in this setting, and that synthetic control methods help alleviate this bias. We argue that the bias arises from two sources: first, the event window coincides with turbulent market conditions characterized by large daily factor realizations; second, the counterfactual returns are constructed from control firms with substantially different factor exposures. We show that synthetic methods, which greatly reduce these biases, also match the factor loadings of treated firms for known factors such as size and value.

\paragraph{Empirical setup.} We examine the announcement of Timothy Geithner as nominee for Treasury Secretary on November 21, 2008. Following \textcite{acemoglu2016value}, we estimate average treatment effects over the 11-day window encompassing and following the announcement date, from November 21, 2008 (day 0) through December 8, 2008 (day 10).\footnote{November 24, 2008 corresponds to day 1 due to the weekend.} For treated and control bank returns, we use the data provided by the authors, who collected daily returns from Datastream.\footnote{We thank Amir Kermani for providing the replication code and data on his website.} For all trading days before and after the event, returns represent full trading day returns during regular trading hours. For the event day, returns are calculated from 3:00 p.m. (when the news leaked) until market close at 4:00 p.m. 

We consider two sets of control firms. First, we use the same set of financial firms listed on the NYSE or NASDAQ that are not connected to Geithner, as in \textcite{acemoglu2016value}. Second, we expand the control group to include all NYSE, AMEX, and NASDAQ (exchange codes 1--3) common stocks (share codes 10 or 11).

\begin{table}[thb]
    \caption{\small\textbf{ATT of Treasury Secretary Announcement}
    This table presents average treatment effects after the announcement of Timothy Geithner as Treasury Secretary. Event day 0 is November 21, 2008 from 3pm (when the news leaked) to market closing, consistent with \textcite{acemoglu2016value}. The average treatment effect is estimated using post periods from trading day 0 to day 10. We consider two control samples: banks or financial services firms trading on the NYSE or Nasdaq (Panel A), and all NYSE, AMEX, and NASDAQ common stocks (Panel B). We consider several estimators: difference in simple average, difference-in-differences, synthetic control, synthetic DinD, and generalized synthetic methods. Standard errors of simple average is from a two-sample t-test. Standard errors of DID, synthetic control, and synthetic DID are calculated using placebo inference following \textcite{arkhangelsky2021synthetic} with 100 repetitions. Standard errors of Gsynth is computed using parametric bootstrap with 1,000 samples. Standard errors in parentheses. * p<0.10, ** p<0.05, *** p<0.01 }\label{tab:geithner_att}
    \centering
{
\def\sym#1{\ifmmode^{#1}\else\(^{#1}\)\fi}
\makebox[\textwidth][c]{\begin{tabular}{l*{8}{c}}
\multicolumn{9}{l}{Panel A: Bank Controls}\\
\toprule
& (1) & (2) & (3) & (4) & (5) & (6) & (7) & (8) \\
& Average & DID & Market & CAPM & FF3F & SC & SDID & Gsynth \\
\midrule
Schedule connections & 0.026*** & 0.027*** & 0.024*** & 0.016*** & 0.014*** & 0.016*** & 0.018*** & 0.012** \\
& (0.007) & (0.005) & (0.007) & (0.007) & (0.006) & (0.005) & (0.005) & (0.006) \\
Personal connections & 0.029*** & 0.030*** & 0.027** & 0.016 & 0.013 & 0.004 & 0.009** & 0.008 \\
& (0.010) & (0.006) & (0.012) & (0.011) & (0.011) & (0.003) & (0.005) & (0.007) \\
New York connections & 0.019*** & 0.020*** & 0.017*** & 0.011*** & 0.009*** & 0.009*** & 0.012*** & 0.009** \\
& (0.005) & (0.004) & (0.004) & (0.004) & (0.004) & (0.003) & (0.003) & (0.004) \\
\midrule
Observations & 5,995 & 129,165 & 129,165 & 129,165 & 129,165 & 129,165 & 129,165 & 129,625 \\
\midrule
\multicolumn{9}{l}{Panel B: All Firm Controls}\\
\midrule
& Average & DID & Market & CAPM & FF3F & SC & SDID & Gsynth \\
\midrule
Schedule connections & 0.020** & 0.020*** & 0.024*** & 0.016*** & 0.014*** & 0.004 & 0.009 & 0.001 \\
& (0.008) & (0.007) & (0.007) & (0.007) & (0.006) & (0.007) & (0.006) & (0.008) \\
Personal connections & 0.020* & 0.021*** & 0.027** & 0.016 & 0.013 & -0.003 & 0.006 & 0.003 \\
& (0.010) & (0.006) & (0.012) & (0.011) & (0.011) & (0.005) & (0.005) & (0.008) \\
New York connections & 0.011** & 0.011*** & 0.017*** & 0.011*** & 0.009*** & 0.004 & 0.003 & 0.001 \\
& (0.005) & (0.004) & (0.004) & (0.004) & (0.004) & (0.004) & (0.003) & (0.004) \\
\midrule
Observations & 45,045 & 966,420 & 966,420 & 966,420 & 966,420 & 966,420 & 966,420 & 916,388\\
\bottomrule
\end{tabular}}}
\end{table}

\paragraph{Non-connected banks as controls.}
We first use public financial institutions without connections to Geithner as control firms. Panel A of Table~\ref{tab:geithner_att} reports the average treatment effects over the 11-day post-event window. Column 1 presents the difference in average returns between treated and control firms, implementing the counterfactual as a simple average of returns from non-connected firms—the same approach used in Table 2 of \textcite{acemoglu2016value}. Column 2 reports difference-in-differences estimates. Columns 3--5 present traditional factor model adjustments: the market model (Column 3), CAPM (Column 4), and Fama-French three-factor model (Column 5). Columns 6--8 employ synthetic control methods: standard synthetic control \parencite{abadie2010synthetic}, synthetic difference-in-differences \parencite{arkhangelsky2021synthetic}, and generalized synthetic control (Gsynth) from \textcite{xu2017generalized}. For all models requiring pre-event estimation, we use days $-256$ to $-31$, slightly shorter than the $-280$ to $-31$ window in the original paper to maintain a balanced panel. We report a graphical version of \Cref{tab:geithner_att} in \Cref{fig:geithner_att}.

\begin{figure}[thb]
    \caption{\small \textbf{Connections to Geithner and Returns after Treasury Secretary News.} 
    This figure plots the average treatment effects on the treated from \Cref{tab:geithner_att} after the announcement of Timothy Geithner as Treasury Secretary. Event day 0 is November 21, 2008 from 3pm (when the news leaked) to market closing, consistent with \cite{acemoglu2016value}. The average treatment effect is estimated using returns from trading day 0 to day 10. We consider two control samples: banks or financial services firms trading on the NYSE or Nasdaq (Panel A), and all NYSE, AMEX, and NASDAQ common stocks (Panel B). We consider several estimators: difference in average, difference-in-differences, synthetic control, synthetic DinD, and generalized synthetic methods. Standard errors of difference in average is from a two-sample t-test. Standard errors of DID, synthetic control, and synthetic DID are calculated using placebo inference following \cite{arkhangelsky2021synthetic} with 100 repetitions. Standard errors of Gsynth is computed using parametric bootstrap with 1,000 samples. }\label{fig:geithner_att}
    \begin{subfigure}[t]{0.46\textwidth}
        \centering
        \caption{Panel A: Bank Controls}
        \includegraphics[width=\textwidth]{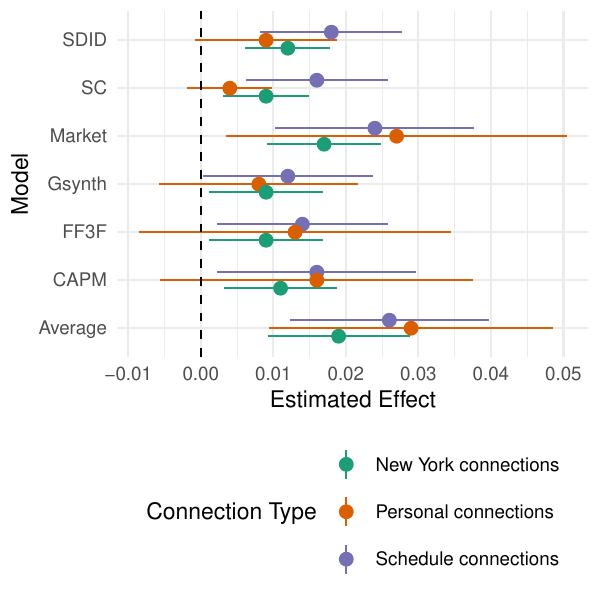}
    \end{subfigure}
        \vspace{1cm}
    \begin{subfigure}[t]{0.46\textwidth}
        \centering
        \caption{Panel B: Full CRSP Controls}
        \includegraphics[width=\textwidth]{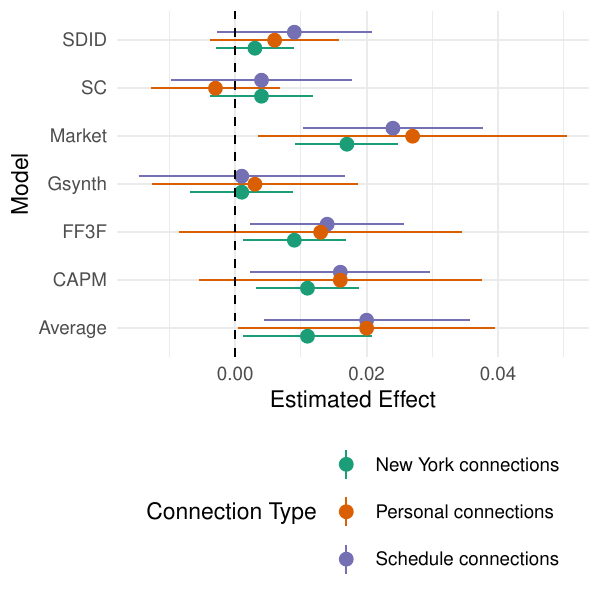}
    \end{subfigure}
\end{figure}

The results reveal a clear pattern. Simple averaging and difference-in-differences (Columns 1--2) show that firms with schedule connections experience 2.6--2.7\% higher cumulative returns, those with personal connections show 2.9--3.0\% higher returns, and firms with New York connections exhibit 1.9--2.0\% higher returns. The market model adjustment (Column 3) produces minimal changes. However, risk-adjusted returns using CAPM and Fama-French models (Columns 4--5) reduce these estimates by approximately 40--50\%, suggesting that connected firms have higher market betas. The synthetic control methods (Columns 6--8) produce even larger reductions, with standard synthetic control reducing schedule connection effects by 38\% and personal connection effects becoming statistically insignificant.\footnote{Our results contrast with \textcite{acemoglu2016value}, who employ synthetic control methods as robustness checks. Their approach was necessarily \textit{ad hoc} given the limited literature at the time on handling multiple treated units in synthetic control settings.}

\paragraph{All public firms as controls.}
We next expand the control group to include all common shares traded on NYSE, AMEX, and NASDAQ, with results reported in Panel B of Table~\ref{tab:geithner_att}. This expansion is motivated by the integration of equity markets: systematic factors should be well-identified using the universe of traded stocks. Restricting controls to financial firms alone may be suboptimal unless banking-specific factors exist that cannot be spanned by the broader market.

The expanded control group dramatically changes the results from synthetic control methods while leaving traditional methods largely unaffected. Simple averaging and difference-in-differences (Columns 1--2) continue to show significant effects of approximately 2\% for all connection types. The factor model adjustments (Columns 3--5) show a similar pattern to Panel A, with the market model producing minimal changes while CAPM and Fama-French adjustments reduce estimates by 30--50\%. 

Strikingly, the synthetic control methods now produce near-zero and statistically insignificant estimates. Standard synthetic control (Column 6) yields point estimates of 0.4\% for schedule connections and $-0.3\%$ for personal connections. Gsynth (Column 8) estimates are particularly close to zero: 0.1\% for schedule connections, 0.3\% for personal connections, and 0.1\% for New York connections—all statistically insignificant. This dramatic difference suggests that the broader control group allows synthetic methods to better match the factor exposures of treated firms, effectively eliminating the estimated treatment effects. The contrast between traditional factor adjustments (which still show significant effects) and synthetic methods (which do not) highlights the importance of allowing flexible, data-driven matching of factor exposures rather than imposing a specific factor structure.

\subsubsection{Market Returns around Event}\label{sec:mktret_geithner}

We now investigate the sources of bias in the original estimates. First, we examine the distribution of market returns during the event window. Figure~\ref{fig:sp500_geithner} displays the kernel density of daily S\&P 500 returns from 1962--2023, overlaid with the realized returns during the 11-day event window. The event period coincides with extraordinary market volatility, with returns falling in the extreme tails of the historical distribution. The market surged 6.6\% on November 21 (day 0) and 6.5\% on November 24 (day 1), while the largest decline of $-8.4\%$ occurred on December 1 (day 5).

These extreme factor realizations have important implications for identification. As demonstrated in Proposition~\ref{prop:consistency}, when treatment occurs simultaneously for all units, abnormal return estimators are particularly sensitive to factor model misspecification. The bias is proportional to both the magnitude of factor realizations and the difference in factor loadings between treated and control firms: $(\beta_s - \hat{\beta}_s)F_t$. Large factor realizations during the event window amplify any misspecification bias arising from imperfect matching of factor exposures.

This mechanism explains the substantial reduction in estimated treatment effects when using synthetic control methods rather than simple averaging. Proposition~\ref{prop:consistency} shows that both  synthetic control and gsynth estimators are asymptotically unbiased even with omitted factors, as they construct control portfolios that match the pre-event factor structure of treated firms without requiring explicit factor model specification. The extreme market conditions during the Geithner announcement thus reveal the importance of proper counterfactual construction in volatile periods.

\begin{figure}[thb]
    \caption{\small \textbf{S\&P 500 Returns around Treasury Secretary Announcement}
    This figure plots the daily returns of S\&P 500 index around the announcement of Timothy Geithner as Treasury Secretary. Event day 0 is November 21, 2008 from 3pm (when the news leaked) to market closing, consistent with \cite{acemoglu2016value}. The blue solid line plots the kernel density function of daily S\&P 500 returns from 1962 to 2023, and the sienna dashed vertical lines are the realization of daily returns in the post periods from trading day 0 to day 10. We label the dates with the largest outliers. The most positive realization is on event days November 21 and 24.}\label{fig:sp500_geithner}
    \includegraphics[width=\linewidth]{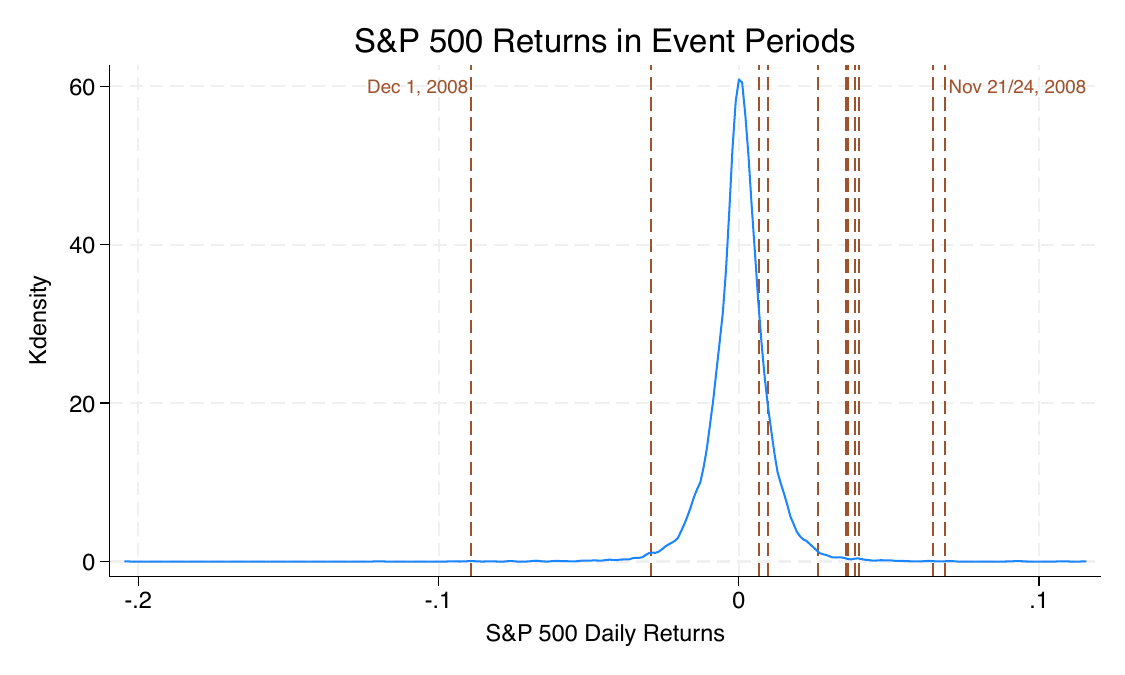}
\end{figure}

\subsubsection{Factor loadings of treated units match synthetic control factor loadings}

We now provide direct evidence on the factor exposure differences between treated and control firms. We estimate market betas using daily returns from day $-280$ to day $-31$ before the event, running firm-level time-series regressions on the S\&P 500 index return for CAPM betas and on the Fama-French three factors for multifactor betas.\footnote{Fama-French factor returns are obtained from Kenneth French's data library: \url{https://mba.tuck.dartmouth.edu/pages/faculty/ken.french/data_library.html}.}

Table~\ref{tab:beta_geithner} reports the weighted average betas for treated and control portfolios. Panel A presents equal-weighted averages for treated firms and two control groups: financial institutions only and all public firms. The results reveal substantial factor exposure mismatches. Treated firms have an average CAPM beta of 1.43, compared to 0.83 for financial controls—a difference of 0.60. The Fama-French three-factor model confirms this pattern: treated firms exhibit a market beta of 1.28 versus 0.66 for controls, with similar disparities in SMB (0.23 vs.\ 0.75) and HML (0.61 vs.\ 0.72) exposures.

These factor loading differences, combined with the extreme market realizations documented in Section~\ref{sec:mktret_geithner}, generate substantial bias in simple difference estimators. During the event window, the 0.60 difference in market beta translates to a bias of approximately $0.60 \times 6.9\% = 4.1\%$ on November 21 alone. This mechanical bias explains much of the estimated effect found using naive averaging methods.

We report the (weighted) average of betas of treated and control firms in Table \ref{tab:beta_geithner}. First, in Panel A, we first show the average CAPM and Fama-French three-factor betas of the treated firms and equal-weighted averages of financial firm controls and all public firm controls. The average CAPM beta of the treated firms is 1.43, much higher than 0.83 from the control firms. Expanding to a three-factor model, we still see a higher market beta in treated firms. Given these mismatches of treated and control betas, together with turmoil market returns, as shown in Section \ref{sec:mktret_geithner}, could lead to large biases in average treatment effects by comparing treated versus control firms.

\begin{table}[thb]
    \centering
    \caption{\small\textbf{Treated and Control Betas in Geithner as Treasury Secretary}
    This table presents the average CAPM and Fama-French three-factor betas for the treated and control firms. We first estimate firm-level betas using daily stock returns from 280 to 30 days before the announcement of Timothy Geithner as Treasury secretary on Nov 21, 2008. We then average the betas within the treated firms and two control samples: banks or financial services firms trading on the NYSE or Nasdaq, and all NYSE, AMEX, and NASDAQ common stocks. In Panel A, we show the simple average of treated firms and two control firms, and in Panel B, we calculate weighted average beta using weights from various synthetic methods: synthetic control and synthetic DinD.}\label{tab:beta_geithner}
    \begin{tabular}{lccc}
    \multicolumn{4}{l}{Panel A: Simple Averages}\\
\toprule
                 & Treated & Control & Control (All CRSP) \\
                 \midrule
CAPM Beta        & 1.427   & 0.825   & 0.832              \\
FF3F Market Beta & 1.275   & 0.659   & 0.857              \\
FF3F Size Beta   & 0.233   & 0.748   & 0.553              \\
FF3F Value Beta  & 0.607   & 0.720   & 0.144             \\
\bottomrule
\end{tabular}

\begin{tabular}{lcccc}
\multicolumn{4}{l}{Panel B: Weighted Averages with Synthetic Methods}\\
\toprule
                 & \multicolumn{2}{c}{Bank Controls} & \multicolumn{2}{c}{All CRSP Controls} \\
                 & SC              & SDID            & SC                & SDID              \\
                 \midrule
CAPM Beta        & 1.331           & 1.111           & 1.383             & 1.281             \\
FF3F Market Beta & 1.148           & 0.905           & 1.220             & 1.165             \\
FF3F Size Beta   & 0.480           & 0.819           & 0.377             & 0.627             \\
FF3F Value Beta  & 0.750           & 0.872           & 0.674             & 0.593            
\\
\bottomrule
\end{tabular}
\end{table}

In Panel B, we compute the weighted average betas of control firms using synthetic control weights, with both standard synthetic control and synthetic difference in differences. First, we see that synthetic methods match the beta in the treated firms  well. For example, the synthetic control gives a weighted average beta of 1.33, much closer to the treated beta of 1.43 than the equal-weighted average. Fama-French three-factor betas of the treated firms are 1.28 on the market, 0.23 on SMB, and 0.61 on HML, and synthetic control weights give a market beta of 1.15, SMB beta of 0.48, 0.75 ( closer than 0.66, 0.75, and 0.72 with simple average). Second, if we extend the set of possible control firms from financial firms in \textcite{acemoglu2016value} to all public firms in CRSP, we obtain better matches across all synthetic methods. For synthetic control specifically, controlled firms give an average beta of 1.38, closer to 1.43 in the treated firm. There is also a significant improvement in matching the Fama-French three-factor betas, synthetic control betas are 1.22, 0.38, and 0.67 (compared to treated betas of 1.28, 0.23, and 0.61). Finally, standard synthetic control methods give slightly better weights than synthetic difference-in-differences, who is more directly related to a mimicking portfolio approach.

Overall, synthetic methods matches the beta of treated firms well, which results in a lower bias in the average treatment effects.


\subsection{Empirical Example 2: Index Inclusion}
\label{sec:index_inclusion}

We next examine S\&P 500 index inclusion announcements, analyzing both immediate announcement returns and pre-announcement price dynamics to test our theoretical predictions regarding identification in staggered event settings.

We first demonstrate that in staggered event settings, announcement-day bias is negligible because factor returns on event days average close to zero, particularly when compared to the large treatment effects of 3--4\%. However, consistent with \textcite{Greenwood2025Index}, we document substantial pre-announcement drift. Synthetic control methods that match on pre-event returns nearly eliminate this drift. This pattern is consistent with selection on unobserved factors: firms added to the index differ systematically from control firms along dimensions not captured by observable factors, generating apparent pre-event "drift" that actually reflects factor model misspecification.

\paragraph{Empirical setting.}
Following \textcite{Greenwood2025Index}, we obtain index inclusion dates from Siblis Research and match tickers to CRSP PERMNOs using header information. Siblis provides announcement dates for S\&P 500 additions. For the period September 1976 through September 1989, when announcement dates are missing, we exploit the institutional detail that index changes were announced after Wednesday market close and became effective the following day, allowing us to infer announcement dates.\footnote{During this period, S\&P followed a predictable schedule of announcing changes after Wednesday close for Thursday implementation.} We measure returns on the announcement date when it falls on a trading day; otherwise, we use the most recent prior trading day.

To assess whether event timing can be treated as random, we examine the distribution of factor returns on announcement days.  Appendix \Cref{fig:factor_balance_index}, Panel A shows that the distribution of daily market returns on S\&P 500 index inclusion announcement days is virtually indistinguishable from the distribution on non-announcement days. This pattern holds consistently across our entire sample period, from 1980--1989 through 2010--2020. The small-minus-big (SMB) factor exhibits similar distributional stability (Panel B). These results support treating announcement timing as conditionally random with respect to factor realizations, satisfying a key identification assumption for our short-horizon analysis.

Table~\ref{tab:beta_index} reports CAPM and Fama-French three-factor betas for firms added to the S\&P 500 index, estimated using daily returns from days $-250$ to $-100$ relative to announcement.\footnote{We exclude the immediate pre-announcement period to avoid contamination from potential information leakage.} We present results separately by decade from 1980 through 2020 to examine temporal variation in the characteristics of included firms.

\begin{table}[tbhp]
    \centering
    \caption{\small \textbf{Beta Distributions of Included Firms across Decades} 
    This table presents the average CAPM and Fama-French three-factor betas for firms included in S\&P 500, compared with a random set of control firms of the same sample size. For each treated firm and inclusion date, we randomly pick a non-treat firm in CRSP sample with common share in NYSE, NASDAQ, or AMEX, which at least 250 trading days of returns before the announcement date. We then estimate firm-level betas using daily stock returns from 250 to 100 days before the announcement of inclusions into S\&P 500 index. We provide the summary statistics for the distribution of betas of included firms, separately for each decade.} \label{tab:beta_index}     
\small 
\begin{tabular}{lcccc}
\toprule
               & \multicolumn{2}{c}{Treated} & \multicolumn{2}{c}{Random Control} \\
               & Mean          & Std         & Mean             & Std             \\
   \midrule
\multicolumn{5}{l}{\textbf{Panel A: 1980-1989}} \\
CAPM Beta      & 0.961         & 0.523       & 0.582            & 0.551           \\
FF3F Mkt Beta  & 1.108         & 0.539       & 0.854            & 0.784           \\
FF3F SMB Beta  & 0.558         & 0.604       & 0.815            & 1.044           \\
FF3F HML Beta  & -0.148        & 0.987       & 0.021            & 1.188           \\
\multicolumn{5}{l}{\textbf{Panel B: 1990-1999}} \\
CAPM Beta      & 1.025         & 0.660       & 0.651            & 0.754           \\
FF3F Mkt Beta  & 1.171         & 0.660       & 0.873            & 0.911           \\
FF3F SMB Beta  & 0.489         & 0.661       & 0.805            & 1.215           \\
FF3F HML Beta  & -0.015        & 1.242       & 0.022            & 1.475           \\
\multicolumn{5}{l}{\textbf{Panel B: 2000-2009}} \\
CAPM Beta      & 1.087         & 0.697       & 0.824            & 0.985           \\
FF3F Mkt Beta  & 1.079         & 0.560       & 0.820            & 0.688           \\
FF3F SMB Beta  & 0.271         & 0.674       & 0.667            & 0.929           \\
FF3F HML Beta  & -0.002        & 1.227       & 0.075            & 1.482           \\
\multicolumn{5}{l}{\textbf{Panel D: 2010-2020}} \\
CAPM Beta      & 1.060         & 0.388       & 0.973            & 0.997           \\
FF3F Mkt Beta  & 1.026         & 0.343       & 0.872            & 0.614           \\
FF3F SMB Beta  & 0.225         & 0.520       & 0.628            & 1.201           \\
FF3F HML Beta  & -0.273        & 0.590       & 0.311            & 1.272          \\
\bottomrule
\end{tabular}
\end{table}

Across all decades, the average market beta of included firms is approximately one.  When treated firms have market betas near unity, the simple market-adjusted return (which implicitly assumes $\beta = 1$) yields similar results to the more sophisticated CAPM adjustment that estimates firm-specific betas. This convergence occurs because the bias term $(1 - \beta_i) \times r_{m,t}$ approaches zero when $\beta_i \approx 1$, consistent with the theoretical predictions in \Cref{thm:bias}.

The combination of two empirical regularities—random event timing with respect to factor realizations and limited selection on factor loadings—suggests that short-horizon abnormal return estimates should exhibit minimal bias regardless of the specific factor model employed. This prediction from Theorem~\ref{thm:bias} finds strong empirical support in Table~\ref{tab:index_include_att}, where announcement-day treatment effects are remarkably stable across estimation methods. The difference between simple market adjustment and sophisticated synthetic control methods is less than 0.2 percentage points in most decades, confirming that model specification has negligible impact on short-horizon estimates when the conditions of Theorem~\ref{thm:bias} are satisfied.

\begin{table}[thb]
    \centering
    \caption{\small \textbf{Announcement-Day Treatment Effects of Index Inclusion}
    This table presents average treatment effects on the announcement days of index inclusion, averaged across inclusions for each decade. We consider several estimators: difference in simple average, CAPM, Fama-French 3-factor, and gsynth. The estimation window of factor loadings are from -250 to -101 before the announcement dates. }\label{tab:index_include_att}

\begin{tabular}{lccccc}
\toprule
          & Diff-in-Means & Market & CAPM   & FF3F   & Gsynth \\
          \midrule
1980-1989 & 3.27\%        & 3.25\% & 3.15\% & 3.05\% & 3.06\% \\
1990-1999 & 4.61\%        & 4.62\% & 4.69\% & 4.71\% & 4.79\% \\
2000-2009 & 3.42\%        & 3.43\% & 3.33\% & 3.22\% & 3.41\% \\
2010-2020 & 1.14\%        & 0.94\% & 0.85\% & 0.85\% & 0.93\% \\
\bottomrule
\end{tabular}
\end{table}

\subsubsection{Pre-inclusion Drift}

While \Cref{thm:bias} predicts negligible bias in short-horizon studies, it also implies that long-horizon estimates may suffer from substantial bias unless factor exposures are correctly specified. We now examine the "pre-announcement drift" documented by \textcite{Greenwood2025Index}, analyzing it decade by decade as a manifestation of potential long-horizon bias.

Interpreting pre-announcement price movements requires careful consideration of \Cref{assn:limited_anticipation}, our limited anticipation assumption. This assumption is particularly tenuous in the index inclusion setting for two reasons. First, market participants have incentives to anticipate market index changes. Second, as \textcite{Greenwood2025Index} document, inclusion is partially predictable: firms with market capitalizations just below the S\&P 500 cutoff face substantially higher inclusion probabilities than other firms. This predictability complicates the identification of treatment effects, as observed pre-announcement returns may reflect either genuine anticipation (violating \Cref{assn:limited_anticipation}) or selection on unobserved characteristics that drive both inclusion probability and returns.

To disentangle these effects, we pursue a two-pronged empirical strategy. First, we implement propensity score matching based on observable firm characteristics to account for selection on observables. Second, we employ synthetic control methods that match on pre-event returns, effectively controlling for unobserved factors that drive both selection and returns. The difference between these two approaches helps identify whether pre-announcement drift reflects anticipation or factor model misspecification.

Index inclusion predictability operates along two dimensions: the timing of additions (when inclusions occur) and the cross-section of selections (which firms are added). While ideally we would model both, we focus on cross-sectional predictability by estimating inclusion propensities based on observable firm characteristics. Specifically, we estimate annual logistic regressions:
\begin{equation}
    \mathbf{1}(\text{Added})_{i,y,m} = \alpha_{y} + \beta_{y} \cdot \text{MktCapRank}_{i,y,m-1} + \varepsilon_{i,y}
\end{equation}
where $\text{MktCapRank}_{i,y,m-1}$ is firm $i$'s market capitalization rank at the end of month $m-1$, and inclusion occurs in month $m$ of year $y$. Consistent with \textcite{Greenwood2025Index}, we find increasing predictability over time, with recent decades showing stronger relationships between lagged size and inclusion probability.

Using these propensity scores, we construct matched control groups via nearest-neighbor matching, creating portfolios of "pseudo-included" firms with similar inclusion probabilities but no actual inclusion. Under the assumption that selection between observationally equivalent firms is quasi-random, differences between included and pseudo-included firms should primarily reflect the causal effect of inclusion rather than selection bias.

To address selection on unobservables, we additionally implement the generalized synthetic control method of \textcite{xu2022gsynth}. For each announcement date, we estimate factor loadings using returns from days $-250$ to $-101$, deliberately excluding the immediate pre-announcement period where anticipation effects may contaminate estimation. We then construct synthetic control portfolios that match the pre-event return dynamics of included firms, examining the period from day $-100$ to $-15$.

This dual approach yields three distinct counterfactuals for cumulative abnormal returns (CARs): (i) simple market adjustment as in \textcite{Greenwood2025Index} (we also do CAPM and FF3F adjustments for completeness, but do not subtract $\alpha$ for reasons that will be clear shortly) (ii) propensity score-matched pseudo-included firms that control for selection on observables, and (iii) synthetic controls that account for selection on unobserved factors. Comparing these counterfactuals from day $-100$ through the announcement date allows us to decompose pre-announcement drift into components attributable to observable characteristics versus unobserved factor exposures. If drift persists after propensity score matching but disappears with synthetic controls, this would suggest that unobserved factors—rather than anticipation based on observables—drive the pre-announcement returns.

\begin{figure}[thb]
    \caption{\small\textbf{Cumulative abnormal pre-addition returns}
    This figure plots the average cumulative abnormal returns following index inclusion announcements in event time, averaged across inclusions for each decade. We use several definitions of abnormal returns with different counterfactual returns. Solid lines plot abnormal returns with S\&P 500 market returns, dashed lines plot abnormal returns with a propensity-score-matched counterfactual firm on lagged market cap rank, and dotted lines plot abnormal returns with synthetic portfolios from the generalized synthetic method \parencite{xu2022gsynth}. The returns are normalized to start at zero, 100-trading days before the announcement.}\label{fig:ii_psm_gsynth}
    \begin{center}  \includegraphics[width=\textwidth]{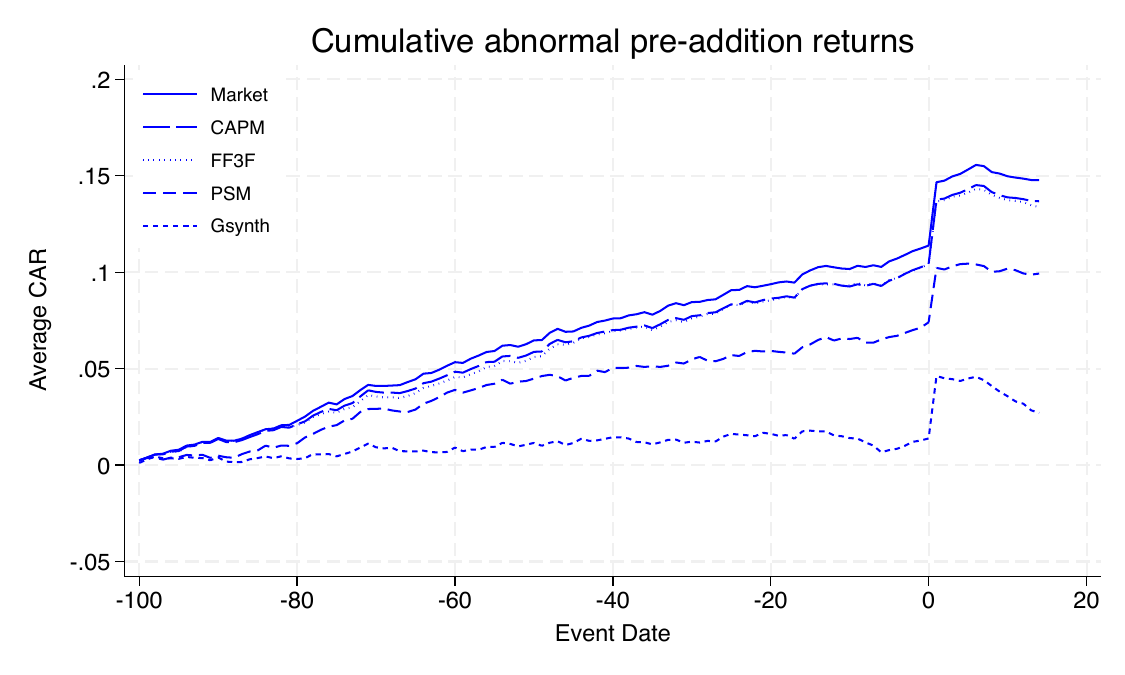}\end{center}
\end{figure}

First, we find the pre-announcement drift as estimated by either the propensity score matched difference, or by the Gsynth approach drops significantly when compared to the market adjusted method. The effectiveness of Gsynth is quite striking in this setting, and suggests that longer-run cumulative effects can be substantially biased. What can explain the differences identified between these estimated methods? In Appendix \Cref{fig:preindexcumulret}, we show that there is a substantial drift in our known factors across most decades. Considering the positive loadings in \Cref{tab:beta_index}, this suggests that the counterfactual return needs to sufficiently account for any and all potential unobserved factors driving the expected returns to avoid this bias highlighted in \Cref{thm:bias}.

\begin{figure}[H]
    \centering
    \caption{Per-Period and Cumulative ATT with factor models, gsynth, and synthetic methods}
    \label{fig:pretrend}
    \begin{tabular}{cc}
    Panel A: Per-Period ATT & Panel B: Cumulative ATT\\
    \includegraphics[width=0.49\linewidth]{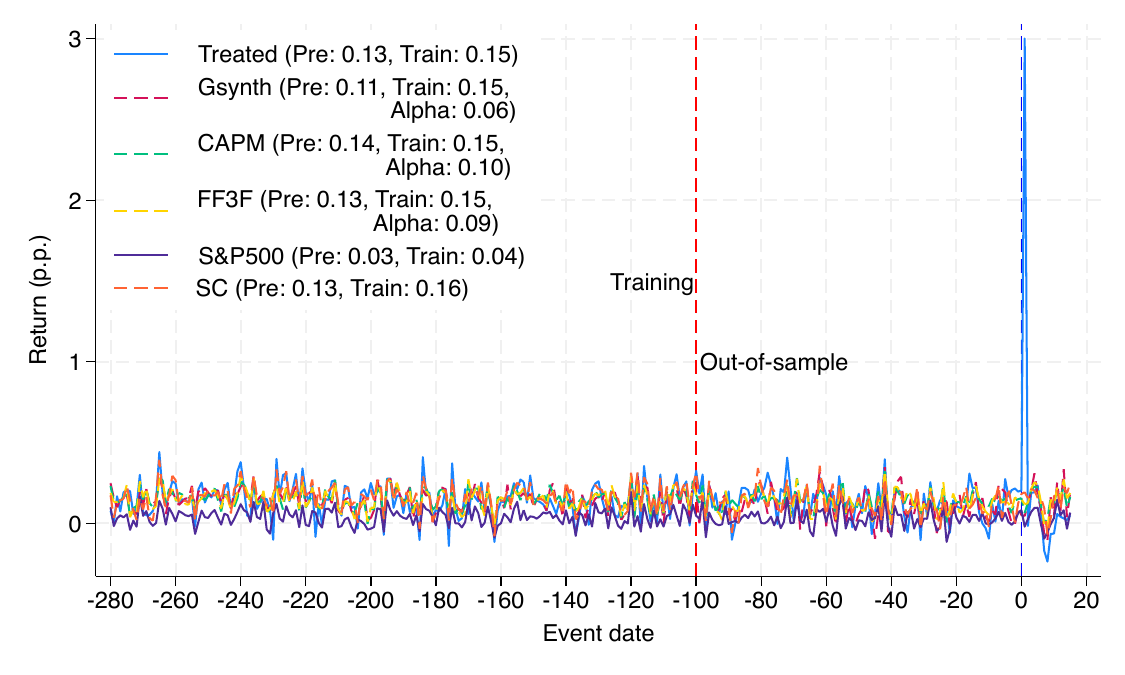} &   \includegraphics[width=0.49\linewidth]{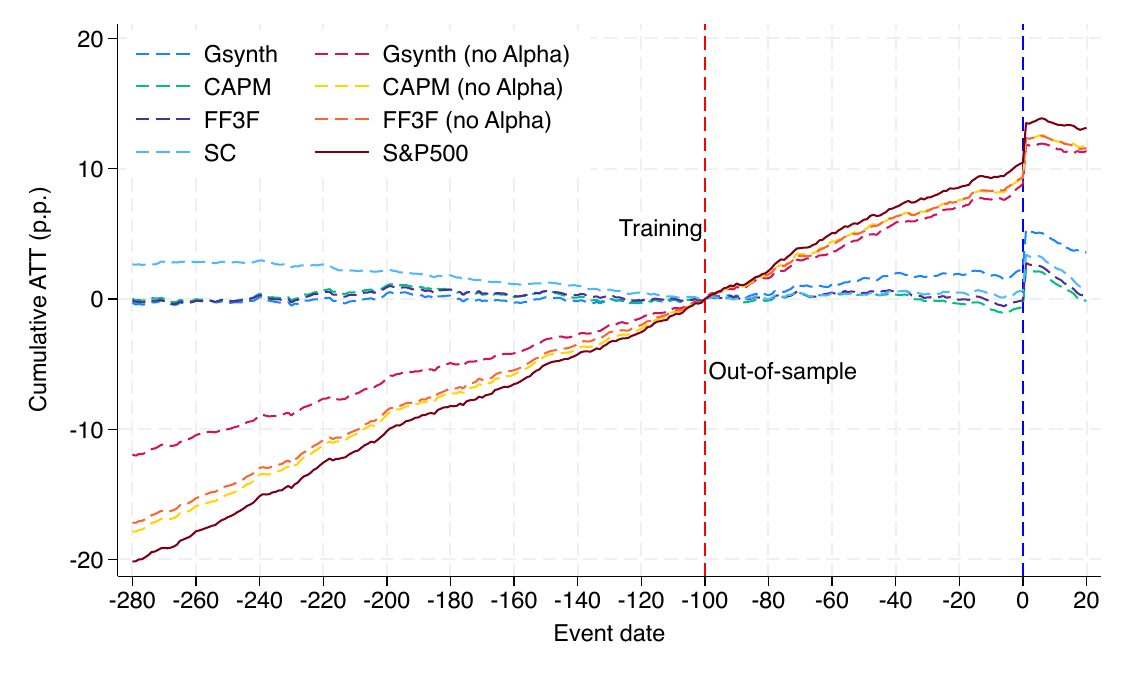}
    \end{tabular}
\end{figure}

Capturing all potential unobserved factors is not easy, however. In \Cref{fig:pretrend}, Panel A, we plot the average return in event time for the treated group, and then our five counterfactuals. We report the average daily return for each group, and note a 0.13 p.p. daily return for the included firms (\emph{prior} to inclusion), an unusually high daily return. In contrast, the S\&P500 has a daily return of only 0.03 p.p. during this period. This suggests that the included firms are quite unusual. Recall that if our factor choice in a linear model sufficiently spans the risk factors, we should estimate an alpha of zero, even in the presence of positive average returns.  Our counterfactual models do an excellent job of matching the average return in the training and pre-periods. However, for the CAPM and FF3F, more than half of the average predicted return comes from just the intercept, $\alpha$. For gsynth, the estimated $\alpha$ is less than half, but still 0.06 p.p. Strikingly, the synthetic control counterfactual, which only takes a positive weighted average of control firms and does not include a constant, matches the pre-period return closely.  

How should we interpret the estimated alpha in this linear factor models? There is presumably two components in an estimated factor model's alpha, true alpha, and model error:
\begin{equation}
    \hat{\alpha} = \underbrace{\alpha}_{\text{true }\alpha} + \underbrace{\beta^{unobs} E(F_{t}^{unobs} | t \in \text {estimation window})}_{\text{Misspecifaction}},
\end{equation}
where the model misspecification captures the return premium over this period that is not included in our model. In this setting, we view true alpha as zero, especially 280 days prior to the inclusion event. As a result, the positive alpha likely suggests model misspecification. The implications of this misspecification depend on the stability of this misspecification term. In Panel B of \cref{fig:pretrend}, we see that the inclusion of alpha ensures that the various linear factor models do as well the synthetic control method in removing almost all pre-inclusion drift. This suggests that the trend beforehand is not due to front-running, but instead differential return profiles for included stocks.  However, failing to include alpha for the CAPM, FF3F  and gsynth fail to remove the pre-inclusion drift.


\subsection{Empirical Example 3: Mergers and Acquisitions}

\subsubsection{Empirical Setting}

We examine acquirer returns around merger announcements using deal data from SDC Platinum. Following \textcite{malmendier2018behavioral} and \textcite{savor2009stock}, we implement several sample restrictions to ensure clean identification. We require targets to be classified as ``Public,'' ``Private,'' or ``Subsidiary'' and restrict to completed deals with all-cash or all-stock payment structures, as mixed consideration complicates the interpretation of market-timing effects. To ensure economic materiality, we require the target's pre-announcement market value to exceed 5\% of the acquirer's market capitalization. We exclude repurchases, self-tenders, and minority stake purchases by requiring deal types to be ``Disclosed Dollar Value'' or ``Undisclosed Dollar Value,'' and mandating that acquirers hold less than 50\% of the target six months before announcement.

We match acquirers to CRSP using six-digit CUSIPs, restricting to U.S. common shares (share codes 10 or 11) traded on NYSE, NASDAQ, or AMEX. Our event window spans days $-280$ to $+250$ relative to announcement. For the 20\% of deals announced on non-trading days, we define $t = 0$ as the next trading day. Control firms comprise all CRSP-listed firms without contemporaneous merger announcements that have complete returns data over the event window. Our final sample contains 14,847 merger events across 6,625 unique dates, providing substantial variation in event timing for identification.

\subsubsection{Short-Term Announcement Returns}

To assess whether event timing can be treated as random, we examine the distribution of factor returns on announcement days.  Appendix \Cref{fig:factor_balance_index} shows that the distribution of daily market returns on S\&P 500 index inclusion announcement days is virtually indistinguishable from the distribution on non-announcement days. The small-minus-big (SMB) factor exhibits similar distributional stability (unreported) These results support treating announcement timing as conditionally random with respect to factor realizations, satisfying a key identification assumption for our short-horizon analysis.

Table~\ref{tab:beta_ma_sl_m} reports CAPM and Fama-French three-factor betas for firms with merger announcements, estimated using daily returns from days $-250$ to $-100$ relative to announcement.\footnote{We exclude the immediate pre-announcement period to avoid contamination from potential information leakage.} We also examine how the betas change following the announcement as well, and show that there are statistically significant changes after announcement, but they are small economically.

We first examine three-day announcement returns $[-1, +1]$ to test whether short-horizon estimates are robust to model specification, as predicted by \Cref{thm:bias}. We compare two approaches: market-adjusted returns using the CRSP value-weighted index (including distributions) following \textcite{malmendier2018behavioral}, and gsynth estimates using the generalized synthetic control method of \textcite{xu2017generalized}.

For the synthetic control approach, we estimate a separate model for each of the 6,625 event dates, treating all firms announcing mergers on that date (typically one or two firms) as the treatment group. We construct factor loadings using returns from days $-280$ to $-31$, excluding the immediate pre-announcement period to avoid contamination. Control firms consist of all CRSP securities without merger announcements that satisfy our data requirements.

\Cref{tab:merger_car} reports cumulative abnormal returns by target type and payment method. Consistent with our theoretical predictions, the difference between market-adjusted and synthetic control estimates is economically negligible—less than 10 basis points in most specifications. This robustness to model choice confirms that short-horizon merger announcement effects are identified regardless of the specific factor adjustment employed.

\begin{table}[H]\centering
\caption{Average three day cumulative p.p.}
\label{tab:merger_car}
\footnotesize
\makebox[\textwidth][c]{
\begin{tabular}{lcccccc}
\toprule
 & Full sample & Public targets & Private targets & Other targets & Cash merger & Stock merger \\
\midrule
Market mean  & 0.8 & -1.2 & 1.3 & 1.6 & 1.1 & 0.4 \\
Gsynth mean  & 0.7 & -1.3 & 1.0 & 1.5 & 1.0 & 0.2 \\
\midrule
Count & 14,847 & 3,297 & 7,030 & 4,520 & 9,261 & 5,592 \\
\bottomrule
\end{tabular}}
\end{table}

In \Cref{fig:att_ma}, we plot the daily returns for the treated firms and the various control returns. Similar to the preannoucnment drift for the S\&P index inclusion, the treated firm has significant daily returns prior in the year prior to the announcement, with 0.14 p.p. daily average returns (contrast with the market having 0.05 p.p. daily average returns). Again, the CAPM and FF3F models have significant alpha (0.08 p.p.), while gsynth has remarkably small alpha (0.02 p.p.). However, both gsynth and synthetic control do a poorer job matching the overall pre-period return, with 0.12 p.p. and 0.19p.p. returns respectively, relative to 0.14 for the treated group. 

\begin{figure}[th]
    \centering
    \caption{Per-Period Treated and Counterfactual Returns by event date. Event date$=[-120,250]$}
    \label{fig:att_ma}
    \includegraphics[width=0.8\linewidth]{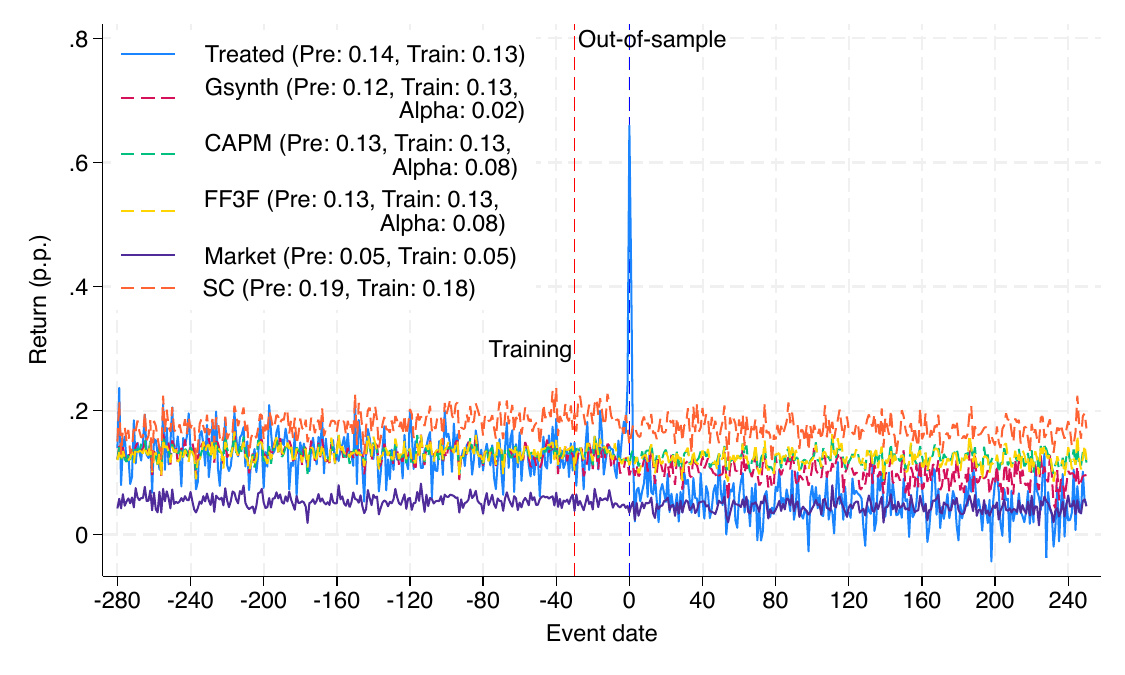}
\end{figure}

The path of the daily return line for the treated group spikes significantly on the event date, and then declines precipitously to a new steady state. It is worth remarking that within 30 days the event announcement, the treated firms' returns appear to line up almost exactly with the market returns. This is suggestive that there is a structural shift in the underlying return performance to these acquiring firms, perhaps due to change in true alpha, or perhaps due to factor loadings. 

\begin{figure}[th]
    \centering
    \caption{Cumulative ATT by event date. Event date$=[-120,250]$}
    \label{fig:cumulative_att_ma}
    \includegraphics[width=0.8\linewidth]{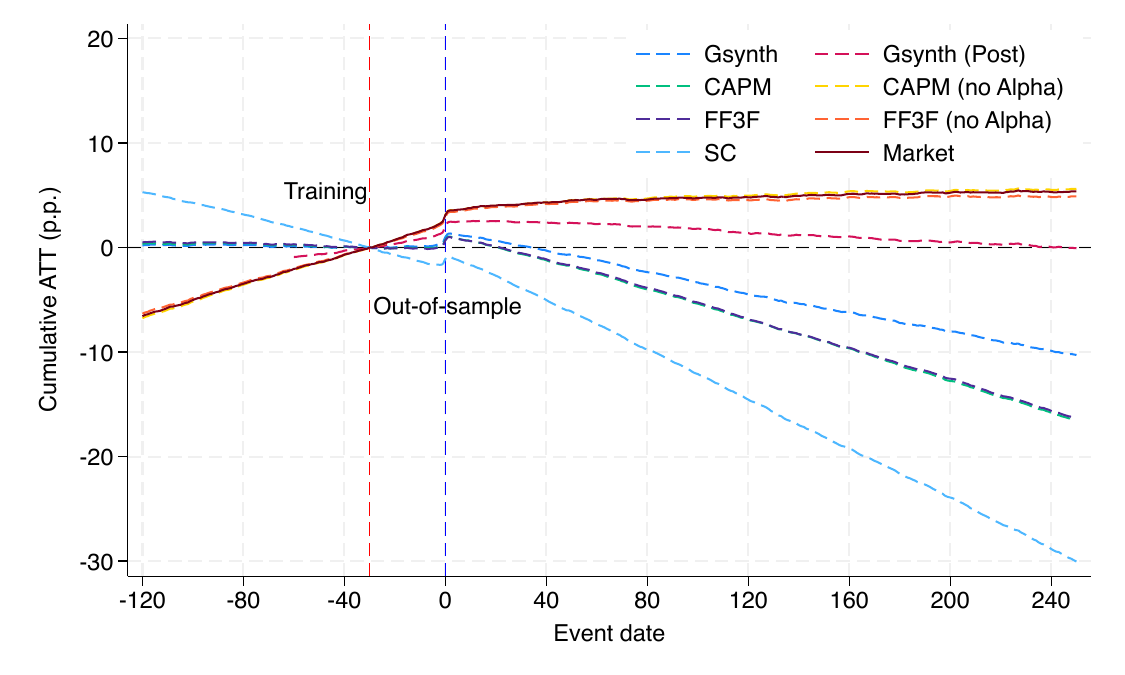}
\end{figure}

In \Cref{fig:cumulative_att_ma}, we see the long-run implications of these alphas in in that the difference counterfactual predictions have wildly different long-run cumulative ATT a year after the event. In the literature, the presence of a negative post-acquisition event is often pointed to as evidence in favor of \textcite{shleifer2003stock}, but the modeling assumptions to make these types of assessments seem quite strong. This type of analysis also has implications for papers studying over- and under-reaction in the stock market. 

What can we do to deal with this specification error? A crucial alternative is to examine the effect in a setting where treatment is as-if randomly assigned. 

\subsection{Empirical Example 4: Close Merger Contests as a Quasi-Experimental Benchmark}

Our previous empirical examples rely on model-based identification strategies that require correct specification of the factor structure. To validate these approaches, we now examine a setting with quasi-experimental variation: close merger contests where assignment to treatment (winning the contest) is plausibly random conditional on observables. \textcite{malmendier2018winning} show that in protracted bidding contests, winners and losers are ex ante similar firms competing for the same target, making losers natural counterfactuals for winners.

This setting provides a unique opportunity to assess the performance of different estimators. Since losing bidders offer a design-based counterfactual, we can compare our model-based estimates (market adjustment, factor models, synthetic controls) against this quasi-experimental benchmark. Agreement between model-based and design-based estimates would validate our econometric approaches; divergence would suggest specification problems in the model-based methods.

\subsubsection{Empirical Setting}

Following \textcite{malmendier2018winning}, we analyze close merger contests defined as those with above-median duration.\footnote{We thank the authors for providing data on winning and losing bidders, announcement and completion dates, and contest duration.} Protracted contests involving multiple rounds of bids and counterbids suggest that participants had similar ex ante winning probabilities, supporting the identifying assumption that contest outcomes are quasi-random.

We construct event-time at the monthly frequency, with $t = 0$ marking the month-end before the initial bid announcement. The pre-contest period spans months $t = -35$ to $t = 0$. The contest period ($t = 1$) encompasses all months from initial bid through completion, averaging 361 days in our sample. The post-merger period runs from $t = 2$ to $t = 36$. This structure accommodates contests of varying duration while maintaining a consistent event-time framework.

We match contest participants to CRSP monthly returns, filling missing observations with market returns following \textcite{malmendier2018winning}. For synthetic control estimation, we augment the sample with all CRSP common shares (share codes 10 or 11) traded on NYSE, NASDAQ, or AMEX that have complete returns over the event window. This expanded control group allows the synthetic control algorithm to construct appropriate counterfactuals even when losing bidders may themselves be poor matches due to contest-specific shocks affecting all participants.

To check how well the as-if random counterfactual losing bidders match to winners, In \Cref{tab:beta_close_contest}, we compare the risk exposures on different factors of winners and losers. We see that these two groups are relatively similar, although the losers have slightly higher HML beta than winners.

\begin{table}[thbp]
  \caption{\small \textbf{Beta Distributions of Winners and Losers in Close Merger Contests} 
    This table presents the average CAPM and Fama-French three-factor betas for winner and losers in close contest mergers. We estimate firm-level betas using daily stock returns using event month -35 to -1 before the start of the contest. We provide the mean and median of CAPM market beta and betas in Fama-French three-factor model. We test if the betas of treated and control firms are statistically different using a two-sided t-test with \textcite{welch1947generalization} approximation.} \label{tab:beta_close_contest}  
      \centering
    \begin{tabular}{lccccc}
\cmidrule{1-6}          & \multicolumn{2}{c}{Winner} & \multicolumn{2}{c}{Loser} & \multicolumn{1}{l}{Mean t-test} \\
          & \multicolumn{1}{l}{Mean} & \multicolumn{1}{l}{Median} & \multicolumn{1}{l}{Mean} & \multicolumn{1}{l}{Median} & \multicolumn{1}{l}{Loser - Winner} \\
    \midrule
    CAPM Beta & 1.148 & 0.922 & 1.114 & 0.968 & -0.034 \\
    FF3F Mkt Beta & 1.157 & 0.945 & 1.172 & 1.010 & 0.015 \\
    FF3F SMB Beta & 0.369 & 0.291 & 0.232 & 0.168 & -0.138 \\
    FF3F HML Beta & -0.055 & -0.171 & 0.313 & 0.393 & 0.368* \\
    \bottomrule
    \end{tabular}%
\end{table}%

In \Cref{fig:cum_att_ma_random}, we plot the cumulative ATTs for the winners relative to our various controls. Our benchmark is the ``Loser'' control, in solid blue. We see that in the pre-period, there is reasonable balance between the two groups, and then a small but significant decline following the announcement, suggesting a negative effect. Notably, this counterfactual has the smallest and least trending of the different counterfactuals. The only alternative model-based portfolio that is meaningfully close is the Gsynth control.  

\begin{figure}[H]
    \centering
    \caption{Cumulative ATT by event date. Event month$=[-35,36]$}
    \label{fig:cum_att_ma_random}
    \includegraphics[width=0.9\linewidth]{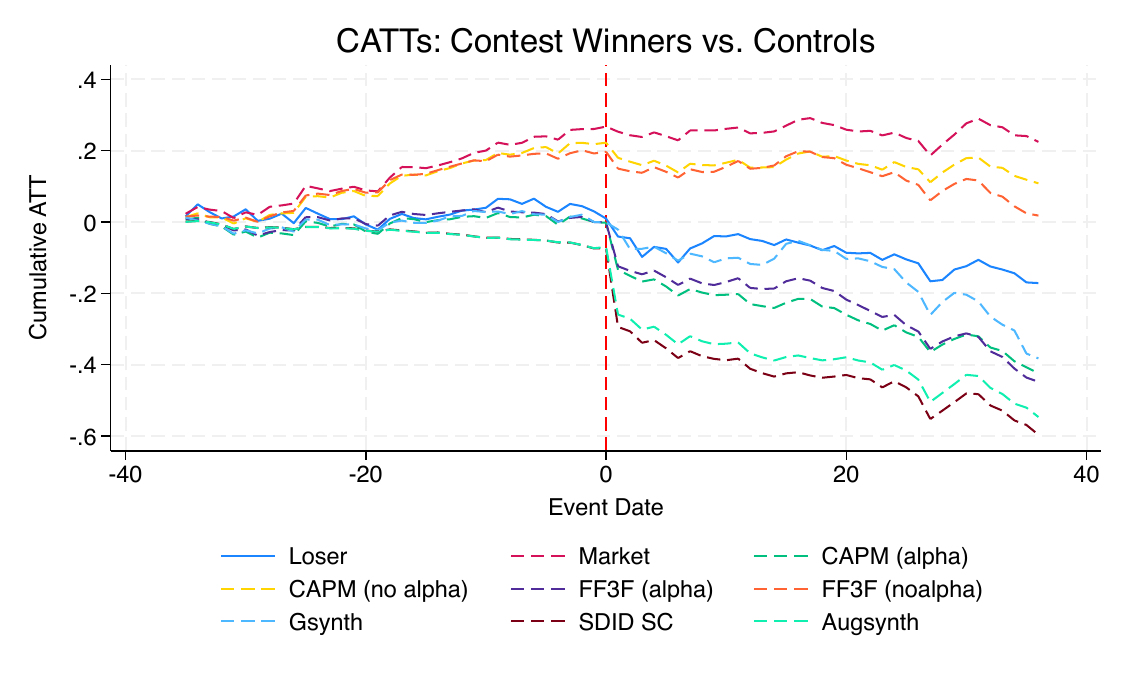}
\end{figure}

It is worth noting that the synthetic control methods have much larger declines during the merger battle, and afterwards as well. The abnormal return models fit well in the pre-period, but then predict significant and continuing declines in cumulative returns. Overall, this evidence suggests that most of the models (especially abnormal return) do poorly in the longer run, although gsynth is the exception.




\section{Conclusion}
This paper brings modern causal inference techniques to financial event studies, highlighting important limitations in standard approaches while providing constructive solutions. We demonstrate that traditional abnormal return estimators face inconsistency problems due to factor model misspecification -- a concern that becomes particularly severe in long-horizon analyses where small daily biases accumulate substantially over time.

While staggered event timing helps mitigate these issues in short-horizon studies by averaging out factor realizations, this solution proves inadequate for long-horizon analyses. The key insight is that misspecification bias compounds over longer horizons, regardless of how events are distributed across time.

Synthetic control methods offer a promising alternative by directly modeling counterfactual security paths without requiring correct specification of the underlying factor structure. Our empirical applications to political connections during market turbulence and S\&P 500 index inclusions convincingly demonstrate the practical value of these methods.

Our findings suggest that many influential results based on long-horizon event studies may reflect factor model misspecification rather than genuine causal effects. We recommend that researchers employ synthetic control methods as a robust complement to traditional approaches, particularly when studying extended price responses or when events occur during periods of high market volatility.

\newpage
\printbibliography

\clearpage
\clearpage


\clearpage
\appendix
\counterwithin{figure}{section}
\counterwithin{table}{section}
\section{Proofs}
\label{appsec:proofs}
\begin{proof}[Proof of Proposition~\ref{prop:consistency}]
Let $n_s = \#\{i : T_i = s\}$ and $n_c = \#\mathcal{C}$ denote the number of treated and control securities, respectively.  
Throughout the proof we maintain Assumptions~\ref{assn:factor_model} and~\ref{assn:limited_anticipation} and impose the following regularity conditions:

\begin{itemize}
    \item[(i)] For each $i,t$, the idiosyncratic component $\varepsilon_{it}$ satisfies
    \[
    \mathbb{E}(\varepsilon_{it} \mid \mathbf{F}_t, \mathbf{X}_i, T_i) = 0, 
    \quad 
    \mathbb{E}(\varepsilon_{it}^2) < \infty.
    \]
    \item[(ii)] Within each cohort $s$ and the control group $\mathcal{C}$, $\{\varepsilon_{it}\}$ are independent across $i$ (for fixed $t$) with uniformly bounded second moments, so that the law of large numbers applies to cross-sectional averages.
    \item[(iii)] The usual OLS regularity conditions hold for the pre-treatment regressions used to construct the abnormal-returns estimator (e.g.\ fixed $K$, non-singular regressor covariance matrix, etc.).
\end{itemize}

Define the cohort- and control-group average idiosyncratic shocks
\[
\varepsilon_{st} \equiv \frac{1}{n_s} \sum_{i:T_i = s} \varepsilon_{it},
\qquad
\varepsilon_{\infty t} \equiv \frac{1}{n_c} \sum_{i \in \mathcal{C}} \varepsilon_{it}.
\]
By (i)–(ii), for each fixed $t$,
\[
\varepsilon_{st} \xrightarrow{p} 0
\quad\text{and}\quad
\varepsilon_{\infty t} \xrightarrow{p} 0
\quad\text{as } n_s, n_c \to \infty.
\]

Recall the algebraic decompositions in equations~\eqref{eq:bias_ar}–\eqref{eq:bias_synth}:
\begin{align*}
\tau^{AR}(s,t) - \tau^{ATT}(s,t) 
    &= (\alpha_s - \hat{\alpha}_s) + (\beta_s \mathbf{F}_t - \hat{\beta}_s \mathbf{F}^o_t) + \varepsilon_{st}, \\
\hat{\tau}^{cont}(s,t) - \tau^{ATT}(s,t) 
    &= (\alpha_s - \alpha_\infty) + (\beta_s - \beta_\infty)\mathbf{F}_t + (\varepsilon_{st} -\varepsilon_{\infty,t}), \\
\hat{\tau}^{alt}(s,t) - \tau^{ATT}(s,t) 
    &= (\alpha_s - \hat{\alpha}^{alt}_s) + (\beta_s - \hat{\beta}^{alt}_s)\mathbf{F}_t + \varepsilon_{st},
\end{align*}
where $\hat{\alpha}_s, \hat{\beta}_s$ are the cohort averages of the OLS estimates from the abnormal-returns model and $(\hat{\alpha}^{alt}_s, \hat{\beta}^{alt}_s)$ denote the implied factor loadings from an alternative estimator (synthetic control or GSC) at the cohort level.

\paragraph{(1) Probability limits of the three estimators.}

\smallskip\noindent
\emph{Abnormal returns estimator.}
By definition, $\tilde{\alpha}_s$ and $\tilde{\beta}_s$ are the probability limits of the cohort-average OLS coefficients:
\[
\hat{\alpha}_s \xrightarrow{p} \tilde{\alpha}_s, 
\qquad 
\hat{\beta}_s \xrightarrow{p} \tilde{\beta}_s
\quad\text{as } T_{pre} \to \infty,
\]
where the limit is the linear projection of $R_{it}(\infty)$ onto $\mathbf{F}^o_t$ in the pre-treatment window $\{t < s - \delta\}$.\footnote{Limited anticipation (Assumption~\ref{assn:limited_anticipation}) guarantees that $R_{it} = R_{it}(\infty)$ for $t < T_i - \delta$, so pre-event returns identify the no-event process.} Combining this with the fact that $\varepsilon_{st} \xrightarrow{p} 0$ as $n_s \to \infty$ yields, from~\eqref{eq:bias_ar},
\[
\tau^{AR}(s,t) - \tau^{ATT}(s,t) 
    \xrightarrow{p} (\alpha_s - \tilde{\alpha}_s) + (\beta_s \mathbf{F}_t - \tilde{\beta}_s \mathbf{F}^o_t),
\]
which is equation~\eqref{eq:plim_ar}.

\smallskip\noindent
\emph{Difference-in-means estimator.}
The difference-in-means estimator does not involve any pre-event estimation, so $T_{pre}$ is irrelevant here. Using \eqref{eq:bias_cont} and the fact that $\varepsilon_{st} - \varepsilon_{\infty,t} \xrightarrow{p} 0$ as $(n_s, n_c) \to \infty$, we obtain
\[
\hat{\tau}^{cont}(s,t) - \tau^{ATT}(s,t)
    \xrightarrow{p} (\alpha_s - \alpha_\infty) + (\beta_s - \beta_\infty)\mathbf{F}_t,
\]
which is equation~\eqref{eq:plim_cont}.

\smallskip\noindent
\emph{Synthetic control estimator.}
For the synthetic control estimator, we specialize $\hat{\tau}^{alt}(s,t)$ in~\eqref{eq:bias_synth} to $\hat{\tau}^{synth}(s,t)$ and denote the implied loadings by $(\hat{\alpha}^{synth}_s,\hat{\beta}^{synth}_s)$.

Under Assumption~\ref{assn:factor_model} together with a standard interactive fixed-effects structure for the untreated potential outcomes,
\[
R_{it}(\infty) = \alpha_i + \boldsymbol{\beta}_i' \mathbf{F}_t + \varepsilon_{it},
\]
the average untreated return for cohort $s$ can be written as
\[
\mathbb{E}\!\left(R_{it}(\infty) \mid T_i = s\right)
    = \alpha_s + \boldsymbol{\beta}_s' \mathbf{F}_t,
\]
with an analogous representation for each control unit $j \in \mathcal{C}$.

Ferman~\parencite{ferman2021properties} shows that, under such a factor structure, with sufficiently many pre-treatment periods and control units, the synthetic control weights constructed by minimizing the pre-treatment mean squared error recover the factor loadings of the treated unit (here, the treated cohort) in probability. Formally, applying their Theorem~1 to the cohort-level treated unit $R_{s,t}$, we obtain
\[
\hat{\alpha}^{synth}_s \xrightarrow{p} \alpha_s,
\qquad
\hat{\beta}^{synth}_s \xrightarrow{p} \beta_s
\quad\text{as } n_c, T_{pre} \to \infty.
\]
Combining this with $\varepsilon_{st} \xrightarrow{p} 0$ and~\eqref{eq:bias_synth} yields
\[
\hat{\tau}^{synth}(s,t) - \tau^{ATT}(s,t)
    \xrightarrow{p} 0,
\]
which is equation~\eqref{eq:plim_synth}.

\medskip
\noindent
\textit{Remark (Gsynth).} The same logic applies to the generalized synthetic control estimator in Definition~\ref{def:gsynth}. Under the interactive fixed-effects model 
\[
R_{it}(\infty) = \alpha_i + \boldsymbol{\lambda}_i' \mathbf{F}_t + \varepsilon_{it}
\]
and the regularity conditions in \textcite{xu2017generalized}, Xu shows that the counterfactual returns $\hat{R}^{GS}_{it}(\infty)$ are consistent for $R_{it}(\infty)$, uniformly over post-treatment periods. Aggregating over $i$ within cohort $s$ then implies that the cohort-period ATT estimated by Gsynth is also consistent:
\[
\hat{\tau}^{GS}(s,t) - \tau^{ATT}(s,t) \xrightarrow{p} 0.
\]

\paragraph{(2) Consistency of the difference-in-means estimator under random assignment.}

Assume now the random assignment condition in Assumption~\ref{assn:event_assignment}:
\[
p_t(\mathbf{X}_i,\mathbf{F}) = p_t(\mathbf{F}),
\]
so that $T_i$ is independent of $\mathbf{X}_i = (\alpha_i,\boldsymbol{\beta}_i)$. Then the distribution of $\mathbf{X}_i$ is the same in every treatment cohort $s$ and in the never-treated group $\mathcal{C}$. In particular,
\[
\alpha_s = \mathbb{E}(\alpha_i \mid T_i = s) = \mathbb{E}(\alpha_i \mid i \in \mathcal{C}) = \alpha_\infty,
\]
and similarly $\beta_s = \beta_\infty$.

Substituting these equalities into the probability limit in~\eqref{eq:plim_cont} gives
\[
\hat{\tau}^{cont}(s,t) - \tau^{ATT}(s,t) 
    \xrightarrow{p} 0.
\]
Note that this result only relies on large $n_s, n_c$; it does not require $T_{pre} \to \infty$ because the difference-in-means estimator does not use pre-event estimation.

\paragraph{(3) Consistency of the abnormal returns estimator under correct specification.}

Finally, suppose that the factor model is correctly specified in the abnormal-returns regression, i.e.\ $\mathbf{F}^o_t = \mathbf{F}_t$ for all $t$. For each security $i$,
\[
R_{it}(\infty) = \alpha_i + \boldsymbol{\beta}_i'\mathbf{F}_t + \varepsilon_{it},
\quad t < T_i - \delta.
\]
By Assumption~\ref{assn:limited_anticipation}, these pre-event observations coincide with the no-event potential outcome, and standard OLS consistency arguments imply that, as $T_{pre} \to \infty$,
\[
\hat{\alpha}_i \xrightarrow{p} \alpha_i,
\qquad
\hat{\boldsymbol{\beta}}_i \xrightarrow{p} \boldsymbol{\beta}_i.
\]
Averaging within cohort $s$ and applying the law of large numbers as $n_s \to \infty$ gives
\[
\hat{\alpha}_s \equiv \frac{1}{n_s} \sum_{i:T_i=s} \hat{\alpha}_i \xrightarrow{p} 
\frac{1}{n_s} \sum_{i:T_i=s} \alpha_i \xrightarrow{p} \alpha_s,
\]
and analogously $\hat{\beta}_s \xrightarrow{p} \beta_s$. Hence, under correct specification,
\[
\tilde{\alpha}_s = \alpha_s, \qquad \tilde{\beta}_s = \beta_s.
\]
Substituting these equalities into~\eqref{eq:plim_ar} yields
\[
\tau^{AR}(s,t) - \tau^{ATT}(s,t) \xrightarrow{p} 0,
\]
so the abnormal returns estimator is consistent for $\tau^{ATT}(s,t)$ when the factor structure is correctly specified.

This completes the proof.
\end{proof}

\begin{proof}[Proof of Theorem \ref{thm:bias}]
Throughout, maintain \Cref{assn:factor_model,assn:limited_anticipation} and the auxiliary regularity conditions used in the proof of Proposition~\ref{prop:consistency} (mean-zero idiosyncratic shocks with a law of large numbers across $i$, and standard OLS regularity for the abnormal-returns regressions, plus the spanning/interactive fixed effects conditions for synthetic control and gsynth).

Recall that for any estimator $\star \in \{AR,cont,synth,GS\}$ and horizon $\kappa \ge 0$,
\[
\theta_\kappa^{ATT}
    = \sum_{s \in \mathcal{S}} w_s \tau^{ATT}(s,s+\kappa),
    \qquad
\hat{\theta}_\kappa^{\star}
    = \sum_{s \in \mathcal{S}} w_s \hat{\tau}^{\star}(s,s+\kappa),
\]
with weights $w_s = N_s / \sum_{s' \in \mathcal{S}} N_{s'}$. Hence
\begin{equation}
\label{eq:theta_diff}
\hat{\theta}_\kappa^{\star} - \theta_\kappa^{ATT}
    = \sum_{s \in \mathcal{S}} w_s \bigl(\hat{\tau}^{\star}(s,s+\kappa) - \tau^{ATT}(s,s+\kappa)\bigr).
\end{equation}

\paragraph{(1) Unbiasedness of synthetic control and gsynth.}

From Proposition~\ref{prop:consistency}, for each fixed cohort $s$ and period $t$,
\[
\hat{\tau}^{alt}(s,t) - \tau^{ATT}(s,t) \xrightarrow{p} 0
\quad\text{for } alt \in \{synth,GS\}
\]
as $n_s, n_c, T_{pre} \to \infty$ under the conditions of \textcite{ferman2021properties} (for synthetic control) and \textcite{xu2017generalized} (for gsynth). Setting $t = s+\kappa$ and substituting into \eqref{eq:theta_diff} yields
\[
\hat{\theta}_\kappa^{alt} - \theta_\kappa^{ATT}
    = \sum_{s \in \mathcal{S}} w_s \bigl(\hat{\tau}^{alt}(s,s+\kappa) - \tau^{ATT}(s,s+\kappa)\bigr).
\]
Since $\mathcal{S} \subseteq \{1,\dots,T\}$ is finite and the weights satisfy $0 \le w_s \le 1$ and $\sum_s w_s = 1$, a finite linear combination of terms that converge in probability to zero also converges to zero. Thus,
\[
\hat{\theta}_\kappa^{alt} - \theta_\kappa^{ATT} \xrightarrow{p} 0,
\]
which proves part~(1).

\paragraph{(2) Bias of abnormal-returns and difference-in-means estimators.}

Assume $|\mathcal{S}| > 0$ and $1 > p_t(\mathbf{X}_i,\mathbf{F}) > \epsilon > 0$. The lower bound $\epsilon$ guarantees that each event time in $\mathcal{S}$ occurs with positive probability in the population, so $N_s$ and $\sum_{s'} N_{s'}$ both diverge with $N$ and the cohort weights converge:
\[
w_s 
= \frac{N_s}{\sum_{s' \in \mathcal{S}} N_{s'}} 
\xrightarrow{p} 
\pi_s \equiv \Pr(T_i = s \mid T_i \in \mathcal{S}).
\]

From Proposition~\ref{prop:consistency}, for each fixed $s$ and $t$,
\begin{align*}
\tau^{AR}(s,t) - \tau^{ATT}(s,t)
    &\xrightarrow{p} (\alpha_s - \tilde{\alpha}_s) + (\beta_s \mathbf{F}_t - \tilde{\beta}_s \mathbf{F}^o_t), \\
\hat{\tau}^{cont}(s,t) - \tau^{ATT}(s,t)
    &\xrightarrow{p} (\alpha_s - \alpha_\infty) + (\beta_s - \beta_\infty)\mathbf{F}_t.
\end{align*}
Evaluating at $t = s+\kappa$ and plugging into \eqref{eq:theta_diff}, we obtain
\begin{align*}
\hat{\theta}_\kappa^{ar} - \theta_\kappa^{ATT}
    &\xrightarrow{p} 
      \sum_{s \in \mathcal{S}} \pi_s \Bigl[(\alpha_s - \tilde{\alpha}_s) + \bigl(\beta_s \mathbf{F}_{s+\kappa} - \tilde{\beta}_s \mathbf{F}^o_{s+\kappa}\bigr)\Bigr], \\
\hat{\theta}_\kappa^{cont} - \theta_\kappa^{ATT}
    &\xrightarrow{p} 
      \sum_{s \in \mathcal{S}} \pi_s \Bigl[(\alpha_s - \alpha_\infty) + (\beta_s - \beta_\infty)\mathbf{F}_{s+\kappa}\Bigr].
\end{align*}

Define a random event time $S$ with $\Pr(S = s \mid T_i \in \mathcal{S}) = \pi_s$. Then the limits above can be written compactly as conditional expectations over treated cohorts:
\begin{align*}
\hat{\theta}^{ar}_{\kappa} - \theta_{\kappa}^{ATT}
    &\xrightarrow{p}
    \mathbb{E}\left[ (\alpha_{S} - \tilde{\alpha}_{S}) 
        + \bigl(\beta_{S}\mathbf{F}_{S+\kappa} - \tilde{\beta}_{S}\mathbf{F}^{o}_{S+\kappa}\bigr)
        \,\Big|\, T_{i} \in \mathcal{S} \right], \\
\hat{\theta}^{cont}_{\kappa} - \theta_{\kappa}^{ATT}
    &\xrightarrow{p}
    \mathbb{E}\left[ (\alpha_{S} - \alpha_{\infty}) 
        + (\beta_{S}-\beta_{\infty})\mathbf{F}_{S+\kappa}
        \,\Big|\, T_{i} \in \mathcal{S} \right].
\end{align*}
Relabeling $S$ as $s$ inside the expectation gives the expressions stated in part~(2).

\paragraph{(3) Random assignment across firms.}

Under random assignment across firms, 
\[
p_t(\mathbf{X}_i,\mathbf{F}) = p_t(\mathbf{F}),
\]
so event assignment is independent of $\mathbf{X}_i = (\alpha_i,\beta_i)$. As in the proof of Proposition~\ref{prop:consistency}, this implies
\[
\alpha_s = \alpha_\infty, \qquad \beta_s = \beta_\infty
\quad\text{for all } s \in \mathcal{S}.
\]
Substituting these equalities into the limit for $\hat{\theta}^{cont}_\kappa - \theta_\kappa^{ATT}$ in part~(2) gives
\[
\hat{\theta}^{cont}_\kappa - \theta_\kappa^{ATT} \xrightarrow{p} 0
\]
as $n_s, n_c \to \infty$, even for fixed $T_{pre}$. This proves part~(3).

\paragraph{(4) Random timing.}

Now assume random timing,
\[
p_t(\mathbf{X}_i,\mathbf{F}) = p_t(\mathbf{X}_i),
\]
so that event timing is independent of the factor path $\mathbf{F}$, and adopt the standard assumption that firm characteristics $\mathbf{X}_i$ (hence $\alpha_i,\beta_i,\tilde{\alpha}_i,\tilde{\beta}_i$) are independent of $\mathbf{F}$. In addition, assume that the factors have constant mean over time:
\[
\mathbb{E}(\mathbf{F}_t) = \mathbb{E}(\mathbf{F}_{t'}) \equiv \mathbb{E}(\mathbf{F}_t)
\quad\text{for all } t,t',
\]
and similarly for $\mathbf{F}^o_t$.

Start from the general bias expressions in part~(2):
\begin{align*}
\hat{\theta}^{ar}_{\kappa} - \theta_{\kappa}^{ATT}
    &\xrightarrow{p}
    \mathbb{E}\left[ (\alpha_{s} - \tilde{\alpha}_{s}) 
        + \bigl(\beta_{s}\mathbf{F}_{s+\kappa} - \tilde{\beta}_{s}\mathbf{F}^{o}_{s+\kappa}\bigr)
        \,\Big|\, T_{i} \in \mathcal{S} \right], \\
\hat{\theta}^{cont}_{\kappa} - \theta_{\kappa}^{ATT}
    &\xrightarrow{p}
    \mathbb{E}\left[ (\alpha_{s} - \alpha_{\infty}) 
        + (\beta_{s}-\beta_{\infty})\mathbf{F}_{s+\kappa}
        \,\Big|\, T_{i} \in \mathcal{S} \right].
\end{align*}

Under random timing and independence between $(\alpha_s,\beta_s,\tilde{\alpha}_s,\tilde{\beta}_s,s)$ and the factor process $\mathbf{F}$, we can factor the cross term:
\begin{align*}
\mathbb{E}\bigl(\beta_s \mathbf{F}_{s+\kappa} \mid T_i \in \mathcal{S}\bigr)
    &= \mathbb{E}\bigl(\beta_s \mid T_i \in \mathcal{S}\bigr)\,
       \mathbb{E}(\mathbf{F}_{s+\kappa})
     = \mathbb{E}\bigl(\beta_i \mid T_i \in \mathcal{S}\bigr)\,
       \mathbb{E}(\mathbf{F}_t), \\
\mathbb{E}\bigl(\tilde{\beta}_s \mathbf{F}^o_{s+\kappa} \mid T_i \in \mathcal{S}\bigr)
    &= \mathbb{E}\bigl(\tilde{\beta}_s \mid T_i \in \mathcal{S}\bigr)\,
       \mathbb{E}(\mathbf{F}^o_{s+\kappa})
     = \mathbb{E}\bigl(\tilde{\beta}_i \mid T_i \in \mathcal{S}\bigr)\,
       \mathbb{E}(\mathbf{F}^o_{s+\kappa}),
\end{align*}
where the last equalities use that the distribution of $s$ within $\mathcal{S}$ is the same as the distribution of $T_i$ conditional on $T_i \in \mathcal{S}$, and that factor means are time-invariant.

Thus the asymptotic bias of the abnormal-returns estimator becomes
\begin{align*}
\hat{\theta}^{ar}_{\kappa}-\theta_{\kappa}^{ATT}
    &\xrightarrow{p}
      \mathbb{E}\left(\alpha_{s} - \tilde{\alpha}_{s}\mid T_{i} \in \mathcal{S} \right)
      + \mathbb{E}\left(\beta_{i}\mid T_{i} \in \mathcal{S} \right)\mathbb{E}\left(\mathbf{F}_{t} \right) \\
    &\phantom{\xrightarrow{p}=}\;
      - \mathbb{E}\left(\tilde{\beta}_{i}\mid T_{i} \in \mathcal{S} \right)\mathbb{E}\left(\mathbf{F}^{o}_{s+\kappa} \right),
\end{align*}
which is the expression stated in part~(4) for $\hat{\theta}^{ar}_\kappa$.

Similarly, for the difference-in-means estimator,
\begin{align*}
\mathbb{E}\bigl((\beta_s - \beta_\infty)\mathbf{F}_{s+\kappa} \mid T_i \in \mathcal{S}\bigr)
    &= \mathbb{E}(\beta_s - \beta_\infty \mid T_i \in \mathcal{S}) \, \mathbb{E}(\mathbf{F}_{s+\kappa}) \\
    &= \mathbb{E}(\beta_s - \beta_\infty \mid T_i \in \mathcal{S}) \, \mathbb{E}(\mathbf{F}_t),
\end{align*}
so the asymptotic bias simplifies to
\[
\hat{\theta}^{cont}_{\kappa} -\theta_{\kappa}^{ATT}  
    \xrightarrow{p}
    \mathbb{E}\left(\alpha_{s} - \alpha_{\infty}\mid T_{i} \in \mathcal{S} \right)
    + \mathbb{E}\left(\beta_{s}-\beta_{\infty}\mid T_{i} \in \mathcal{S} \right)\mathbb{E}\left(\mathbf{F}_{t}\right),
\]
as claimed.

This establishes all four parts of the theorem.
\end{proof}
\begin{proof}[Proof of Lemma \ref{lem:geo_vol}]
Throughout, we work with the second–order Taylor expansion of $\log(1+x)$ around $x=0$,
\[
\log(1+x) = x - \tfrac{1}{2}x^2 + r(x),
\quad\text{with}\quad r(x) = O(x^3),
\]
and omit the remainder term $r(x)$ for notational simplicity. All equalities below should be read as holding up to these higher–order terms in returns.

\paragraph{Step 1: Period--by--period geometric ATT.}

Fix an event cohort $s$ and calendar period $t$. By definition,
\[
\tau^{geo,ATT}(s,t)
    = E\big(\log(1+R_{it}(s)) - \log(1+R_{it}(\infty)) \mid T_i = s\big).
\]
Using the second–order expansion,
\begin{align*}
\log(1+R_{it}(s))
    &\approx R_{it}(s) - \tfrac{1}{2}R_{it}(s)^2, \\
\log(1+R_{it}(\infty))
    &\approx R_{it}(\infty) - \tfrac{1}{2}R_{it}(\infty)^2,
\end{align*}
so that
\begin{align*}
\tau^{geo,ATT}(s,t)
    &\approx E\left[
        \big(R_{it}(s) - R_{it}(\infty)\big)
        - \tfrac{1}{2}\big(R_{it}(s)^2 - R_{it}(\infty)^2\big)
        \,\Big|\, T_i = s\right].
\end{align*}

Let the individual treatment effect be
\[
\tau_i(s,t) = R_{it}(s) - R_{it}(\infty),
\]
so that $R_{it}(s) = R_{it}(\infty) + \tau_i(s,t)$. Then
\[
R_{it}(s)^2 - R_{it}(\infty)^2
    = \big(R_{it}(\infty) + \tau_i(s,t)\big)^2 - R_{it}(\infty)^2
    = 2 R_{it}(\infty)\tau_i(s,t) + \tau_i(s,t)^2.
\]
Substituting this into the expression above,
\begin{align*}
\tau^{geo,ATT}(s,t)
    &\approx E\left[
        \tau_i(s,t)
        - \tfrac{1}{2}\big(2 R_{it}(\infty)\tau_i(s,t) + \tau_i(s,t)^2\big)
        \,\Big|\, T_i = s\right] \\
    &= E\big(\tau_i(s,t) \mid T_i = s\big)
       - E\left( R_{it}(\infty)\tau_i(s,t) + \tfrac{1}{2}\tau_i(s,t)^2
            \,\Big|\, T_i = s\right).
\end{align*}
By definition of the arithmetic cohort--period ATT,
\[
\tau^{ATT}(s,t) = E\big(\tau_i(s,t) \mid T_i = s\big),
\]
so we have the key period--by--period relationship
\begin{equation}
\label{eq:geo_vs_ar_per_period}
\tau^{geo,ATT}(s,t)
    \approx \tau^{ATT}(s,t)
      - E\left( R_{it}(\infty)\tau_i(s,t)
                  + \tfrac{1}{2}\tau_i(s,t)^2
            \,\Big|\, T_i = s\right).
\end{equation}

\paragraph{Step 2: From period ATT to horizon $H$.}

For cohort $s$, the geometric ATT over horizon $H$ is
\[
\tau^{geo,ATT}(s,H)
    = \sum_{\kappa=0}^H \tau^{geo,ATT}(s,s+\kappa),
\]
and the corresponding arithmetic CATT is
\[
\tau^{CATT}(s,H)
    = \sum_{\kappa=0}^H \tau^{ATT}(s,s+\kappa).
\]
Summing \eqref{eq:geo_vs_ar_per_period} over $\kappa=0,\dots,H$ gives
\begin{align*}
\tau^{geo,ATT}(s,H)
    &\approx \sum_{\kappa=0}^H \tau^{ATT}(s,s+\kappa)
      - \sum_{\kappa=0}^{H}
        E\left( R_{i,s+\kappa}(\infty)\tau_i(s,s+\kappa)
                  + \tfrac{1}{2}\tau_i(s,s+\kappa)^2
            \,\Big|\, T_i = s\right).
\end{align*}

Now average across event cohorts with weights $w_s$:
\[
\theta_H^{geo,ATT}
    = \sum_{s} w_s \tau^{geo,ATT}(s,H),
    \qquad
\theta_H^{ATT} = \sum_{\kappa=0}^H \theta_\kappa^{ATT}
                 = \sum_s w_s \sum_{\kappa=0}^H \tau^{ATT}(s,s+\kappa).
\]
Thus,
\begin{align*}
\theta_H^{geo,ATT}
    &\approx \sum_s w_s \sum_{\kappa=0}^H \tau^{ATT}(s,s+\kappa) \\
    &\quad - \sum_s w_s \sum_{\kappa=0}^{H}
        E\left( R_{i,s+\kappa}(\infty)\tau_i(s,s+\kappa)
                  + \tfrac{1}{2}\tau_i(s,s+\kappa)^2
            \,\Big|\, T_i = s\right) \\
    &= \theta_H^{ATT}
       - \sum_s w_s \sum_{\kappa=0}^{H}
        E\left( R_{i,s+\kappa}(\infty)\tau_i(s,s+\kappa)
                  + \tfrac{1}{2}\tau_i(s,s+\kappa)^2
            \,\Big|\, T_i = s\right),
\end{align*}
which is the first expression in Lemma~\ref{lem:geo_vol}.

\paragraph{Step 3: Independence and simplification.}

Now impose the additional assumption stated in the lemma: for all $s$ and $\kappa$,
\begin{itemize}
    \item $R_{i,s+\kappa}(\infty)$ and $\tau_i(s,s+\kappa)$ are independent conditional on $T_i=s$; and
    \item the conditional mean of the no–event return is constant,
    \[
        \mu = E\big(R_{i,s+\kappa}(\infty) \mid T_i = s\big)
    \]
    does not depend on $s$ or $\kappa$.
\end{itemize}
Then
\[
E\big(R_{i,s+\kappa}(\infty)\tau_i(s,s+\kappa) \mid T_i=s\big)
   = \mu\,E\big(\tau_i(s,s+\kappa) \mid T_i=s\big)
   = \mu\,\tau^{ATT}(s,s+\kappa),
\]
and the expression from Step~2 becomes
\begin{align*}
\theta_H^{geo,ATT}
    &\approx \theta_H^{ATT}
        - \sum_s w_s \sum_{\kappa=0}^{H}
           \Big[\mu\,\tau^{ATT}(s,s+\kappa)
                + \tfrac{1}{2}E\big(\tau_i(s,s+\kappa)^2 \mid T_i=s\big)\Big] \\
    &= \theta_H^{ATT}
        - \mu \sum_{\kappa=0}^H \sum_s w_s \tau^{ATT}(s,s+\kappa)
        - \tfrac{1}{2}\sum_{\kappa=0}^H \sum_s w_s
            E\big(\tau_i(s,s+\kappa)^2 \mid T_i=s\big).
\end{align*}
Using $\sum_s w_s \tau^{ATT}(s,s+\kappa) = \theta_\kappa^{ATT}$ and $\sum_{\kappa=0}^H \theta_\kappa^{ATT} = \theta_H^{ATT}$, we get
\[
\theta_H^{geo,ATT}
    \approx (1-\mu)\theta_H^{ATT}
      - \tfrac{1}{2}\sum_{\kappa=0}^H \sum_s w_s
         E\big(\tau_i(s,s+\kappa)^2 \mid T_i=s\big).
\]

To rewrite the last term in terms of variances, consider a randomly drawn treated security $i$ and define the individual treatment effect at event time $\kappa$ as
\[
\Delta_{i,\kappa} \equiv \tau_i(s,s+\kappa) \quad\text{for the (random) cohort } s=T_i.
\]
Under the cohort weights $w_s$, the distribution of $s$ among treated units satisfies
\[
\Pr(s = r \mid T_i\in\mathcal S) = w_r,
\]
so
\begin{align*}
E(\Delta_{i,\kappa} \mid T_i\in\mathcal S)
    &= \sum_s w_s E\big(\tau_i(s,s+\kappa) \mid T_i=s\big)
     = \sum_s w_s \tau^{ATT}(s,s+\kappa)
     = \theta^{ATT}_\kappa, \\
E(\Delta_{i,\kappa}^2 \mid T_i\in\mathcal S)
    &= \sum_s w_s E\big(\tau_i(s,s+\kappa)^2 \mid T_i=s\big).
\end{align*}
Denote the cross–sectional variance of individual treatment effects at event time $\kappa$ by
\[
\var(\theta^{ATT}_\kappa)
    \equiv \var\big(\Delta_{i,\kappa} \mid T_i\in\mathcal S\big).
\]
Then
\[
E(\Delta_{i,\kappa}^2 \mid T_i\in\mathcal S)
    = \var(\theta^{ATT}_\kappa) + \big(\theta^{ATT}_\kappa\big)^2,
\]
so that
\[
\sum_s w_s E\big(\tau_i(s,s+\kappa)^2 \mid T_i=s\big)
    = \var(\theta^{ATT}_\kappa) + \big(\theta^{ATT}_\kappa\big)^2.
\]
Substituting into the expression for $\theta_H^{geo,ATT}$ yields
\begin{align*}
\theta_H^{geo,ATT}
    &\approx (1-\mu)\theta_H^{ATT}
      - \tfrac{1}{2}\sum_{\kappa=0}^{H}
         \Big[\var(\theta^{ATT}_\kappa) + \big(\theta^{ATT}_\kappa\big)^2\Big],
\end{align*}
which is the second expression in Lemma~\ref{lem:geo_vol}.

This completes the proof.
\end{proof}

\section{Additional Simulation Results}
For the simulation sample where treatment is selected based on loading to the second factor and random timing, We plot the bias from difference in mean, CAPM, and Gsynth estimators, across simulation samples.

\begin{figure}[H]
    \centering
    \caption{Bias from Difference-in-Mean Model on SMB Returns with Assignment Selection}\label{fig:simul_2f_biascapm_smb_assign_diff}
    \begin{justify}
    {\small \vspace*{0.05in}
    This figure plots the biases from a difference-in-mean estimator on the treatment period over realizations of the second factor across 50 simulations. We simulate 500 firms with 10\% of them getting treated. The estimation period is 239 days and post-event period is 11 days. More details on the simulations is in Section \ref{sec:simul_2f_select_design}. Panel A reports simulation results with no selections, Panel B with only assignment selection, Panel C with only timing selection, and Panel D with both. We consider several estimators: difference in simple average, CAPM and 2-factor abnormal returns, and generalized synthetic methods. The expected biases and coverage are from 50 simulations.}
    \end{justify}
    \includegraphics[width=0.7\linewidth]{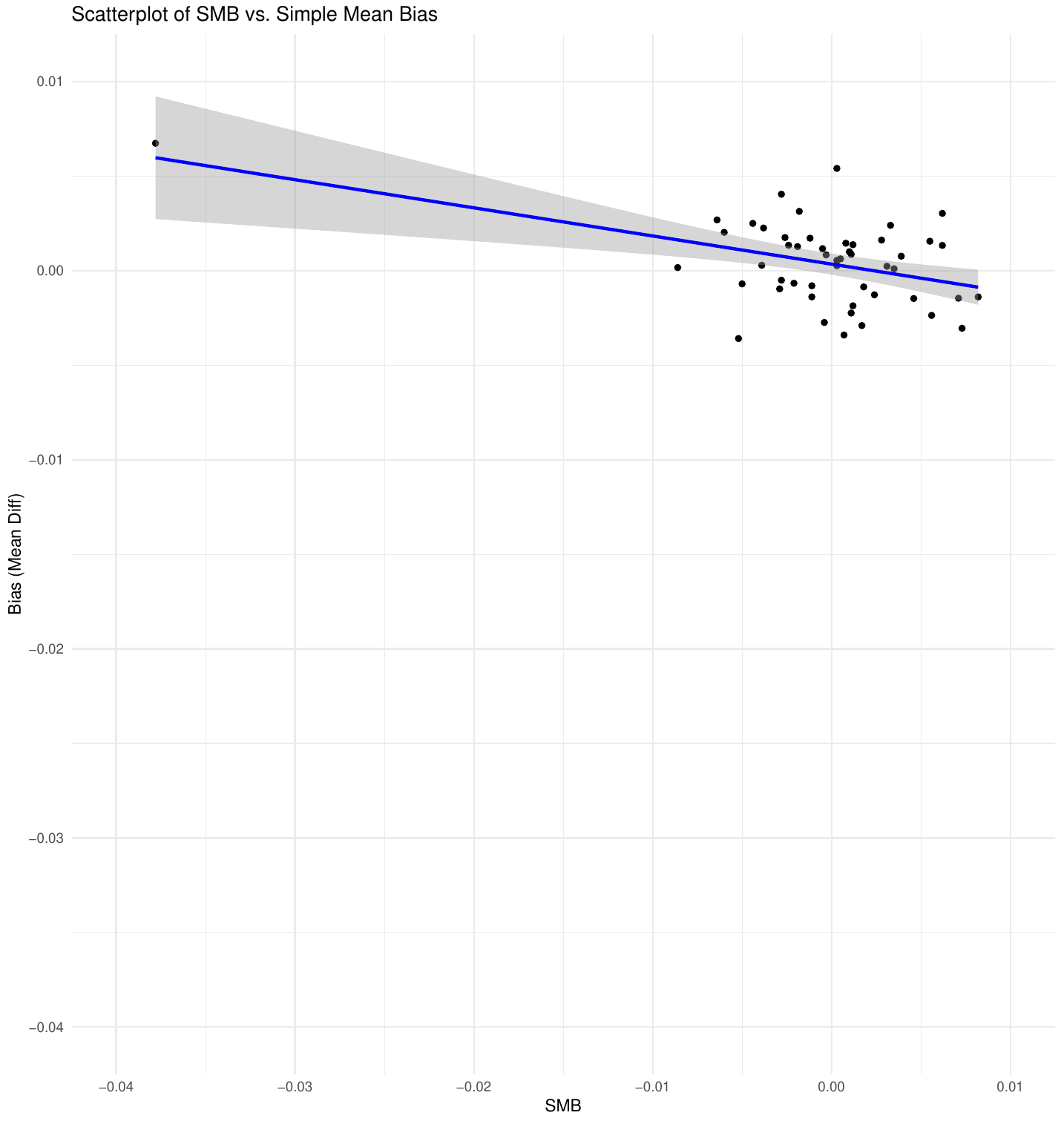}
\end{figure}

\begin{figure}[H]
    \centering
    \caption{Bias from Gsynth Model on SMB Returns with Assignment Selection}\label{fig:simul_2f_biascapm_smb_assign_gsynth}
    \begin{justify}
    {\small \vspace*{0.05in}
    This figure plots the biases from a Gsynth estimator on the treatment period over realizations of the second factor across 50 simulations. We simulate 500 firms with 10\% of them getting treated. The estimation period is 239 days and post-event period is 11 days. More details on the simulations is in Section \ref{sec:simul_2f_select_design}. Panel A reports simulation results with no selections, Panel B with only assignment selection, Panel C with only timing selection, and Panel D with both. We consider several estimators: difference in simple average, CAPM and 2-factor abnormal returns, and generalized synthetic methods. The expected biases and coverage are from 50 simulations.}
    \end{justify}
    \includegraphics[width=0.7\linewidth]{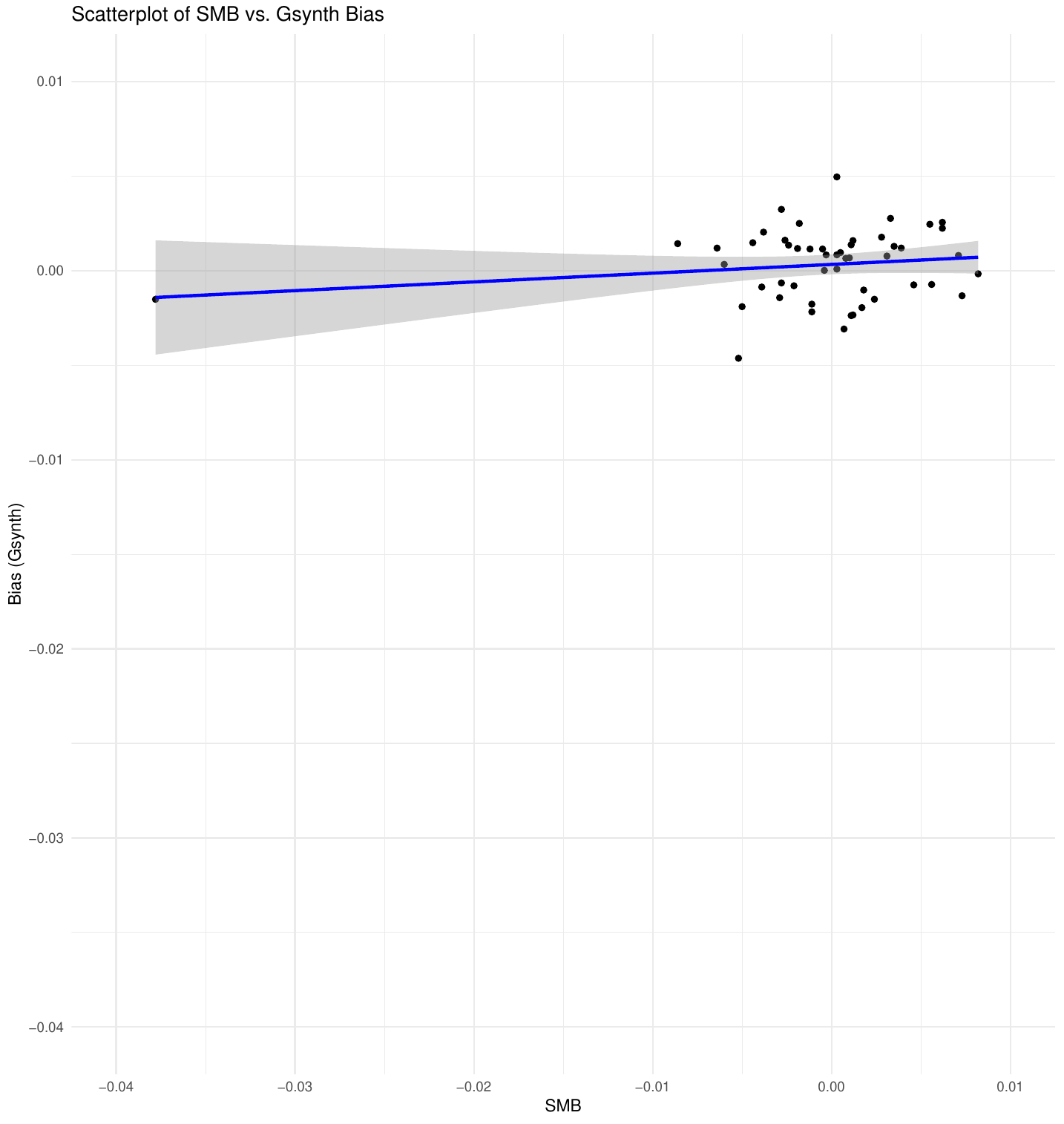}
\end{figure}

\section{Additional Results for Geithner}
\label{appsec:geithner}

This section presents addition results for the \textcite{acemoglu2016value} empirical example.

\subsection{Period-by-Period ATT}
In this section, we compare how different counterfactual affects the daily ATT in the post-event period. The `Average' column computes the difference in the simple mean of treated versus control firms, as reported in Panel A of Table 2 in the original paper. The `Synthetic Control' column computes the weighted average daily return with synthetic control weights, and the `Synthetic Diff-in-Diff' column uses the synthetic diff-in-diff weights instead.

For standard errors, in the `Average' column, we use the same approach as the original paper and adjust the standard errors for pre-event correlation between firms. In the `Synthetic Control' column, we report bootstrap standard errors estimated separately for each period. Since synthetic control weights will not change with the post-period, this method gives the correct standard errors period-by-period. We cannot use the same methodology for synthetic difference-in-differences because the estimated unit weight also depends on the data from the post-period.

We see that with synthetic control weights, the estimated ATT is much smaller compared to the simple mean.
\begin{table}[H]
  \centering
  \caption{Period-by-Period ATT to Geithner Announcement (Schedule connections)}
  \resizebox{\textwidth}{!}{%
    \begin{tabular}{rrrrrrrrrrrrr}
    \toprule
          &       & \multicolumn{3}{c}{Average} &       & \multicolumn{3}{c}{Synthetic Control} &       & \multicolumn{3}{c}{Synthetic Diff-in-Diff} \\
\cmidrule{3-5}\cmidrule{7-9}\cmidrule{11-13}    \multicolumn{1}{l}{Event day} & \multicolumn{1}{l}{Date} & \multicolumn{1}{l}{Conn.} & \multicolumn{1}{l}{Non-conn.} & \multicolumn{1}{l}{Difference} &       & \multicolumn{1}{l}{Conn.} & \multicolumn{1}{l}{Non-conn.} & \multicolumn{1}{l}{Difference} &       & \multicolumn{1}{l}{Conn.} & \multicolumn{1}{l}{Non-conn.} & \multicolumn{1}{l}{Difference} \\
    \midrule
    0     & 11/21/08 & 0.086 & 0.042 & 0.043*** &       & 0.086 & 0.066 & 0.019* &       & 0.086 & 0.058 & 0.028 \\
    1     & 11/24/08 & 0.130 & 0.046 & 0.084*** &       & 0.130 & 0.080 & 0.050** &       & 0.130 & 0.063 & 0.067 \\
    2     & 11/25/08 & 0.026 & 0.015 & 0.011 &       & 0.026 & 0.045 & -0.019 &       & 0.026 & 0.018 & 0.008 \\
    3     & 11/26/08 & 0.112 & 0.041 & 0.071*** &       & 0.112 & 0.070 & 0.042 &       & 0.112 & 0.055 & 0.057 \\
    4     & 11/28/08 & 0.056 & 0.018 & 0.038** &       & 0.056 & 0.028 & 0.027 &       & 0.056 & 0.025 & 0.030 \\
    5     & 12/1/08 & -0.131 & -0.076 & -0.056*** &       & -0.131 & -0.119 & -0.013 &       & -0.131 & -0.102 & -0.030 \\
    6     & 12/2/08 & 0.046 & 0.043 & 0.003 &       & 0.046 & 0.039 & 0.007 &       & 0.046 & 0.056 & -0.010 \\
    7     & 12/3/08 & 0.034 & 0.018 & 0.016 &       & 0.034 & 0.035 & -0.001 &       & 0.034 & 0.024 & 0.011 \\
    8     & 12/4/08 & -0.009 & -0.013 & 0.005 &       & -0.009 & -0.028 & 0.019 &       & -0.009 & -0.016 & 0.008 \\
    9     & 12/5/08 & 0.063 & 0.024 & 0.038** &       & 0.063 & 0.034 & 0.028** &       & 0.063 & 0.031 & 0.031 \\
    10    & 12/8/08 & 0.064 & 0.027 & 0.037** &       & 0.064 & 0.047 & 0.017 &       & 0.064 & 0.033 & 0.031 \\
    \bottomrule
    \end{tabular}%
    }
\end{table}%
\begin{table}[H]
  \centering
  \caption{Period-by-Period ATT to Geithner Announcement (Personal connections)}
  \resizebox{\textwidth}{!}{%
    \begin{tabular}{rrrrrrrrrrrrr}
    \toprule
          &       & \multicolumn{3}{c}{Average} &       & \multicolumn{3}{c}{Synthetic Control} &       & \multicolumn{3}{c}{Synthetic Diff-in-Diff} \\
\cmidrule{3-5}\cmidrule{7-9}\cmidrule{11-13}    \multicolumn{1}{l}{Event day} & \multicolumn{1}{l}{Date} & \multicolumn{1}{l}{Conn.} & \multicolumn{1}{l}{Non-conn.} & \multicolumn{1}{l}{Difference} &       & \multicolumn{1}{l}{Conn.} & \multicolumn{1}{l}{Non-conn.} & \multicolumn{1}{l}{Difference} &       & \multicolumn{1}{l}{Conn.} & \multicolumn{1}{l}{Non-conn.} & \multicolumn{1}{l}{Difference} \\
    \midrule
    0     & 11/21/08 & 0.075 & 0.043 & 0.033 &       & 0.075 & 0.073 & 0.003 &       & 0.075 & 0.069 & 0.007 \\
    1     & 11/24/08 & 0.143 & 0.047 & 0.096*** &       & 0.143 & 0.106 & 0.037 &       & 0.143 & 0.074 & 0.069 \\
    2     & 11/25/08 & 0.057 & 0.014 & 0.043* &       & 0.057 & 0.059 & -0.002 &       & 0.057 & 0.023 & 0.034 \\
    3     & 11/26/08 & 0.112 & 0.042 & 0.071*** &       & 0.112 & 0.113 & 0.000 &       & 0.112 & 0.070 & 0.042 \\
    4     & 11/28/08 & 0.085 & 0.018 & 0.067*** &       & 0.085 & 0.077 & 0.008 &       & 0.085 & 0.031 & 0.054 \\
    5     & 12/1/08 & -0.144 & -0.076 & -0.067*** &       & -0.144 & -0.140 & -0.004 &       & -0.144 & -0.121 & -0.023 \\
    6     & 12/2/08 & 0.044 & 0.043 & 0.001 &       & 0.044 & 0.063 & -0.019 &       & 0.044 & 0.066 & -0.022 \\
    7     & 12/3/08 & 0.043 & 0.018 & 0.024 &       & 0.043 & 0.033 & 0.010 &       & 0.043 & 0.025 & 0.017 \\
    8     & 12/4/08 & 0.005 & -0.014 & 0.019 &       & 0.005 & -0.024 & 0.029 &       & 0.005 & -0.015 & 0.020 \\
    9     & 12/5/08 & 0.042 & 0.025 & 0.017 &       & 0.042 & 0.046 & -0.004 &       & 0.042 & 0.039 & 0.003 \\
    10    & 12/8/08 & 0.043 & 0.028 & 0.015 &       & 0.043 & 0.055 & -0.012 &       & 0.043 & 0.042 & 0.002 \\
    \bottomrule
    \end{tabular}%
    }
\end{table}%
\begin{table}[H]
  \centering
  \caption{Period-by-Period ATT to Geithner Announcement (New York connections)}
  \resizebox{\textwidth}{!}{
    \begin{tabular}{rrrrrrrrrrrrr}
    \toprule
          &       & \multicolumn{3}{c}{Average} &       & \multicolumn{3}{c}{Synthetic Control} &       & \multicolumn{3}{c}{Synthetic Diff-in-Diff} \\
\cmidrule{3-5}\cmidrule{7-9}\cmidrule{11-13}    \multicolumn{1}{l}{Event day} & \multicolumn{1}{l}{Date} & \multicolumn{1}{l}{Conn.} & \multicolumn{1}{l}{Non-conn.} & \multicolumn{1}{l}{Difference} &       & \multicolumn{1}{l}{Conn.} & \multicolumn{1}{l}{Non-conn.} & \multicolumn{1}{l}{Difference} &       & \multicolumn{1}{l}{Conn.} & \multicolumn{1}{l}{Non-conn.} & \multicolumn{1}{l}{Difference} \\
    \midrule
    0     & 11/21/08 & 0.085 & 0.040 & 0.044*** &       & 0.085 & 0.069 & 0.016* &       & 0.085 & 0.051 & 0.033 \\
    1     & 11/24/08 & 0.078 & 0.046 & 0.031*** &       & 0.078 & 0.082 & -0.004 &       & 0.078 & 0.058 & 0.020 \\
    2     & 11/25/08 & 0.032 & 0.014 & 0.018 &       & 0.032 & 0.011 & 0.021* &       & 0.032 & 0.016 & 0.016 \\
    3     & 11/26/08 & 0.087 & 0.040 & 0.048*** &       & 0.087 & 0.065 & 0.022 &       & 0.087 & 0.048 & 0.040 \\
    4     & 11/28/08 & 0.016 & 0.019 & -0.003 &       & 0.016 & 0.023 & -0.006 &       & 0.016 & 0.022 & -0.005 \\
    5     & 12/1/08 & -0.105 & -0.075 & -0.030*** &       & -0.105 & -0.106 & 0.001 &       & -0.105 & -0.093 & -0.012 \\
    6     & 12/2/08 & 0.090 & 0.040 & 0.050*** &       & 0.090 & 0.052 & 0.037*** &       & 0.090 & 0.050 & 0.039 \\
    7     & 12/3/08 & 0.031 & 0.018 & 0.013 &       & 0.031 & 0.025 & 0.005 &       & 0.031 & 0.021 & 0.009 \\
    8     & 12/4/08 & -0.020 & -0.013 & -0.008 &       & -0.020 & -0.031 & 0.010 &       & -0.020 & -0.014 & -0.006 \\
    9     & 12/5/08 & 0.050 & 0.024 & 0.026** &       & 0.050 & 0.046 & 0.004 &       & 0.050 & 0.029 & 0.021 \\
    10    & 12/8/08 & 0.050 & 0.027 & 0.023** &       & 0.050 & 0.055 & -0.006 &       & 0.050 & 0.031 & 0.018 \\
    \bottomrule
    \end{tabular}%
  }
\end{table}%

\subsection{Placebo Period ATT}
\begin{table}[H]
    \centering
    \caption{Placebo Period ATT to Geithner Announcement (Schedule connections)}
{
\def\sym#1{\ifmmode^{#1}\else\(^{#1}\)\fi}
\begin{tabular}{l*{4}{c}}
\hline\hline
            &\multicolumn{1}{c}{(1)}&\multicolumn{1}{c}{(2)}&\multicolumn{1}{c}{(3)}&\multicolumn{1}{c}{(4)}\\
            &\multicolumn{1}{c}{Average}&\multicolumn{1}{c}{DID}&\multicolumn{1}{c}{SC}&\multicolumn{1}{c}{SDID}\\
\hline
Treated     &   -0.006* &      -0.006** &      -0.004   &      -0.003   \\
            &   (0.004) &     (0.003)   &     (0.003)   &     (0.003)   \\
\hline
Observations&   16,350 &     139,520   &     139,520   &     139,520   \\
\hline\hline
\multicolumn{4}{l}{\footnotesize Standard errors in parentheses}\\
\multicolumn{4}{l}{\footnotesize * p<0.10, ** p<0.05, *** p<0.01}\\
\end{tabular}
}
\end{table}
\begin{table}[H]
    \centering
    \caption{Placebo Period ATT to Geithner Announcement (Personal connections)}
{
\def\sym#1{\ifmmode^{#1}\else\(^{#1}\)\fi}
\begin{tabular}{l*{4}{c}}
\hline\hline
            &\multicolumn{1}{c}{(1)}&\multicolumn{1}{c}{(2)}&\multicolumn{1}{c}{(3)}&\multicolumn{1}{c}{(4)}\\
            &\multicolumn{1}{c}{Average}&\multicolumn{1}{c}{DID}&\multicolumn{1}{c}{SC}&\multicolumn{1}{c}{SDID}\\
\hline
Treated     &   -0.007 &      -0.006** &       0.001   &      -0.002   \\
            &   (0.005) &     (0.002)   &     (0.003)   &     (0.003)   \\
\hline
Observations&   16,350 &     139,520   &     139,520   &     139,520   \\
\hline\hline
\multicolumn{4}{l}{\footnotesize Standard errors in parentheses}\\
\multicolumn{4}{l}{\footnotesize * p<0.10, ** p<0.05, *** p<0.01}\\
\end{tabular}
}
\end{table}
\begin{table}[H]
    \centering
    \caption{Placebo Period ATT to Geithner Announcement (New York connections)}
{
\def\sym#1{\ifmmode^{#1}\else\(^{#1}\)\fi}
\begin{tabular}{l*{4}{c}}
\hline\hline
            &\multicolumn{1}{c}{(1)}&\multicolumn{1}{c}{(2)}&\multicolumn{1}{c}{(3)}&\multicolumn{1}{c}{(4)}\\
            &\multicolumn{1}{c}{Average}&\multicolumn{1}{c}{DID}&\multicolumn{1}{c}{SC}&\multicolumn{1}{c}{SDID}\\
\hline
Treated     &    -0.003 &      -0.002   &      -0.000   &      -0.000   \\
            &   (0.002) &     (0.001)   &     (0.001)   &     (0.001)   \\
\hline
Observations&   16,350 &     139,520   &     139,520   &     139,520   \\
\hline\hline
\multicolumn{4}{l}{\footnotesize Standard errors in parentheses}\\
\multicolumn{4}{l}{\footnotesize * p<0.10, ** p<0.05, *** p<0.01}\\
\end{tabular}
}
\end{table}

\subsection{Placebo Period}
In this section, we test how synthetic methods perform in a placebo period before the event. The placebo period is day -30 to day -1, which is not used in estimation but also the event is not yet happening. If we assume that synthetic methods perform well in capturing the underlying factor structure and the factor loadings stay stable before the event, we would expect that the ATT in the placebo period is close to 0.

Figure \ref{fig:sc_att_sep} plots the average treatment effect of raw returns on the left and the average treatment effect of abnormal returns (relative to a CAPM model with beta estimated using daily returns from day -280 to -31). In Figure \ref{fig:sc_att_pool}, we plot all the ATT on one graph for better comparison.

We see that synthetic control does the best job in the placebo period, but also has the least treatment effect post-period. By comparing the treatment effect of raw returns using synthetic controls with the treatment effect of abnormal returns with a simple average, we see that they are relatively close, which suggests that synthetic control does a good job matching the underlying market beta exposure of treatment firms.

\begin{figure}[H]
    \centering
    \caption{Period-by-Period ATT in Placebo and Post Period (Schedule connections)}\label{fig:sc_att_sep}
    \begin{minipage}[b]{0.48\textwidth}
    \includegraphics[scale=0.53]{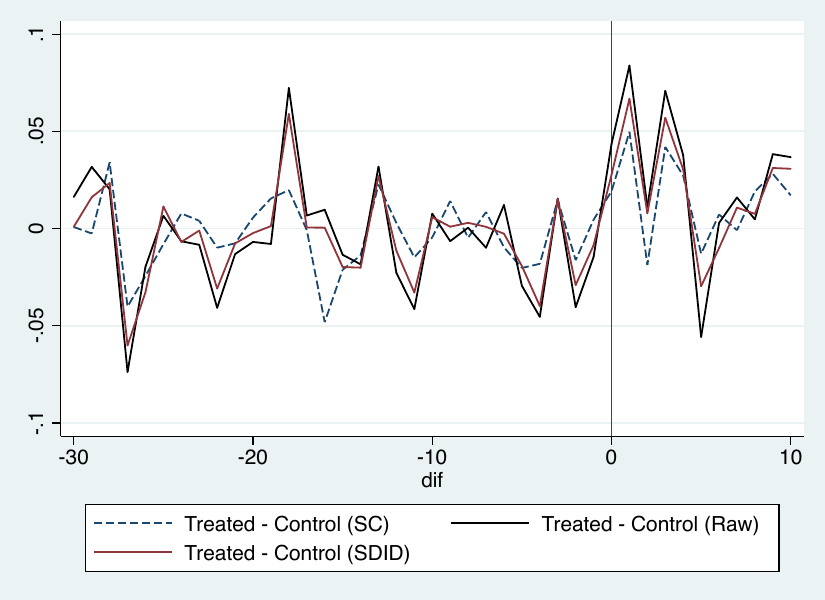}
    \end{minipage}
    \quad
    \begin{minipage}[b]{0.48\textwidth}
    \includegraphics[scale=0.53]{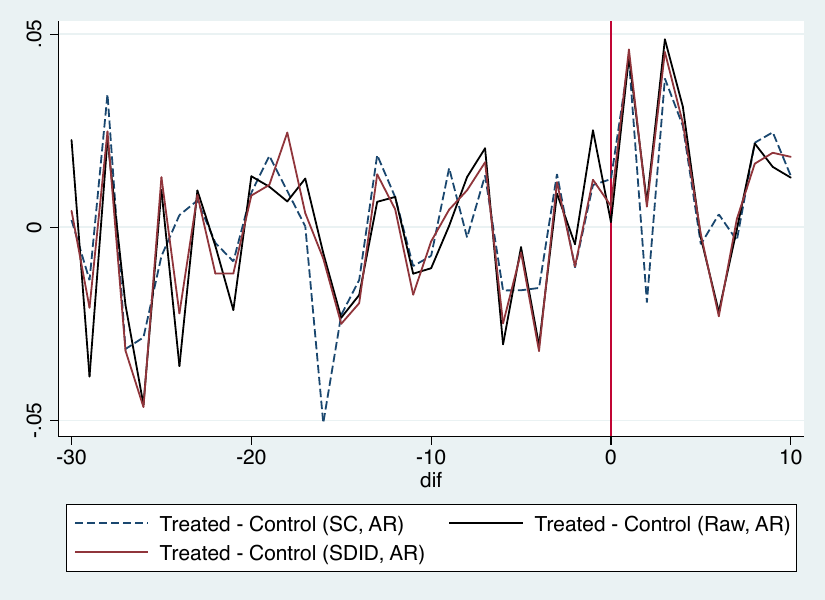}
    \end{minipage}
\end{figure}
\begin{figure}[H]
    \centering
    \caption{Period-by-Period ATT in Placebo and Post Period (Schedule connections)}\label{fig:sc_att_pool}
    \includegraphics{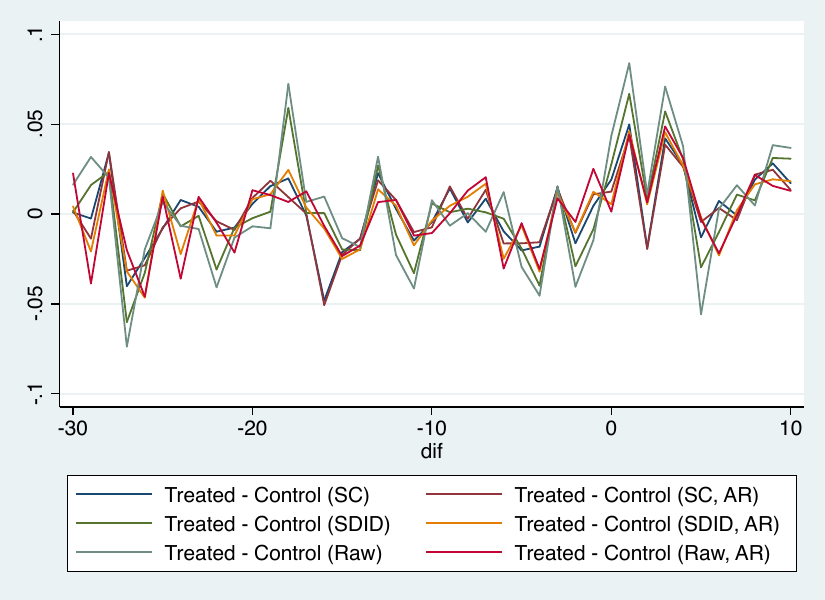}
\end{figure}

\subsection{Pre- versus Post-Event Beta and Weights}
In this section, we investigate how control beta is compared to treatment beta with different control weights. We also compare beta estimated pre-event with beta estimated post-event to see if the event also has a treatment effect on beta loadings. The pre-event beta is estimated over daily returns from day -280 to day -31, and the post-event beta is estimated over daily returns from day 31 to day 65. We exclude the immediate post-period because the returns can be confounded by the event effect. We also compare the synthetic weights estimated with pre- and post-period by comparing the treatment effect with pre- and post-weights.

First, we see that indeed synthetic control weights match control beta to treatment beta the best, compared to a simple average and synthetic diff-in-diff weights. For the pre-event, we see a control beta of 1.33 with synthetic control, compared to a treatment beta of 1.43. For the post-event, we have a control beta of 1.71, which is very close to a treatment beta of 1.73. The same conclusion can be drawn with Fama-French three-factor betas. Synthetic control weights give the closest control betas to treatment betas for market, size, and value factors.

Second, we see that post-betas are on average higher than pre-betas, suggesting that the event does have an effect on the underlying factor loadings of treatment firms. CAPM market beta increases from 1.43 to 1.73, a 21\% increase. In the three-factor model, we see the largest increase in size and value betas. Size beta increases from 0.23 to 0.41 (78\%), and value beta increases from 0.61 to 1.00 (64\%).

Third, Figure \ref{fig:sc_att_prepost} show the daily ATT with synthetic control weights for the placebo period (day -30 to -1), post-event period (day 0 to 30), and post-event-estimation period (day 31 to 65). We see that using post-event synthetic control weights gives us a larger event treatment effect, but it also gives a more positive ATT in the placebo period.

\begin{table}[H]
  \centering
  \caption{Pre-/Post-Event Market Beta from CAPM}
    \begin{tabular}{lrr}
    \toprule
    \multicolumn{3}{l}{Panel A: Pre Beta, Pre Weights} \\
    \midrule
          & \multicolumn{2}{c}{Market} \\
\cmidrule{2-3}          & \multicolumn{1}{c}{Treated} & \multicolumn{1}{c}{Control} \\
    \midrule
    Average & 1.4269 & 0.8251 \\
    SDID  & 1.4269 & 1.1111 \\
    SC    & 1.4269 & 1.3309 \\
    \midrule
    \multicolumn{3}{l}{Panel B: Post Beta, Post Weights} \\
    \midrule
          & \multicolumn{2}{c}{Market} \\
\cmidrule{2-3}          & \multicolumn{1}{c}{Treated} & \multicolumn{1}{c}{Control} \\
    \midrule
    Average & 1.7304 & 0.9377 \\
    SDID  & 1.7304 & 1.4076 \\
    SC    & 1.7304 & 1.7083 \\
    \midrule
    Panel C: Pre Beta, Post Weights &       &  \\
    \midrule
          & \multicolumn{2}{c}{Market} \\
\cmidrule{2-3}          & \multicolumn{1}{c}{Treated} & \multicolumn{1}{c}{Control} \\
    \midrule
    Average & 1.4269 & 0.8251 \\
    SDID  & 1.4269 & 1.0954 \\
    SC    & 1.4269 & 1.0751 \\
    \midrule
    Panel D: Post Beta, Pre Weights &       &  \\
    \midrule
          & \multicolumn{2}{c}{Market} \\
\cmidrule{2-3}          & \multicolumn{1}{c}{Treated} & \multicolumn{1}{c}{Control} \\
    \midrule
    Average & 1.7304 & 0.9377 \\
    SDID  & 1.7304 & 1.2105 \\
    SC    & 1.7304 & 1.3664 \\
    \bottomrule
    \end{tabular}%
\end{table}%

\begin{table}[H]
  \centering
  \caption{Pre-/Post-Event Beta from Fama-French Three Factors}
    \begin{tabular}{lrrrrrr}
    \toprule
    \multicolumn{7}{l}{Panel A: Pre Beta, Pre Weights} \\
    \midrule
          & \multicolumn{2}{c}{Market} & \multicolumn{2}{c}{SMB} & \multicolumn{2}{c}{HML} \\
\cmidrule{2-7}          & \multicolumn{1}{l}{Treated} & \multicolumn{1}{l}{Control} & \multicolumn{1}{l}{Treated} & \multicolumn{1}{l}{Control} & \multicolumn{1}{l}{Treated} & \multicolumn{1}{l}{Control} \\
    \midrule
    Original & 1.2748 & 0.6592 & 0.2330 & 0.7484 & 0.6068 & 0.7196 \\
    SDID  & 1.2748 & 0.9051 & 0.2330 & 0.8187 & 0.6068 & 0.8724 \\
    SC    & 1.2748 & 1.1477 & 0.2330 & 0.4796 & 0.6068 & 0.7495 \\
    \midrule
    \multicolumn{7}{l}{Panel B: Post Beta, Post Weights} \\
    \midrule
          & \multicolumn{2}{c}{Market} & \multicolumn{2}{c}{SMB} & \multicolumn{2}{c}{HML} \\
\cmidrule{2-7}          & \multicolumn{1}{l}{Treated} & \multicolumn{1}{l}{Control} & \multicolumn{1}{l}{Treated} & \multicolumn{1}{l}{Control} & \multicolumn{1}{l}{Treated} & \multicolumn{1}{l}{Control} \\
    \midrule
    Original & 1.2454 & 0.6265 & 0.4139 & 0.5633 & 0.9991 & 0.6898 \\
    SDID  & 1.2454 & 0.9544 & 0.4139 & 0.6791 & 0.9991 & 0.9785 \\
    SC    & 1.2454 & 1.2130 & 0.4139 & 0.4697 & 0.9991 & 1.0273 \\
    \bottomrule
    \end{tabular}%
\end{table}%
\begin{figure}[H]
    \centering
    \caption{Period-by-Period ATT with Pre \& Post SC Weights}\label{fig:sc_att_prepost}
    \includegraphics{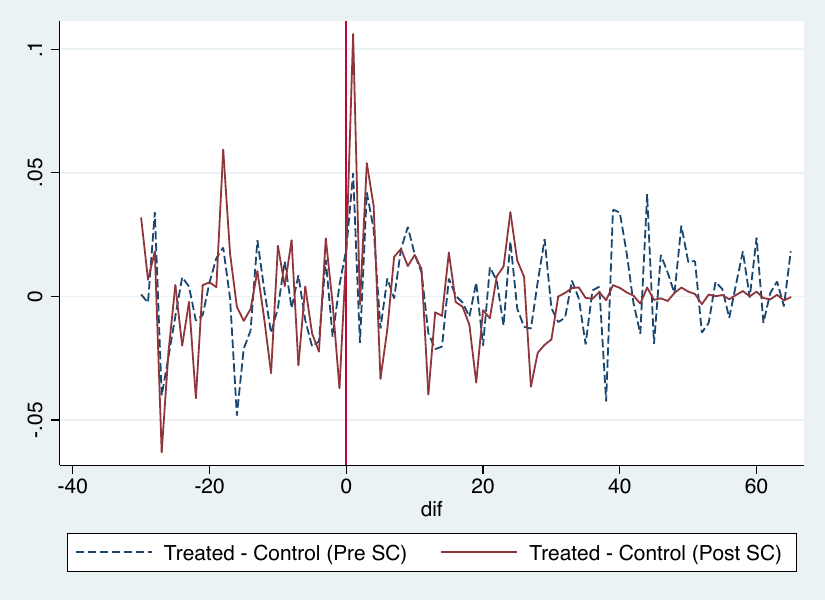}
\end{figure}

\subsection{Beta: All Public Firms as Control}
\begin{table}[H]
  \centering
  \caption{Pre-Event Market Beta from CAPM}
    \begin{tabular}{lrr}
    \toprule
    \multicolumn{3}{l}{Panel A: Pre Beta, Pre Weights} \\
    \midrule
          & \multicolumn{2}{c}{Market} \\
\cmidrule{2-3}          & \multicolumn{1}{c}{Treated} & \multicolumn{1}{c}{Control} \\
    \midrule
    Average & 1.4269 & 0.8324 \\
    SDID  & 1.4269 & 1.2814 \\
    SC    & 1.4269 & 1.3830 \\
    \bottomrule
    \end{tabular}%
  \label{tab:geithner_beta_all}%
\end{table}%

\begin{table}[H]
  \centering
  \caption{Pre-Event Beta from Fama-French Three Factors}
    \begin{tabular}{lrrrrrr}
    \toprule
    \multicolumn{7}{l}{Panel A: Pre Beta, Pre Weights} \\
    \midrule
          & \multicolumn{2}{c}{Market} & \multicolumn{2}{c}{SMB} & \multicolumn{2}{c}{HML} \\
\cmidrule{2-7}          & \multicolumn{1}{l}{Treated} & \multicolumn{1}{l}{Control} & \multicolumn{1}{l}{Treated} & \multicolumn{1}{l}{Control} & \multicolumn{1}{l}{Treated} & \multicolumn{1}{l}{Control} \\
    \midrule
    Original & 1.2748 & 0.8569 & 0.2330 & 0.5526 & 0.6068 & 0.1436 \\
    SDID  & 1.2748 & 1.1654 & 0.2330 & 0.6273 & 0.6068 & 0.5934 \\
    SC    & 1.2748 & 1.2201 & 0.2330 & 0.3774 & 0.6068 & 0.6743 \\
    \bottomrule
    \end{tabular}%
\end{table}%

\subsection{Placebo Period ATT: All Public Firms as Control}

\begin{table}[H]
    \centering
    \caption{Placebo Period ATT to Geithner Announcement (Schedule connections)}
{
\def\sym#1{\ifmmode^{#1}\else\(^{#1}\)\fi}
\begin{tabular}{l*{4}{c}}
\hline\hline
            &\multicolumn{1}{c}{(1)}&\multicolumn{1}{c}{(2)}&\multicolumn{1}{c}{(3)}&\multicolumn{1}{c}{(4)}\\
            &\multicolumn{1}{c}{Average}&\multicolumn{1}{c}{DID}&\multicolumn{1}{c}{SC}&\multicolumn{1}{c}{SDID}\\
\hline
Treated    & -0.003 &      -0.003   &      -0.004   &      -0.002   \\
           & (0.004) &     (0.002)   &     (0.003)   &     (0.002)   \\
\hline
Observations & 122,850 &   1,044,225   &   1,044,225   &   1,044,225   \\
\hline\hline
\multicolumn{4}{l}{\footnotesize Standard errors in parentheses}\\
\multicolumn{4}{l}{\footnotesize * p<0.10, ** p<0.05, *** p<0.01}\\
\end{tabular}
}
\end{table}
\begin{table}[H]
    \centering
    \caption{Placebo Period ATT to Geithner Announcement (Personal connections)}
{
\def\sym#1{\ifmmode^{#1}\else\(^{#1}\)\fi}
\begin{tabular}{l*{4}{c}}
\hline\hline
            &\multicolumn{1}{c}{(1)}&\multicolumn{1}{c}{(2)}&\multicolumn{1}{c}{(3)}&\multicolumn{1}{c}{(4)}\\
            &\multicolumn{1}{c}{Average}&\multicolumn{1}{c}{DID}&\multicolumn{1}{c}{SC}&\multicolumn{1}{c}{SDID}\\
\hline
Treated     & -0.004 &      -0.003   &      -0.001   &      -0.002   \\
            & (0.006) &     (0.004)   &     (0.003)   &     (0.003)   \\
\hline
Observations& 122,850&   1,044,225   &   1,044,225   &   1,044,225   \\
\hline\hline
\multicolumn{4}{l}{\footnotesize Standard errors in parentheses}\\
\multicolumn{4}{l}{\footnotesize * p<0.10, ** p<0.05, *** p<0.01}\\
\end{tabular}
}
\end{table}
\begin{table}[H]
    \centering
    \caption{Placebo Period ATT to Geithner Announcement (New York connections)}
{
\def\sym#1{\ifmmode^{#1}\else\(^{#1}\)\fi}
\begin{tabular}{l*{4}{c}}
\hline\hline
            &\multicolumn{1}{c}{(1)}&\multicolumn{1}{c}{(2)}&\multicolumn{1}{c}{(3)}&\multicolumn{1}{c}{(4)}\\
            &\multicolumn{1}{c}{Average}&\multicolumn{1}{c}{DID}&\multicolumn{1}{c}{SC}&\multicolumn{1}{c}{SDID}\\
\hline
Treated     & -0.000 &       0.000   &      -0.002   &       0.001   \\
            & (0.003) &     (0.002)   &     (0.002)   &     (0.002)   \\
\hline
Observations& 122,850 &   1,044,225   &   1,044,225   &   1,044,225   \\
\hline\hline
\multicolumn{4}{l}{\footnotesize Standard errors in parentheses}\\
\multicolumn{4}{l}{\footnotesize * p<0.10, ** p<0.05, *** p<0.01}\\
\end{tabular}
}
\end{table}

\clearpage

\section{Additional Results for Index Inclusion}
\label{appsec:index}
This section presents addition results for the S\&P index inclusion empirical example.

\clearpage

\begin{figure}[thb]
    \caption{\small\textbf{Cumulative Distributions of Factor Returns by Announcement Status}
    This figure plots the daily returns of the S\&P 500 index and Small-minus-Big (SMB) factor on the dates when there are index inclusion announcements versus the dates without. The blue line plots the overall cumulative distribution function from 1962 to 2023, and the red lines plot the cumulative distribution function of daily returns on the days when there is an index inclusion event. }
\label{fig:factor_balance_index}
    \begin{subfigure}[t]{\textwidth}
        \centering
        \caption{Panel A: S\&P 500 Daily Returns}
        \includegraphics[width=\textwidth]{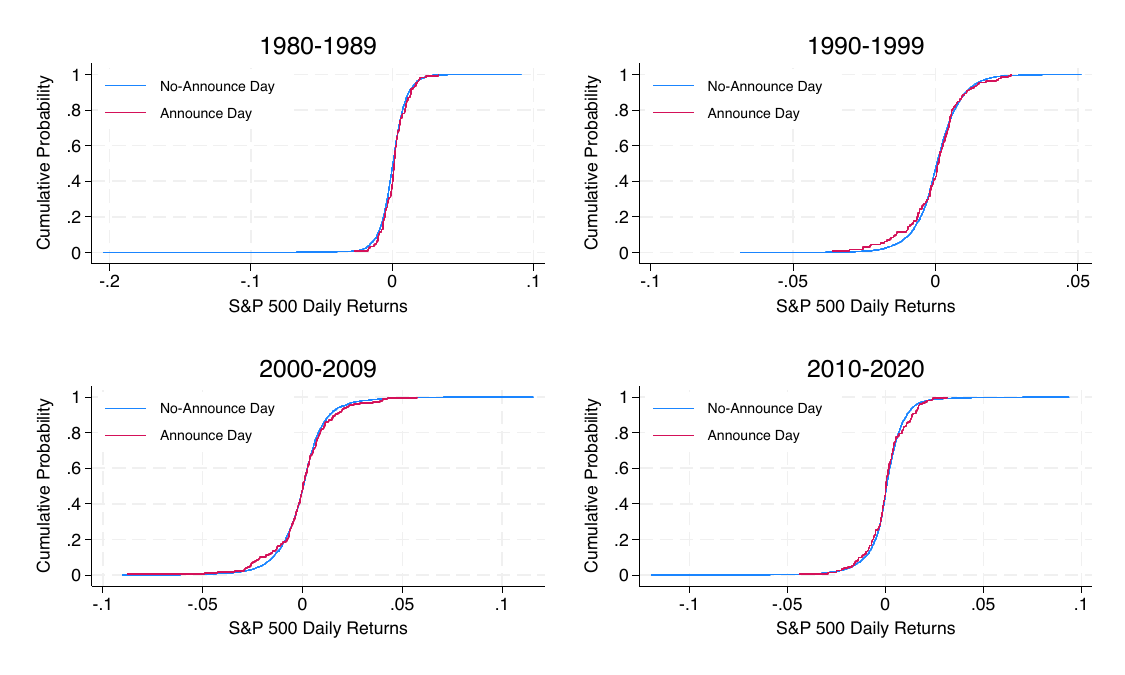}
    \end{subfigure}
        \vspace{1cm}
    \begin{subfigure}[t]{\textwidth}
        \centering
        \caption{Panel B: SMB Factor Daily Returns}
        \includegraphics[width=\textwidth]{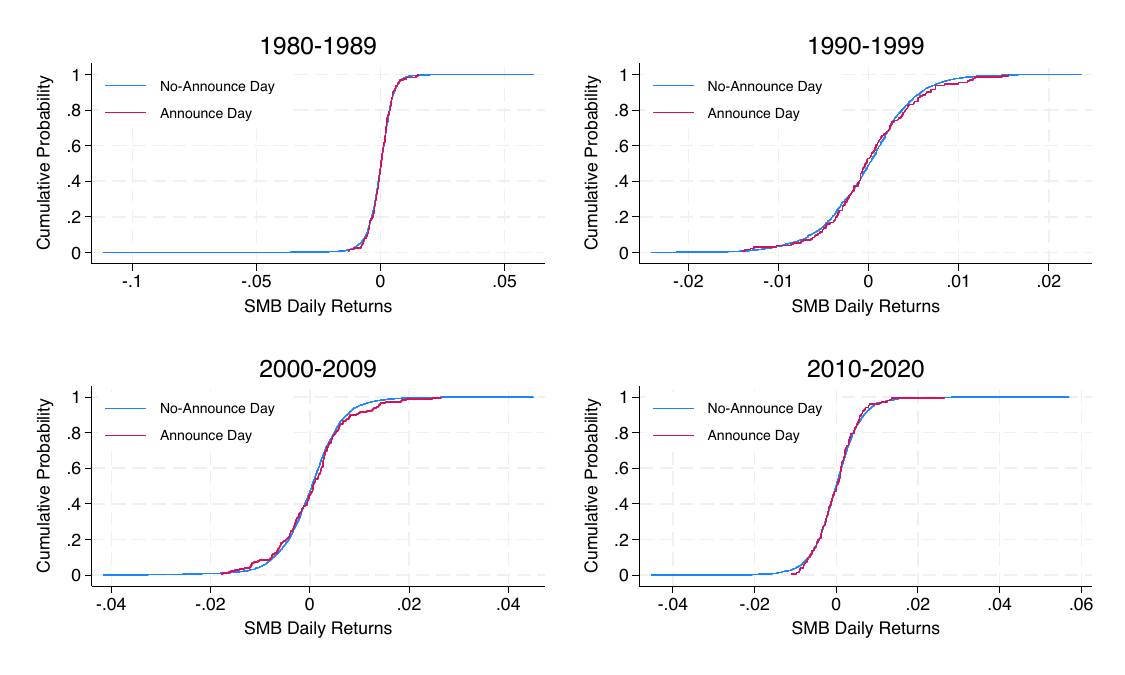}
    \end{subfigure}
\end{figure}

\clearpage

\begin{figure}[H]
    \caption{\small\textbf{Pre-addition Cumulative Market Factor Returns (Inclusion vs. Randomized No-Inclusion Days)}\label{fig:preindexcumulret}
    This figure plots the average cumulative returns on the market and the SMB factor following index inclusion announcements in event time, averaged across inclusions for each decade. We also plot the average cumulative returns on the market following randomized no-inclusion days. For each inclusion date, we pick a random date on no-inclusion dates. The returns are normalized to start at zero, 100-trading days before the announcement.}
    \begin{subfigure}[t]{0.5\textwidth}
            \caption{Panel A: S\&P 500 Cumulative Returns}
    \includegraphics[width=\linewidth]{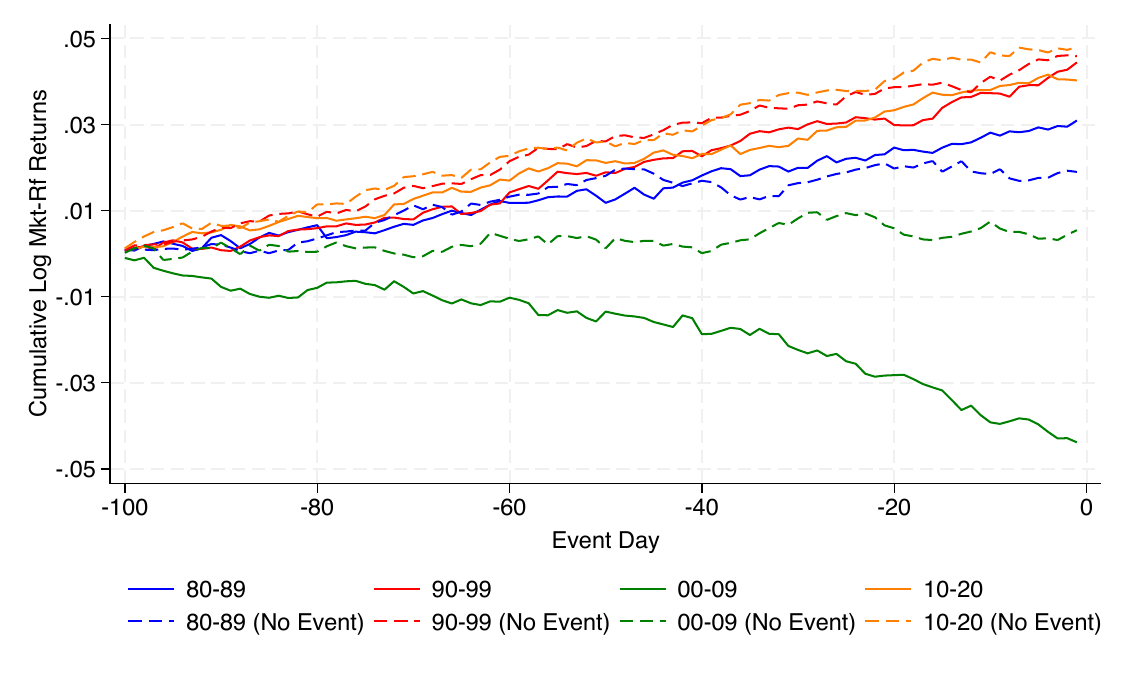}
    \end{subfigure}
        \begin{subfigure}[t]{0.5\textwidth}            
        \caption{Panel B: SMB Cumulative Returns}
    \includegraphics[width=\linewidth]{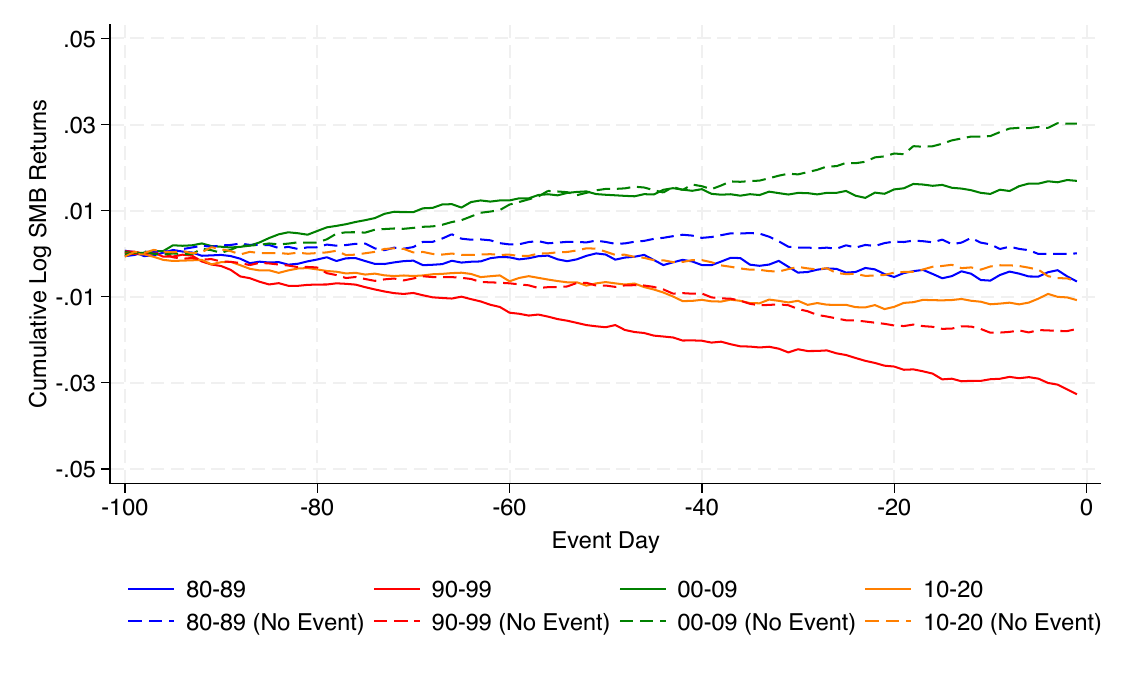}
    \end{subfigure}
%
\end{figure}

\clearpage

\begin{figure}[thb]
    \centering
    \caption{Cumulative pre-addition market-adjusted returns (Treated vs. propensity score matched)}
    \includegraphics[width=\linewidth]{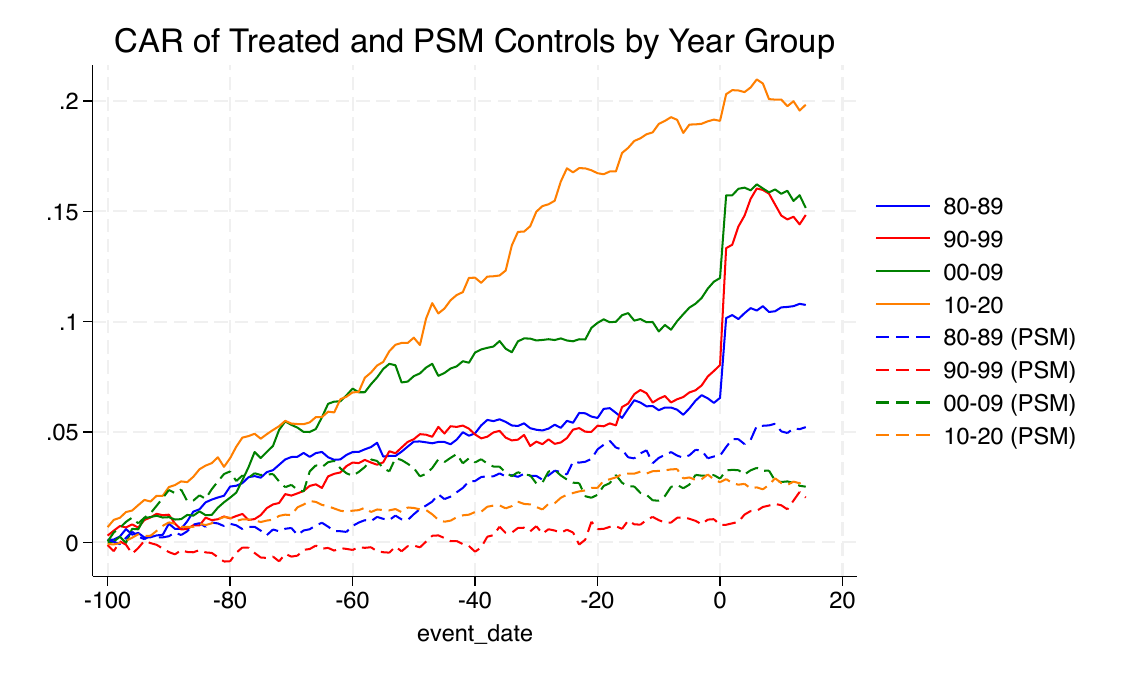}
\end{figure}

\clearpage
\begin{figure}[thb]
    \centering
    \caption{Cumulative pre-addition market-adjusted returns (Treated vs. synthetic method)}
    \includegraphics[width=\linewidth]{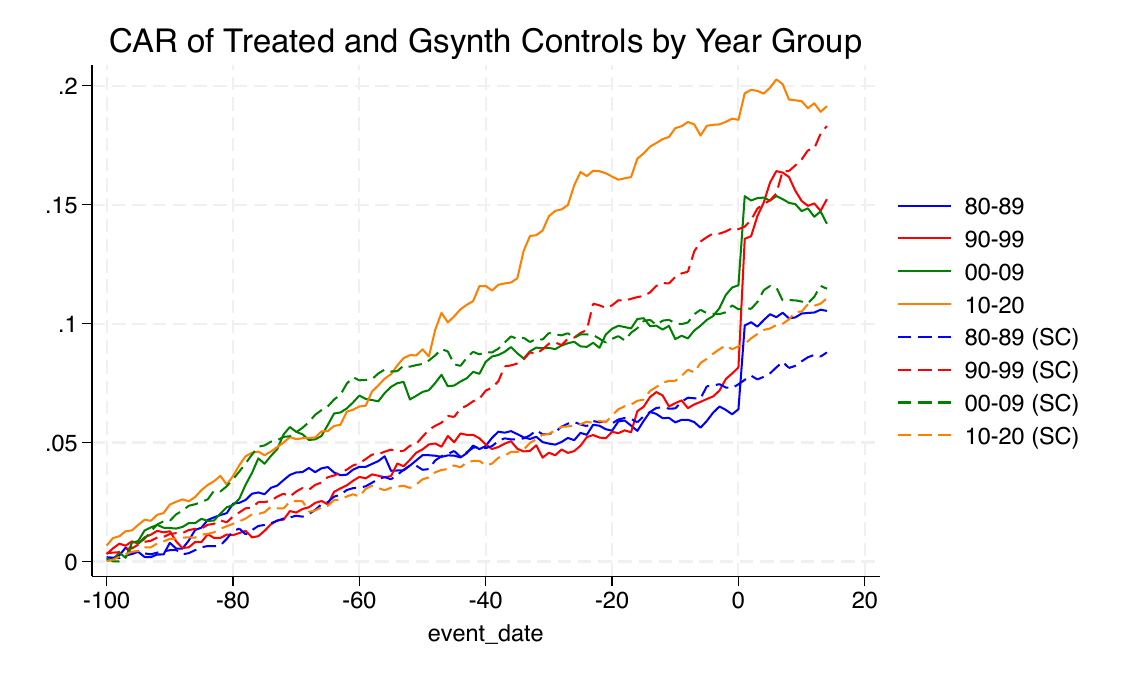}
\end{figure}

\clearpage

\section{Additional results on empirical example 3, merger announcments }

This section presents additional exhibits and results for our first mergers announcement example.

\clearpage

\begin{figure}[H]
\label{fig:ma_cdf_factor}
    \caption{\small\textbf{Cumulative Distributions of Factor Returns by Announcement Status}
    This figure plots the daily returns of the CRSP value-weighted index on the dates when there are merger announcements versus the dates without. The blue line plots the overall cumulative distribution function from 1962 to 2023, and the red lines plot the cumulative distribution function of daily returns on the days when there is a merger annnouncment event. }
        \centering
        \includegraphics[width=0.8\textwidth]{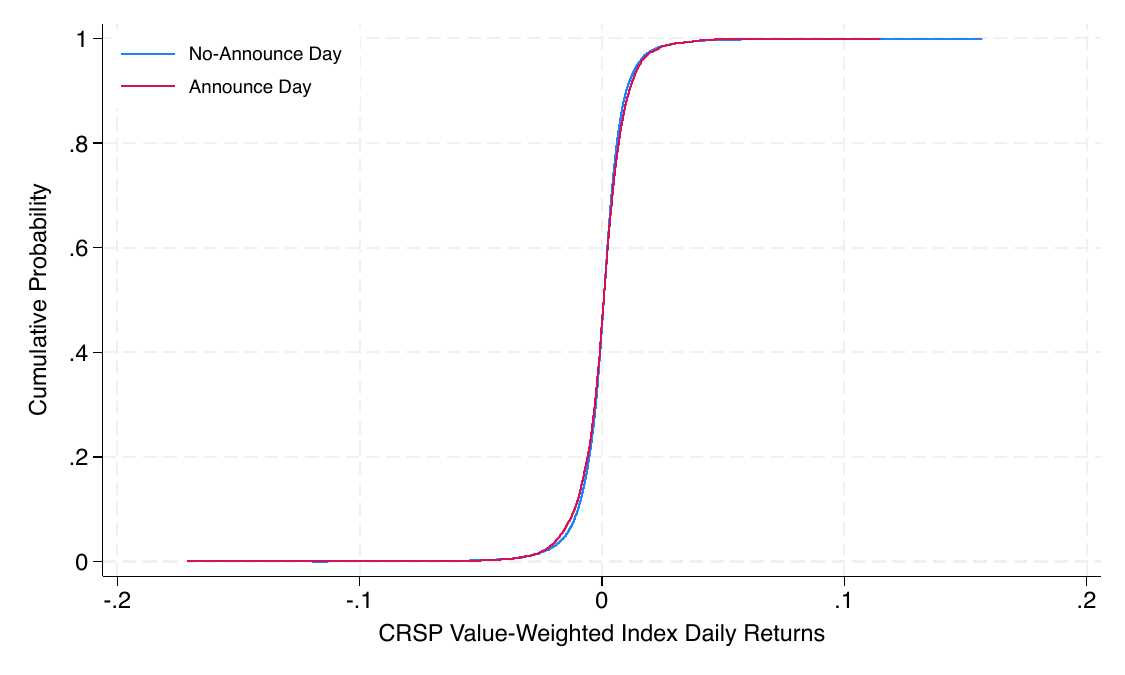}
\end{figure}

\clearpage

\begin{table}[htbp]
  
  \caption{\small \textbf{Beta Distributions of Acquirers in Merger} 
    This table presents the average CAPM and Fama-French three-factor betas for acquirers in merger transactions. We estimate firm-level betas using daily stock returns before and after the announcement. For pre betas, we use event date -280 to -30, while for post beta, we use event date 30-280 to estimate betas. We provide the mean and median of CAPM market beta and betas in Fama-French three-factor model. We test if the pre and post betas are statistically different using a two-sided t-test.} \label{tab:beta_ma_sl_m}  
    \centering
    \begin{tabular}{lccccc}
    \toprule
          & \multicolumn{2}{c}{Pre} & \multicolumn{2}{c}{Post} & Mean t-test \\
          & Mean  & Median & Mean  & Median & Pre - Post \\
          \midrule
    CAPM Beta & 0.952 & 0.909 & 0.968 & 0.926 & -0.016*** \\
    FF3F Mkt Beta & 1.020 & 0.993 & 1.022 & 1.001 & -0.002 \\
    FF3F SMB Beta & 0.690 & 0.620 & 0.678 & 0.613 & 0.012* \\
    FF3F HML Beta & 0.101 & 0.141 & 0.132 & 0.179 & -0.031*** \\
    \bottomrule
    \end{tabular}%
  \label{tab:addlabel}%
\end{table}%

\clearpage

\section{Additional results on empirical example 4, close mergers with winners and losers}

This section presents additional exhibits and results for our second mergers announcement example.

\subsubsection{Estimation Window and Model Stability}
We estimate different counterfactual models with different estimation window lengths. We vary the estimation window from 35 months as default to 12 month. We again leave the period $t=0$ as the placebo period. We plot the average treatment effects in the estimation window, placebo period, treatment period, and post treatment windows, respectively.

\begin{figure}[H]
    \caption{\small\textbf{Average Treatment Effects by Estimation Window Length}\label{fig:close_contest_att_prelength}
    This figure plots the average treatment effects of winners in merger contests in the estimation, placebo, treatment, and post-treatment windows, by the length of the estimation window. We estimate betas and train synthetic control and gsynth models using pre-announcement periods from 12 months to 35 months. We then compute the counterfactual returns from different models with the design-based loser portfolio. }
    \begin{subfigure}[t]{0.5\textwidth}
            \caption{Panel A: Treatment Period $t=1$}
    \includegraphics[width=\linewidth]{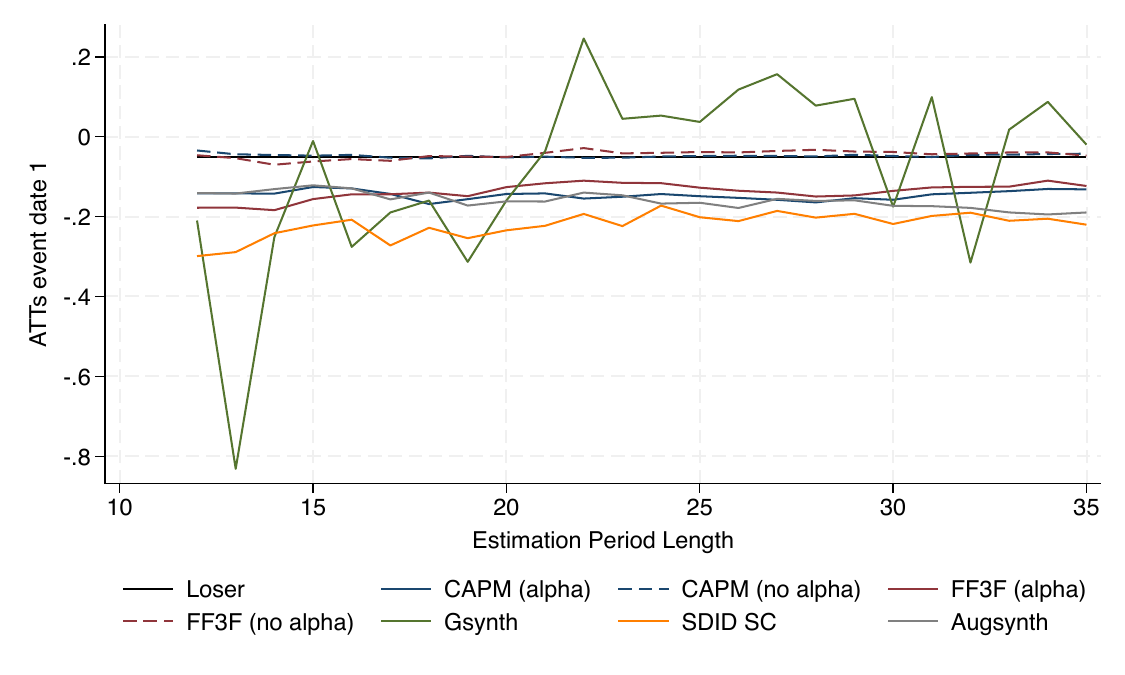}
    \end{subfigure}
        \begin{subfigure}[t]{0.5\textwidth}            
        \caption{Panel B: Post-Treatment Periods $t>1$}
    \includegraphics[width=\linewidth]{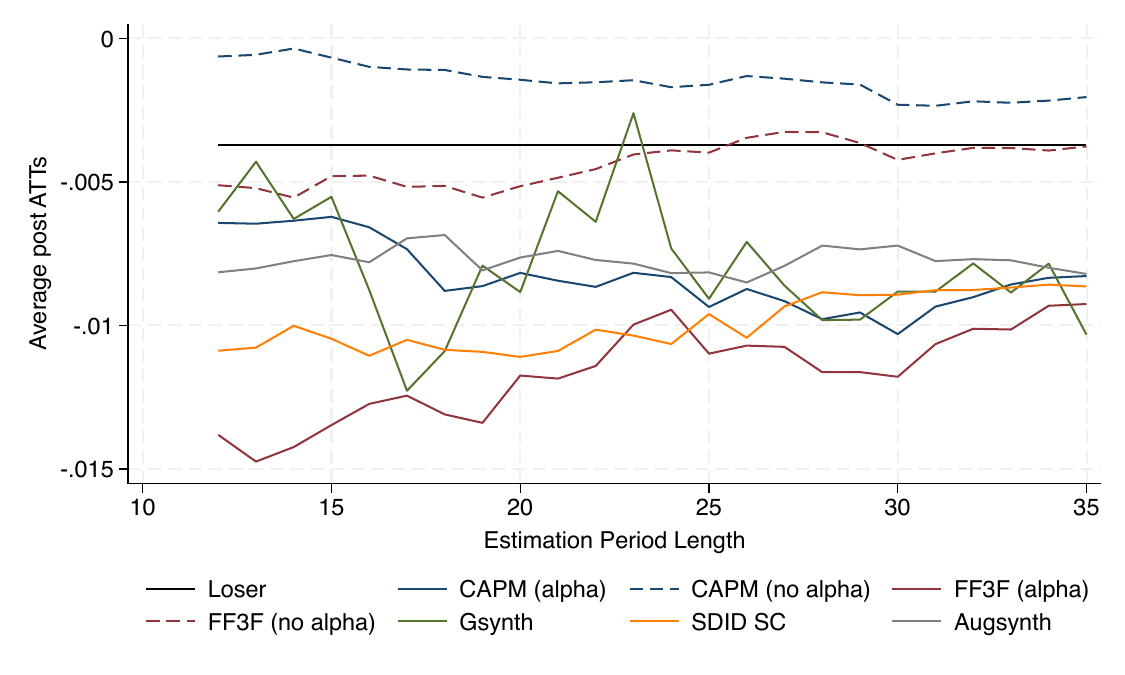}
    \end{subfigure}
\begin{subfigure}[t]{0.5\textwidth}
            \caption{Panel A: Placebo Period $t=0$}
    \includegraphics[width=\linewidth]{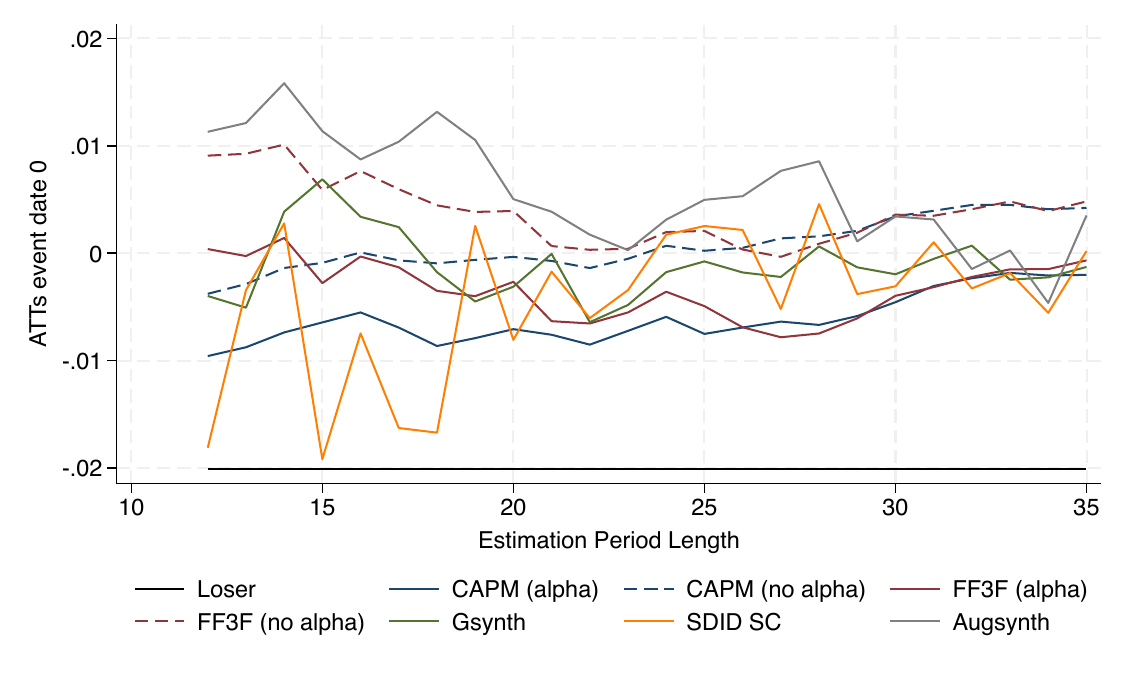}
    \end{subfigure}
        \begin{subfigure}[t]{0.5\textwidth}            
        \caption{Panel B: Estimation Periods $t<0$}
    \includegraphics[width=\linewidth]{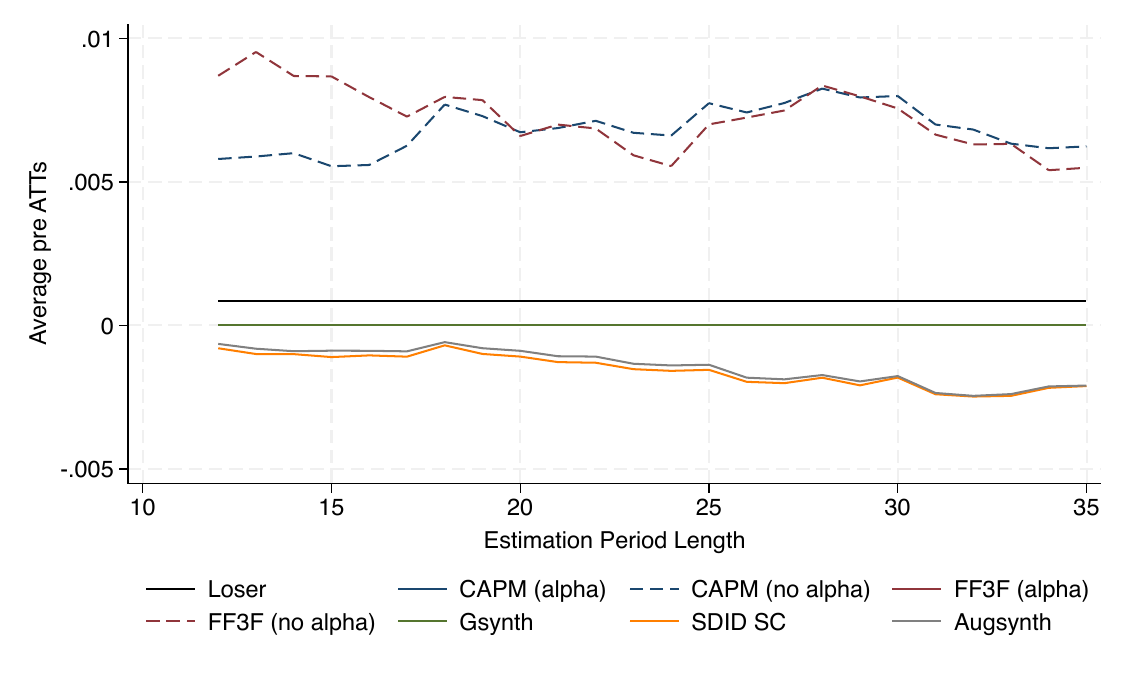}
    \end{subfigure}
\end{figure}

\end{document}